\documentclass{llncs}
\pdfoutput=1
\usepackage{wrapfig}
\usepackage{latexsym}
\usepackage{times}
\usepackage{graphicx}
\usepackage{amsmath,amssymb,amsfonts}
\usepackage{color}
\usepackage{textcomp}
\usepackage{tabularx}
\usepackage{times}

\pagestyle{plain} 

\graphicspath{{figures/}}
\newcommand{\remove}[1]{}

\begin{document}

\title{Straight-line Drawability of a \\ Planar Graph Plus an Edge\thanks
{This research began at the {\em Blue Mountains Workshop on
Geometric Graph Theory}, August, 2010, in Australia,
and supported by the University of Sydney IPDF funding and the ARC (Australian Research Council).
Hong is supported by ARC Future Fellowship.
Liotta is also supported  by the Italian
Ministry of Education, University, and Research (MIUR) under PRIN
2012C4E3KT AMANDA.}}

\author{
P. Eades\inst{1} \and S.H. Hong\inst{1} \and G. Liotta\inst{2} \and
N. Katoh\inst{3} \and S.H. Poon\inst{4}}

\institute{
University of Sydney, Australia
\email{\{peter.eades,seokhee.hong\}@sydney.edu.au}
\ \\
\and
University of Perugia, Italy
\email{liotta@diei.unipg.it}
\ \\
\and
Kyoto University, Japan
\email{naoki@archi.kyoto-u.ac.jp}
\ \\
\and
National Tsing Hua University, Taiwan
\email{spoon@cs.nthu.edu.tw}
}

\date{}
\maketitle


\begin{abstract}
We investigate straight-line drawings of topological
graphs that consist of a planar graph plus one edge, also called
almost-planar graphs. We present a characterization of such graphs
that admit a straight-line drawing. The characterization enables a
linear-time testing algorithm to determine whether an almost-planar graph admits
a straight-line drawing, and a linear-time drawing algorithm that
constructs such a drawing, if it exists. We also show that some
almost-planar graphs require exponential area for a straight-line
drawing.
\end{abstract}

\section{Introduction}
\label{se:introduction}

This paper investigates straight-line drawings of \emph{almost-planar} graphs, that is,
graphs that become planar after the deletion of just one edge.

Our work is partly motivated by the classical \emph{planarization} approach~\cite{DETT} to graph drawing.
This method takes
as input a graph $G$, deletes a small number of edges to give a
planar subgraph $G^-$, and then constructs a planar topological embedding
(i.e., a plane graph) of $G^-$.
Then the deleted edges are re-inserted, one by one, to
give a topological embedding of the original graph $G$. Finally, a
drawing algorithm is applied to the topological embedding. A number
of variations on this basic approach give a number of graph drawing
algorithms (see, e.g.,~\cite{DETT}). This paper is concerned with the final
step of creating a drawing from the topological embedding.

Minimizing the number of edge crossings is
an NP-hard problem even when the given graph is almost-planar~\cite{CM2}.
However, Gutwenger, Mutzel, and Weiskircher~\cite{GMW} present an elegant
polynomial-time solution to the following simpler problem: Given a
graph $G$ and an edge $e$ such that $G^- = G - e$ is planar, find a
planar topological embedding of $G^-$ that minimizes the number of
edge crossings when re-inserting $e$ in $G$.

While the output of the algorithm of Gutwenger \emph{et al.} has the
minimum number of edge crossings, it may not give rise to a
straight-line planar drawing. In this paper we study the following
problem: Let $G$ be a topological graph consisting of a planar graph
plus an edge $e$. We want to test whether $G$ admits a straight-line
drawing that preserves the given embedding.

\begin{figure}[t!]
  \centering
  \includegraphics[width=.7\columnwidth]{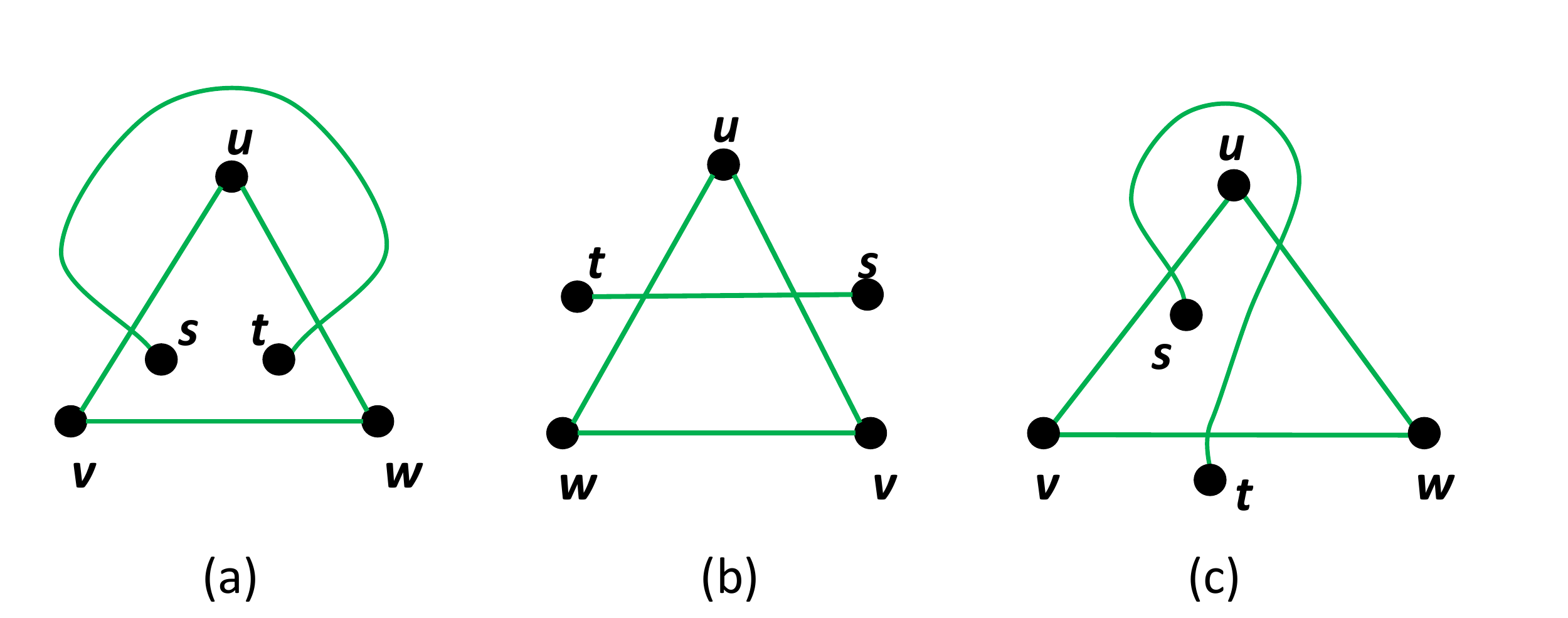}
  \caption{
  (a) An almost-planar topological graph $G$; (b) a straight-line drawing of $G$ that preserves its embedding on the sphere but not on the plane; (c) An almost-planar topological graph for which an embedding preserving straight-line drawing does not exist.}
  \label{fi:intro}
\end{figure}

It is important to remark that by ``preserving the embedding'' we
mean that the straight-line drawing must preserve the cyclic order
of the edges around each vertex and around each crossing. In other
words, we want to preserve a given embedding on the {\em sphere}.
Note that the problem is different if, in addition to preserving the
cyclic order of the edges around the vertices and the crossings, we
also want the preservation of a given external boundary; in other
words the problem is different if we want to maintain a given
embedding on the {\em plane} instead of on the sphere. For example,
consider the graph of Fig.~\ref{fi:intro}(a). If we regard this as
a topological graph on the sphere, then it has an embedding
preserving straight-line drawing, as shown in
Fig.~\ref{fi:intro}(b). However, the drawing in
Fig.~\ref{fi:intro}(a) has a different external face to
Fig.~\ref{fi:intro}(b). It is easy to show that there is no
straight-line drawing with the same external face as in
Fig.~\ref{fi:intro}(a). For a contrast, Fig.~\ref{fi:intro}(c) shows
a topological graph $G$ that does not have a straight-line drawing
that preserves the embedding on the sphere.

In this paper we mostly focus on spherical topologies but, as a
byproduct, we obtain a result for topologies on the plane that may
be of independent interest. Namely, the main results of this paper
are as follows.

\begin{itemize}

\item We characterize those almost-planar topological graphs
that admit a straight-line drawing that preserves
a given embedding on the sphere. The characterization
gives rise to a linear-time testing algorithm.

\item We characterize those almost-planar topological graphs
that admit a straight-line drawing that preserves
a given embedding on the plane.

\item We present a drawing algorithm that
constructs straight-line drawings when such drawings exist. This
drawing algorithm runs in linear time; however, the model of
computation used is the real RAM, and the drawings that are produced
have exponentially bad resolution. We show that, in the worst case,
the exponentially bad resolution is inevitable.

\end{itemize}

Our results also contribute to the rapidly increasing literature
about topological graphs that are ``nearly'' plane, in some sense.
An interesting example is the class of \emph{1-plane graphs}, that
is, topological graphs with at most one crossing per edge.
Thomassen~\cite{Thomassen2} gives a ``F\'{a}ry-type theorem'' for
1-plane graphs, that is, a characterization of 1-plane topological
graphs that admit a straight-line drawing. Hong et al.~\cite{HELP}
present a linear-time algorithm  that constructs a straight-line
1-planar drawing of 1-plane graph, if it exists. More generally,
Nagamochi~\cite{Nagamochi} investigates straight-line drawability of
a wide class of topological non-planar topological graphs. He
presentes F\'{a}ry-type theorems as well as polynomial-time testing
and drawing algorithms. This paper considers graphs that are
``nearly plane'' in the sense that deletion of a single edge yields
a planar graph. Such graphs are variously called ``1-skew graphs''
or ``almost-planar'' graphs in the literature. Our characterization
can be regarded as a F\'{a}ry-type theorem for almost-planar graphs.

The rest of the paper is organized as follows.
Section~\ref{se:preliminaries} gives notation and terminology. The
characterization of almost-planar topological graphs on the sphere
that admit an embedding preserving straight-line drawing is given in
Section~\ref{se:s2}. The extension of this characterization to
topological graphs on the plane and the exponential area lower bound
are described in Section~\ref{se:wrap-up}. Open problems can be
found in Section~\ref{se:open}.

\section{Preliminaries}
\label{se:preliminaries}


A \emph{topological graph} $G = (V, E)$ is a representation of a
simple graph on a given surface, where each vertex is represented by
a point and each edge is represented by a simple Jordan arc between
the points representing its endpoints. If the given surface is the
sphere, then we say that $G$ is an {\em $\mathbb{S}^2$-topological
graph}; if the given surface is the plane, then we say that $G$ is
an {\em $\mathbb{R}^2$-topological graph}. Two edges of a
topological graph \emph{cross} if they have a point in common, other
than their endpoints. The point in common is called a
\emph{crossing}. We assume that a topological graph satisfies the
following non-degeneracy conditions: (i) an edge does not contain a vertex other
than its endpoints; (ii) edges must not meet tangentially; (iii) no
three edges share a crossing; and (iv) an edge does not cross an
incident edge.

 An \emph{$\mathbb{S}^2$-embedding} of a graph is an equivalence class
of $\mathbb{S}^2$-topological graphs under homeomorphisms of the
sphere. An $\mathbb{S}^2$-topological graph has no unbounded face;
in fact an $\mathbb{S}^2$-embedding is uniquely determined merely by
the clockwise order of edges around each vertex and each edge
crossing. An \emph{$\mathbb{R}^2$-embedding} of a graph is an
equivalence class of $\mathbb{R}^2$-topological graphs under
homeomorphisms of the plane. Note that one face of an
$\mathbb{R}^2$-topological graph in the plane is unbounded; this is
the \emph{external face}.

The concepts of $\mathbb{R}^2$-embedding and
$\mathbb{S}^2$-embedding are very closely related. Each
$\mathbb{S}^2$-topological graph gives rise to a representation of
the same graph on the plane, by a stereographic projection about
an interior point of a
chosen face. This chosen face becomes the external face of the
$\mathbb{R}^2$-topological graph. Thus we can
regard an $\mathbb{R}^2$-embedding to be an $\mathbb{S}^2$-embedding
in which one specific face is chosen to be the external face.
Further, each
$\mathbb{R}^2$-topological graph gives rise to a representation of
the same graph on the sphere, by a simple projection.


A topological graph (either on the plane or on the sphere) is {\em planar} if no two edges cross.
A topological graph is {\em almost-planar}
if it has an edge $(s,t)$ whose removal makes it planar.
An \emph{almost-planar $\mathbb{R}^2$-embedding} ($\mathbb{S}^2$-embedding) of a graph is an equivalence class of almost-planar $\mathbb{R}^2$-topological graphs ($\mathbb{S}^2$-topological graphs)
under homeomorphisms of the plane (sphere).

Throughout this paper, $G = (V,E)$ denotes an almost-planar
topological graph ($\mathbb{S}^2$ or $\mathbb{R}^2$) and $(s,t)$
denotes an edge of $G$ whose deletion makes $G$ planar. The
embedding obtained by deleting the edge $(s,t)$ is denoted by
$\hat{G}$. More generally, we use the convention that the notation
$\hat{X}$ normally denotes $X$ without the edge $(s,t)$.

Let $G$ be an $\mathbb{S}^2$-topological graph and let $G'$ be an
$\mathbb{R}^2$-topological graph with the same underlying simple
graph. We say that $G'$ {\em preserves the $\mathbb{S}^2$-embedding
of $G$} if for each vertex and for each crossing they have the same
cyclic order of incident edges. Further, let $G$ be an
$\mathbb{R}^2$-topological graph and let $G'$ be an
$\mathbb{R}^2$-topological graph with the same underlying simple
graph. We say that $G'$ {\em preserves the $\mathbb{R}^2$-embedding
of $G$} if for each vertex and for each crossing they have the same
cyclic order of incident edges and the same external face. A
\emph{straight-line drawing} of a graph is an
$\mathbb{R}^2$-topological graph whose edges are represented by
straight-line segments.

\section{Straight-line drawability of an almost-planar $\mathbb{S}^2$-embedding}
\label{se:s2}

In this section we state our main theorem. Let $G$ be a topological
graph with a given almost-planar $\mathbb{S}^2$-embedding. Suppose
that $\alpha$ is a crossing between edges $(s,t)$ and $(u,v)$ in
$G$. If the clockwise order of vertices around $\alpha$ is $\langle
s, u , t , v \rangle$, then $u$ is a \emph{left} vertex and $v$ is a
\emph{right} vertex (with respect to the ordered pair $(s,t)$ and the crossing $\alpha$). We
say that a vertex of $G$
 is \emph{inconsistent} if it is both left and right, and
\emph{consistent} otherwise. For example, vertex $v$ in
Fig.~\ref{fi:intro}(c) is inconsistent: it is a left vertex
with respect the first crossing along $(s,t)$,
and it is a right vertex with respect to the final crossing along $(s,t)$.

\begin{theorem}\label{th:s2}
An almost-planar $\mathbb{S}^2$-topological graph $G$ with $n$
vertices admits an $\mathbb{S}^2$-embedding preserving straight-line
 drawing if and only if every
vertex of $G$ is consistent. This condition can be tested in $O(n)$
time.
\end{theorem}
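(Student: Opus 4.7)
The plan is to prove both directions of the characterization and then describe the linear-time test.

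For necessity, the argument is a simple geometric observation. In any embedding-preserving straight-line drawing of $G$, the edge $(s,t)$ is realized as a line segment, and its supporting line partitions the plane into two open half-planes $H_{\ell}$ and $H_{r}$. For any edge $e_i=(u_i,v_i)$ that crosses $(s,t)$ at a point $\alpha_i$, the endpoints $u_i$ and $v_i$ must lie in opposite half-planes, and preservation of the cyclic order $\langle s,u_i,t,v_i\rangle$ around $\alpha_i$ forces $u_i\in H_{\ell}$ and $v_i\in H_{r}$. Hence any vertex that is simultaneously left (for some crossing) and right (for another) would have to lie in both half-planes, a contradiction.

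For sufficiency I would build the drawing constructively. Let $\alpha_1,\ldots,\alpha_k$ be the crossings along $(s,t)$ in order, with $e_i=(u_i,v_i)$ the edge crossed at $\alpha_i$. Planarize $G$ by replacing each $\alpha_i$ with a degree-$4$ dummy vertex $c_i$; the resulting plane graph $G^+$ contains the path $P=\langle s,c_1,\ldots,c_k,t\rangle$, and in the $\mathbb{S}^2$-embedding of $G^+$ the left side of $P$ carries the vertices $u_1,\ldots,u_k$ while the right side carries $v_1,\ldots,v_k$. The consistency hypothesis is precisely the statement that the vertex sets on the two sides of $P$ are disjoint outside $\{s,t\}$, so $P$ is a simple separating path. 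Split $G^+$ along $P$ into planar subgraphs $G^+_L$ and $G^+_R$, each containing $P$ on its outer boundary. Augment each side with virtual edges so that $P$ remains on the outer boundary and the resulting plane graph admits a planar straight-line drawing in which its outer face is a convex polygon degenerate to the segment $st$ along the $P$-side; a convex drawing routine such as the Chiba--Nishizeki algorithm then produces such a drawing with $s,c_1,\ldots,c_k,t$ collinear. Gluing the two half-drawings along $P$ and smoothing out every $c_i$ recovers a straight segment $(s,t)$ and yields an embedding-preserving straight-line drawing of $G$.

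The linear-time test traverses $(s,t)$ and, using the cyclic orders stored with the embedding, identifies in $O(1)$ time per crossing which endpoint of $e_i$ is left and which is right. Maintaining two Boolean flags per vertex, the algorithm reports consistency iff no vertex has both flags set. Since $(s,t)$ has $O(n)$ crossings, the total time is $O(n)$. The main obstacle is the sufficiency step: standard convex-drawing theorems guarantee strictly convex outer faces, so the construction must handle the degenerate case where the $k+2$ vertices of $P$ become collinear on the boundary. This requires a careful augmentation on each side of $P$ that avoids forcing a reflex angle at any $c_i$; showing that consistency is exactly the property that makes this augmentation feasible is the crux of the argument.
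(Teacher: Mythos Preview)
Your necessity argument and the linear-time test are correct and match the paper's reasoning.

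The sufficiency sketch, however, has a genuine gap that goes beyond the collinearity issue you flag at the end. Suppose you succeed in placing $s,c_1,\ldots,c_k,t$ collinearly and draw $G^+_L$ and $G^+_R$ independently on the two sides. After gluing, the crossed edge $e_i=(u_i,v_i)$ is represented by the two straight segments $u_ic_i$ (drawn inside $G^+_L$) and $c_iv_i$ (drawn inside $G^+_R$). Nothing in your construction forces $u_i,c_i,v_i$ to be collinear, so ``smoothing out $c_i$'' does not yield a straight edge: you obtain a drawing with one bend per crossed edge, not a straight-line drawing. The positions of the $u_i$ in the left half and the $v_i$ in the right half are strongly coupled by the requirement that each segment $u_iv_i$ hit the line through $s$ and $t$ at the correct point and in the correct order; drawing the two halves independently cannot respect this coupling. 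This is exactly the difficulty the paper's machinery is built to overcome.

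A secondary gap is the split itself. On the sphere a path does not separate, so ``split $G^+$ along $P$'' is only meaningful after choosing an external face in which $P$ can play the role of a diameter; establishing that such a face exists is itself nontrivial (the paper spends Section~3.2 on an analogous step, proving that a suitable ``inconsistent face'' of $G_{LR}$ exists and is unique). Your sketch assumes this without argument.

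For comparison, the paper does \emph{not} planarize at the crossings. It first augments $G$ to a maximal almost-planar graph while preserving consistency, then selects an external face using a face-consistency lemma, then splits along a \emph{separating cycle} of actual graph vertices (not dummy crossing vertices) into an outer graph and an inner graph. The outer graph is drawn convexly; the inner graph is drawn by an intricate block-by-block procedure (side graphs, cap graphs, ``safe moons'' and ``safe wedges'', circular arcs $\gamma(B)$) that places left and right vertices so close to the separating polygon that every crossing edge is automatically straight and crosses $(s,t)$ in the prescribed order. That geometric construction is where the real work lies, and it is precisely what your independent-halves approach sidesteps.
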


The necessity of every vertex being consistent is straightforward.
The proof of sufficiency involves many technicalities and it
occupies most of the remainder of this paper. Namely, we prove the
sufficiency of the condition in Theorem~\ref{th:s2} by the following
steps.

\begin{description}
\item [Augmentation:] We show that we can add edges to an almost-planar $\mathbb{S}^2$-topological graph
to form a maximal almost-planar graph, without changing the property
that every vertex is consistent. Let $G'$ be the augmented
$\mathbb{S}^2$-topological graph (subsection~\ref{ss:augmentation}).

\item [Choice of an external face:] We find a face $f_o$ of $G'$
such that if the $\mathbb{S}^2$-embedding of $G'$ is projected on
the plane with $f_0$ as the external face, $G'$ satisfies an
additional property that we call {\em face consistency}
(subsection~\ref{ss:external-face}).

\item [Split the augmented graph:] After having projected
$G'$ on the plane with $f_o$ as the external face, we split  the
$\mathbb{R}^2$-embedding of $G'$ into the ``inner graph''   and the
``outer graph''. The inner graph and outer graph share a cycle
called the ``separating cycle'' (subsection~\ref{ss:divide}).

\item [Straight-line drawing computation:] We draw the outer graph leaving a convex shaped ``hole'' for the
inner graph; the boundary of this hole is the separating cycle. Then
we draw the ``inner graph'', whose external face is the separating
cycle, such that it fits exactly into the convex shaped ``hole''
(subsections~\ref{ss:outerGraph} and \ref{ss:innerGraph}).
\end{description}

Before presenting more details of the proof of sufficiency, we
observe that the condition stated in Theorem~\ref{th:s2} can be
tested in linear time. By regarding crossing points as dummy
vertices, we can apply the usual data structures for plane graphs to
almost-planar graphs (see~\cite{DBLP:journals/tcs/ChrobakE91}, for
example). A simple traversal of the crossing points along the edge
$(s,t)$ can be used to compute the left and the right vertices.
Since the number of crossing points in an almost-planar graph is
linear, these data structures can be applied without asymptotically
increasing total time complexity.

\subsection{Augmentation}\label{ss:augmentation}

Let $G$ be an $\mathbb{S}^2$-topological graph. An {\em
$\mathbb{S}^2$-embedding preserving augmentation} of $G$ is an
$\mathbb{S}^2$-topological graph $G'$ obtained by adding edges (and
no vertices) to $G$ such that for each vertex  (for each crossing)
of $G'$, the cyclic order of the edges of $G' \cap G$ around the
vertex (around the crossing) is the same in $G'$ and in $G$. An
almost-planar topological graph is \emph{maximal} if the addition of
any edge would result in a topological graph that is not almost
planar. The following lemma describes a technique to compute an
$\mathbb{S}^2$-embedding preserving augmentation of an almost-planar
$\mathbb{S}^2$-topological graph that gives rise to a maximal
almost-planar graph.

\begin{lemma}\label{le:augmentation}
Let $G$ be an almost-planar $\mathbb{S}^2$-topological graph with
$n$ vertices. If $G$ satisfies the vertex consistency condition,
then there exists a maximal almost-planar $\mathbb{S}^2$-embedding
preserving augmentation $G'$ of $G$ such that $G'$ satisfies the
vertex consistency condition. Also, such augmentation can be
computed in $O(n)$ time.
\end{lemma}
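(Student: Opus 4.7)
The plan is to augment $G$ in three phases driven by the structure of $(s,t)$. Let $\hat{G}=G-(s,t)$, and let $f_0,f_1,\ldots,f_k$ be the sequence of faces of $\hat{G}$ that $(s,t)$ traverses, with crossings $c_1,\ldots,c_k$; inside $f_i$ the arc of $(s,t)$ separates $f_i$ into a left sub-face $L_i$ and a right sub-face $R_i$. In \emph{Phase~1} I triangulate every face of $\hat{G}$ that is not in the list $f_0,\ldots,f_k$, using planar chords in $G$. In \emph{Phase~2} I triangulate each $L_i$ and each $R_i$ independently, using chords of $G$ that lie inside the sub-face and therefore do not cross $(s,t)$. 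In \emph{Phase~3} I process each segment of $(s,t)$ between consecutive dummy crossings (with $s$ and $t$ playing the role of endpoints for the first and last segment): after Phase~2 this segment is flanked by a triangle $(c_i,u_i,c_{i+1})$ in $L_i$ and a triangle $(c_i,w_i,c_{i+1})$ in $R_i$, and whenever $u_i\neq w_i$ and $u_iw_i\notin E(G)$ I add the edge $u_iw_i$ drawn to cross $(s,t)$ once inside the segment. By construction each new edge lies inside a face of the planarization, so the $\mathbb{S}^2$-embedding is preserved; after Phase~3 every face of the planarization of $G'$ is a triangle, which gives maximality because any further real-vertex edge would have to cross a non-$(s,t)$ edge of $G'$.

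The main technical obstacle is preserving vertex consistency in Phase~3, where the new crossing labels $u_i$ as a left vertex and $w_i$ as a right vertex. Phase~1 and Phase~2 introduce no crossings with $(s,t)$, so they leave every vertex's left/right status unchanged; the work therefore reduces to proving the key claim that, in a consistent $G$, no vertex on the left chain of any $f_i$ is a right vertex of $G$, and symmetrically for right chains. I plan to prove this by cutting the sphere along $(s,t)$ to obtain a topological disk whose boundary is a double cover of $(s,t)$ with a ``left copy'' and a ``right copy''. Under consistency, every crossed edge incident to a right vertex becomes, in the disk, a pair of arcs whose end at the right vertex lies on a sub-arc terminating at the right copy of the boundary. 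If a vertex $v$ were simultaneously a right vertex and on the left chain of some $f_i$, tracing the cyclic order of wedges at $v$ would force a transition from a wedge adjacent to the left copy (the one containing $L_i$) to a wedge adjacent to the right copy through an incident edge; the only edge at $v$ compatible with such a transition is a crossed edge, and its existence would make $v$ a left vertex as well, contradicting consistency. Given the key claim, the Phase~3 additions never turn a right vertex into a left one (nor the reverse), so $G'$ remains consistent.

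For the linear-time bound I maintain a DCEL representation of the planarization $G^\times$, which has $O(n)$ size since an almost-planar graph has $O(n)$ crossings. Phase~1 is a standard linear-time planar triangulation; Phase~2 triangulates sub-faces whose boundaries sum to $O(n)$; Phase~3 is a single scan over the $O(n)$ segments of $(s,t)$, each handled in constant time. The total augmentation runs in $O(n)$ time.
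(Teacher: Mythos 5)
Your Phase~3 rests on a premise that is false in general, and this breaks both maximality and the consistency argument. After Phases~1--2 you have added only edges that avoid $(s,t)$, and no graph edge can terminate at a crossing point; hence the face of the planarization flanking the segment of $(s,t)$ between consecutive crossings $c_i,c_{i+1}$ has on its boundary the stub of $e_i$ from $c_i$ to $\ell_i$ and the stub of $e_{i+1}$ from $c_{i+1}$ to $\ell_{i+1}$, so it can be a triangle $(c_i,u_i,c_{i+1})$ only when $\ell_i=\ell_{i+1}$. In general the best Phase~2 can produce is the quadrilateral $(c_i,\ell_i,\ell_{i+1},c_{i+1})$ (and near $s$ the flanking face may be even larger, e.g.\ when $(s,\ell_0)$ already exists but is drawn outside $L_0$, or when a vertex that is neither left nor right lies on the boundary of $L_0$). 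Consequently Phase~3 as written adds essentially no crossing edges, and the resulting $G'$ need not be maximal: one can still insert $(\ell_i,r_{i+1})$ crossing only $(s,t)$ through the two flanking quadrilaterals --- exactly the edge the paper adds to split each $4$-cycle $C_i$ --- and near $s$ and $t$ one can still insert crossing edges incident to vertices that are neither left nor right (the paper's case analysis for $C_0$ and $C_{p-1}$). Your maximality justification (``every face of the planarization is a triangle'') therefore does not hold for the graph you actually build.

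Moreover, the ``key claim'' you plan to prove is false, so even a corrected Phase~3 cannot lean on it. Take $V=\{s,t,\ell_0,r_0,y\}$ with $(s,t)$ a vertical arc, $(\ell_0,r_0)$ crossing it once near $s$, edges $(s,\ell_0)$ and $(s,r_0)$ forming a triangle below the crossing, an edge $(r_0,y)$ that leaves $r_0$ downward, passes below $s$ and climbs far to the left up to a vertex $y$ placed to the left of the upper part of $(s,t)$, and an edge $(y,t)$. Every vertex is consistent ($\ell_0$ is the only left vertex, $r_0$ the only right vertex), yet the left sub-face of the face of $\hat{G}$ traversed by the upper part of $(s,t)$ has the right vertex $r_0$ on its boundary: the left region wraps under $s$ and reaches the wedge between $(s,r_0)$ and $(r_0,y)$. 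Your cut-along-$(s,t)$ argument fails precisely because the two copies of the cut meet at $s$ and $t$, so a face can be adjacent to both copies without any crossed edge being involved. The statement you actually need --- and the one the paper proves --- is much weaker: the only vertices whose left/right status changes during a correct augmentation are vertices lying strictly inside the end cycles $C_0$ and $C_{p-1}$ (and the sub-triangles created along the way), and such vertices can be shown to be neither left nor right beforehand, so turning them into left or right vertices preserves consistency.
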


\begin{proof}
We add as many edges as possible to $G$ without introducing any new
crossing. Let $\tilde {G}$ be the resulting
$\mathbb{S}^2$-topological graph. Since $\tilde {G}$ was constructed
from $G$ without adding any edge crossings, $\tilde {G}$ is an
almost-planar $\mathbb{S}^2$-topological graph and it has the same
set of left vertices and the same set of right vertices as $G$. If
$\tilde {G} - (s,t)$ is a maximal planar graph we are done. So
assume $\tilde {G}$ otherwise, which implies that there is at least
one face of $\tilde {G} - (s,t)$ that has size larger than three.
Fig.~\ref{fi:Augmentation}(a) shows an example of an almost-planar
topological graph $G$ and Fig.~\ref{fi:Augmentation}(b) shows an
example of a (non-maximal) graph $\tilde {G}$.

\begin{figure}[h!]
 \centering
 \includegraphics[width=.9\columnwidth]{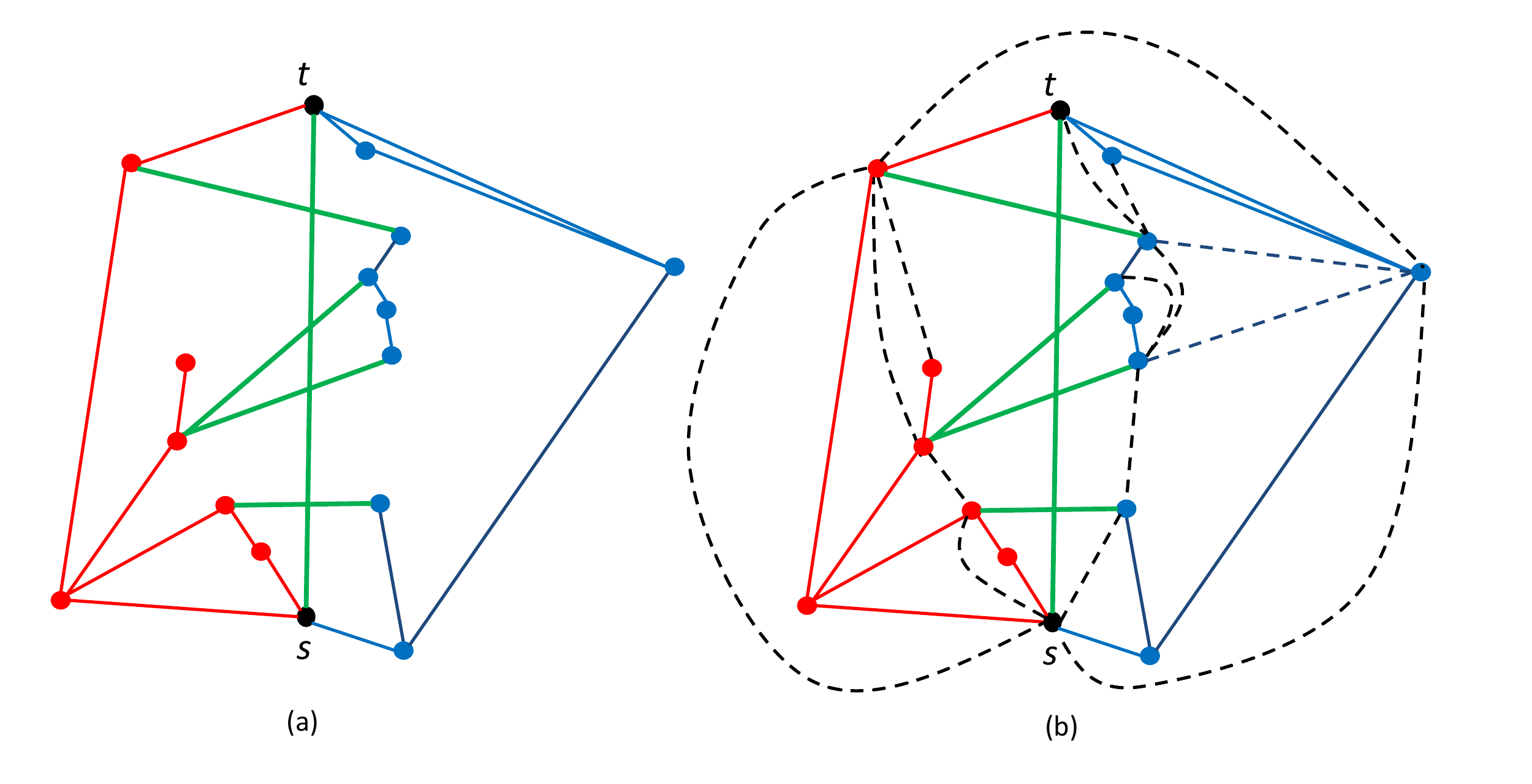}
 \caption{(a) A non maximal almost-planar graph $G$. (b) Graph $\tilde {G}$ obtained by adding the dotted edges.}
 \label{fi:Augmentation}
\end{figure}

Denote the edges of $G$ that cross $(s,t)$ by $e_0, e_1, \dots ,
e_{p-1}$, ordered from $s$ to $t$ by their crossings along $(s,t)$.
Suppose that $e_i = (\ell_i , r_i)$ for $0 \leq i < p$, where
$\ell_i$ is a left vertex and $r_i$ is a right vertex. By
construction, $\tilde {G}$ is such that $\ell_0$ is adjacent to $s$.
If not, we could have added edge $(\ell_0,s)$ to $\tilde {G}$ by
closely following the edge $e_0$ from $\ell_0$ to the crossing with
$(s,t)$ and then by closely following the edge $(s,t)$ from this
crossing down to $s$, till we encounter $s$, without introducing any
new edge crossings. With similar reasoning, it is immediate to see
that, in $\tilde {G}$, vertex $r_0$ is adjacent to $s$ and that
$\ell_{p-1}$ and $r_{p-1}$ are both adjacent to $t$. Also, any
two consecutive edges $e_i = (\ell_i , r_i)$ and $e_{i+1} =
(\ell_{i+1} , r_{i+1})$ either have an endvertex in common or
graph $\tilde {G}$ contains edges $(\ell_i, \ell_{i+1})$ and $(r_i,
r_{i+1})$ (or we could have added them without crossing edge $(s,t)$
by closely following $e_{i+1}$ from an endvertex to its crossing
with $(s,t)$, then closely following $(s,t)$ from this crossing down
to the crossing between $(s,t)$ and $e_i$, and finally closely
following $e_i$ to its endvertex).

Let $C_0$ be the cycle of $\tilde {G}$ induced by vertices $s$,
$\ell_0$, and $r_0$; let $C_{p-1}$ be the cycle induced by vertices
$t$, $\ell_{p-1}$, and $r_{p-1}$; for $1 \leq i < p-1$, let $C_i$ be
the cycle of $\tilde {G}$ induced by vertices $\ell_i$, $r_i$,
$\ell_{i+1}$, and $r_{i+1}$. Note that $C_0$ and $C_{p-1}$ are
3-cycles, while $C_i$ may be either a 3-cycle or a 4-cycle depending
on whether edges $e_i$ and $e_{i+1}$ have an endvertex in common or
not. See, for example, Fig.~\ref{fi:Augmentation}(b) where some of
such cycles are 3-cycles and some other such cycles are 4-cycles.
For each cycle $C_0, C_1,\dots , C_{p-1}$ we define the interior of
the cycle as the region that contains a portion of the edge $(s,t)$.

Consider $C_0$ first and the edge $(\ell_0,s)$. Since
$\tilde {G}$ is constructed by adding as many edges as possible to
$G$ that do not cross edge $(s,t)$, we have that $(\ell_0,s)$ is an
edge of some triangular face of $\tilde {G}$. Let $v$ be the vertex
of $f$ different from $\ell_0$ and $s$. Either $v$ is in the
interior of $C_0$ or not. Similarly, let $w$ be the vertex opposite
to $(r_0,s)$ in some triangular face of $\tilde {G}$; either $w$ is
in the interior of $C_0$ or not. We distinguish between three cases
(see Fig.~\ref{fi:three-cases-C0}): One of $v$ and $w$ is in the
interior of $C_0$, both $v$ and $w$ are in the interior of $C_0$,
and none of $v$ and $w$ is in the interior of $C_0$. Consider the
first case and assume, w.l.o.g., that $v$ is in the interior of
$C_0$ while $w$ is not. Note that $v$ was neither a left vertex nor
a right vertex of $G$; it cannot be a left vertex because there
would be an edge crossing $(s,t)$ that is encountered before $e_0$
when going from $s$ to $t$. It cannot be a right vertex, because the
edge incident to $v$ and crossing $(s,t)$ should also intersect edge
$(s, \ell_0)$, which is impossible.

 We add an edge connecting $v$ with $r_0$ as
described by the dotted edge in Fig.~\ref{fi:three-cases-C0}(a):
Start at $v$ and closely follow edge $(v, \ell_0)$, then closely
follow edge $(\ell_0, r_0)$ till $r_0$ is reached. Note that $v$ has
become a left vertex and that the vertex consistency condition holds
for all other vertices. Consider now the case that both $v$ and $w$
are in the interior of $C_0$ (see
Fig.~\ref{fi:three-cases-C0}(b)). Also in this case, both $v$ and
$w$ are neither left nor right vertices of $G$. We add edge
$(v,r_0)$ as in the previous case; then we add edge $(v,w)$ by
starting at $w$, closely following edge $(w,s)$, then closely
following edge $(s,v)$, and finally ending at vertex $v$. In this
case $v$ has become a left vertex, $w$ is a right vertex, and the
vertex consistency condition holds for all other vertices. Finally,
if neither $v$ nor $w$ is in the interior of $C_0$, no edge crossing
$(s,t)$ is added in its interior and the vertex condition
consistency is trivially maintained.

\begin{figure}[h!]
 \centering
 \includegraphics[width=.9\columnwidth]{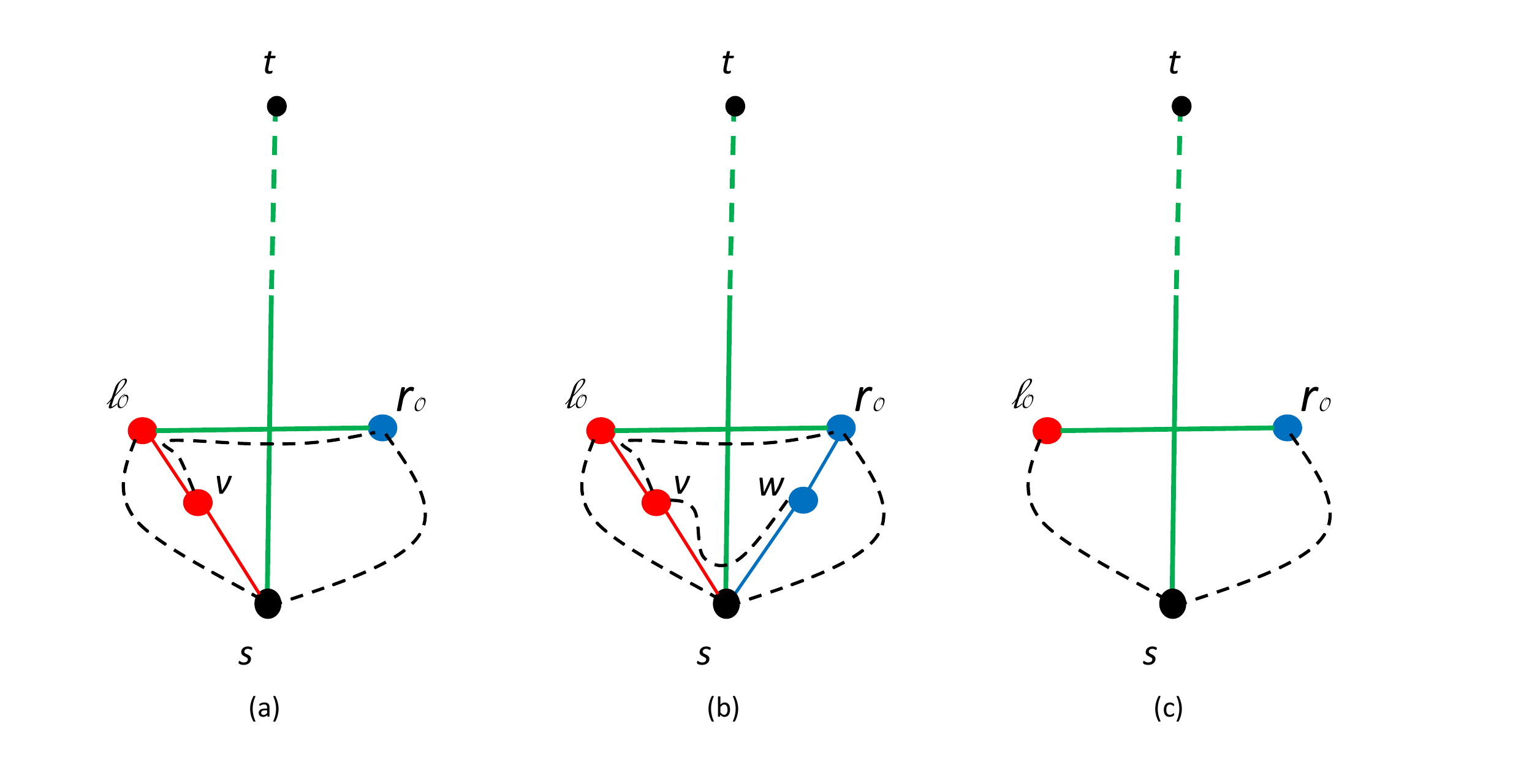}
 \caption{(a) The interior of $C_0$ contains a vertex $v$. (b) The interior of $C_0$ contains two vertices $v$ and $w$. (c) The interior of $C_0$ does not contain vertices.}
 \label{fi:three-cases-C0}
\end{figure}

For each cycle $C_i$ ($1 \leq i < p-1$) such that $C_i$ is not a
3-cycle we add an edge in its interior that only crosses edge
$(s,t)$ as follows: Starting at vertex $\ell_i$, we closely follow
edge $e_i$ until the crossing between $e_i$ and $(s,t)$ is
encountered; next, we closely follow edge $(s,t)$ upward to its
crossing with $e_{i+1}$; finally, we closely follow edge $e_{i+1}$
until we reach vertex $r_{i+1}$; see
Fig.~\ref{fi:adding-one-edge}(a) for an illustration of this step.
Note that since edge $(\ell_i,r_{i+1})$ is added in the interior of
$C_i$, we have that $\ell_i$ remains a left vertex and $r_{i+1}$
remains a right vertex after such an edge insertion. Consider now the
3-cycle with vertices $\ell_i$, $r_{i+1}$, and $\ell_{i+1}$ and the
the 3-cycle with vertices $r_i$, $r_{i+1}$, and $\ell{i}$. As with
the case of $C_0$, each such 3-cycle may have a vertex $v$ in its
interior such that $v$ is neither a left vertex nor a right vertex
of $G$. We add an edge incident to $v$ and crossing edge $(s,t)$ and
no other edges of the graph with the same technique described for
$C_0$; see Fig.~\ref{fi:adding-one-edge}(b) for an example of this
edge addition.

\begin{figure}[h!]
 \centering
 \includegraphics[width=.9\columnwidth]{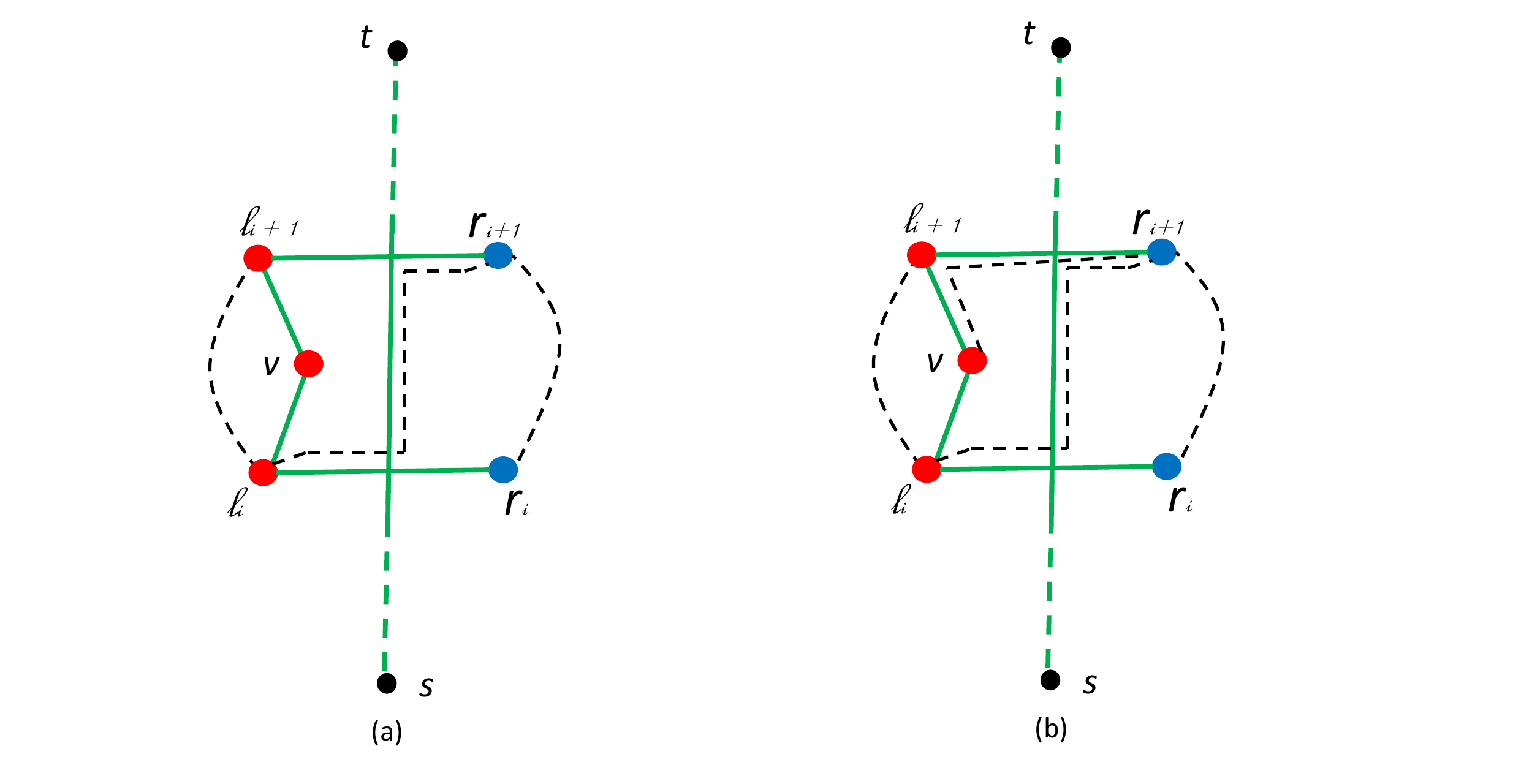}
 \caption{(a) $C_i$ is a 4-cycle and the dotted edge splits it into two 3-cycles. (b) A dotted edge crossing
 $(s,t)$ and incident to a vertex $v$ is added; $v$ becomes a left vertex.}
 \label{fi:adding-one-edge}
\end{figure}

Finally, edges are added inside cycle $C_{p-1}$ with the same
approach described for $C_0$.

Let $G'$ be the resulting $\mathbb{S}^2$-topological graph. The
proof is concluded by observing that: (i) $G'$ is an almost-planar
$\mathbb{S}^2$-embedding preserving augmentation $G$ and it has
$3n-5$ edges, and (ii) $G'$ can be constructed in linear time by a
simple traversal of the faces in a standard data structure for
planar graphs adapted to represent almost-planar graphs
(see~\cite{DBLP:journals/tcs/ChrobakE91}, for example). \qed
\end{proof}

It would be tempting to suggest that a maximal almost-planar
topological graph consists of a maximal planar graph plus an edge.
This is not quite true; see Fig~\ref{fi:example3n-6}.
\begin{figure}[ht!]
  \centering
  \includegraphics[width=.7\columnwidth]{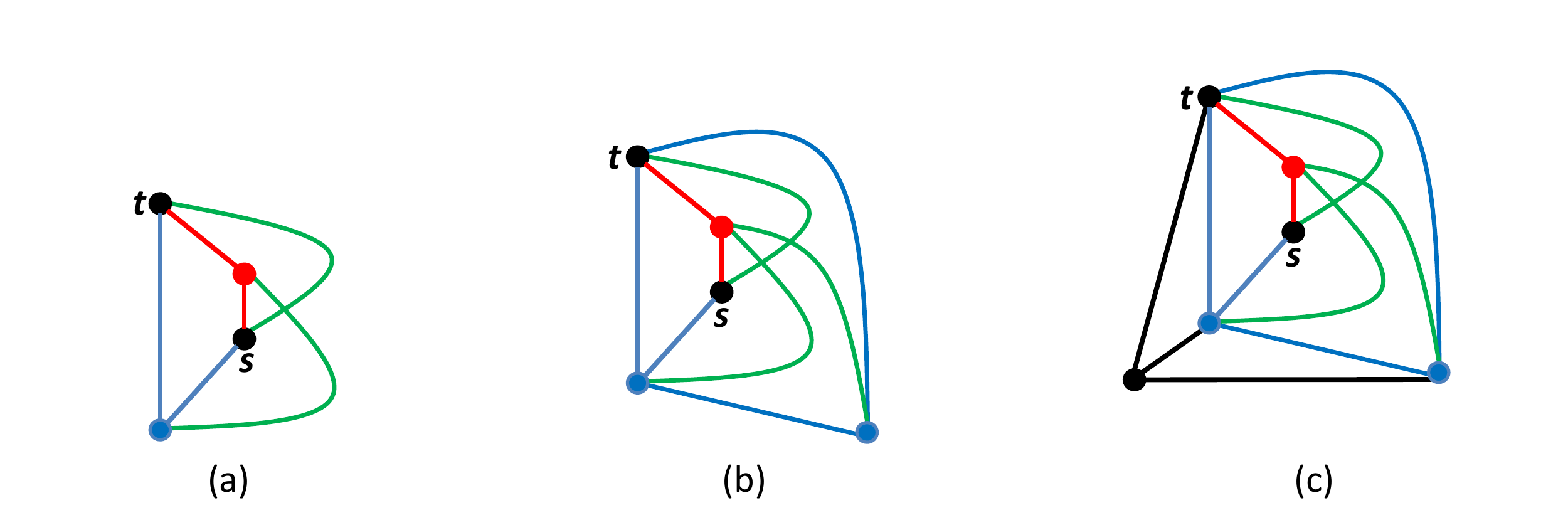}
  \caption{Maximal almost-planar graphs with $3n-6$ edges.}
  \label{fi:example3n-6}
\end{figure}
However, we can prove that a maximal almost-planar topological graph
``almost'' consists of a maximal planar graph plus an edge. To this
aim, we need a preliminary result.

\begin{lemma}
\label{le:faces} Suppose that $f$ is a face of the topological
subgraph $\hat{G}$ of a maximal almost-planar topological graph $G$
formed by deleting the edge $(s,t)$.
\begin{enumerate}
  \item [(1)] The subgraph $G_f$ of $G$ induced by the vertices of $f$ is a clique in $G$.
  \item [(2)] The subgraph $\hat{G}_f$ of $\hat{G}$ induced by the vertices of $f$ is an outerplane graph.
\end{enumerate}
\end{lemma}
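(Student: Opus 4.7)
The plan is to prove both parts using the maximality of $G$ together with the planarity of $\hat{G}$.

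For part~(1), I would argue by contradiction. Suppose there exist $u, v \in V_f$ with $(u,v) \notin G$. Since $u$ and $v$ both lie on the boundary of the face $f$ of $\hat{G}$, a simple Jordan arc $\gamma$ from $u$ to $v$ can be routed through the interior of $f$ and hence avoids every edge of $\hat{G}$. The only edge of $G$ that $\gamma$ might meet is $(s,t)$, but even if $\gamma$ crosses $(s,t)$ transversely at some points, the graph $G\cup\{(u,v)\}$ satisfies $G\cup\{(u,v)\}-(s,t)=\hat{G}\cup\{(u,v)\}$, which is planar because $\gamma$ lies in a single face of $\hat{G}$. Hence $G\cup\{(u,v)\}$ is almost-planar, contradicting the maximality of $G$.

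The delicate step is verifying that $G\cup\{(u,v)\}$ satisfies the non-degeneracy requirement that $(u,v)$ must not cross any edge incident to one of its endpoints. When $\{u,v\}\cap\{s,t\}=\emptyset$ the argument is immediate, since $(u,v)$ and $(s,t)$ are non-incident and hence may cross freely. The main obstacle is the case $u=s$ (and the symmetric variants), where $\gamma$ must avoid $(s,t)$ altogether. I would show that in a maximal $G$ such a route can always be found: near $s$ the edge $(s,t)$ emerges in exactly one sector of $s$, so any other sector of $s$ that lies inside $f$ yields a valid starting direction for $\gamma$, and maximality forces enough chords of $\hat{G}$ to be present on each side of the trace of $(s,t)$ through $f$ that $\gamma$ can reach $v$ without being squeezed into a crossing. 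If no such route through $f$ exists, I would fall back on another face of $\hat{G}$ containing both $s$ and $v$, whose existence is again guaranteed by maximality.

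For part~(2), the argument is more direct. The subgraph $\hat{G}_f$ inherits a plane embedding from $\hat{G}$, and its vertex set $V_f$ consists entirely of vertices on the boundary of $f$ in $\hat{G}$. Deleting from $\hat{G}$ the vertices not in $V_f$ together with their incident edges can only enlarge $f$ by merging it with neighbouring faces, so there is a single face $f^{*}$ of $\hat{G}_f$ whose boundary contains every vertex of $V_f=V(\hat{G}_f)$. Declaring $f^{*}$ to be the outer face exhibits $\hat{G}_f$ as an outerplane graph, as required.
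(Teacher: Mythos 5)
Your part~(2), and your part~(1) argument for pairs $u,v \notin \{s,t\}$, are correct and essentially the paper's own reasoning: an arc drawn inside $f$ meets no edge of $\hat{G}$, any crossings with $(s,t)$ are harmless because deleting $(s,t)$ still planarizes the augmented graph, so maximality forces the edge; and for~(2) all vertices of $f$ remain on a single face of $\hat{G}_f$, which can be taken as the outer face.

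The genuine gap is exactly in the case you flag yourself, $u\in\{s,t\}$, and neither of your two patches closes it. The claim that ``maximality forces enough chords of $\hat{G}$ to be present on each side of the trace of $(s,t)$ through $f$'' is empty: $f$ is a face of $\hat{G}$, so no edge of $\hat{G}$ meets its interior at all; the only obstacles inside $f$ are the arcs of $(s,t)$ itself, and these can genuinely separate $s$ from $v$ within $f$. For example, $(s,t)$ may enter $f$ through a boundary edge $(a,v)$ and leave through the adjacent boundary edge $(v,b)$, cutting the corner at $v$ off from the part of $f$ incident to $s$; then no arc from $s$ to $v$ inside $f$ avoids $(s,t)$, and by non-degeneracy the new edge $(s,v)$ is not allowed to cross $(s,t)$. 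Your fallback, that ``another face of $\hat{G}$ containing both $s$ and $v$'' exists ``again guaranteed by maximality,'' is a non sequitur: maximality only says that edges which \emph{can} be added are already present; it does not manufacture a common face of $s$ and $v$, nor a region of the planarization of $G$ bounded by both of them in which $(s,v)$ could be drawn crossing nothing. So for pairs involving $s$ or $t$ your argument does not establish adjacency. Note that this is precisely the pair type the paper's proof treats specially (only $\{u,v\}=\{s,t\}$ is excused), so the real burden is to show that the corner-cutting configuration cannot occur in a maximal graph; your proposal does not do this, and doing it would require ingredients you never invoke (for instance the vertex-consistency condition used elsewhere in the paper, or some control on how the arcs of $(s,t)$ can meet the boundary of $f$), since maximality of $G$ alone does not obviously exclude it.
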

\begin{proof}
Since $G$ is a maximal almost-planar graph, if $u$ and $v$ are
vertices of $f$ then either $(u,v)$ is already an edge of $\hat{G}$
or $u$ and $v$ coincide with $s$ and $t$, respectively. In either
case, $(u,v)$ is an edge of $G$. This proves (1). Further, since $f$
is a face of $\hat{G}$, every edge of this clique of $G$ induced by
$f$ lies outside or on the boundary of $f$. This proves (2). \qed
\end{proof}

We are now ready to prove the following.

\begin{lemma}\label{le:3n-6} If $G$ is a maximal almost-planar topological graph, then either
$\hat{G}$ is a maximal planar graph (that is, every face of
$\hat{G}$ has size 3); or every face of $\hat{G}$ has size 3, except
exactly one face $f_4$ which has the following properties: (i)$f_4$
has size 4; (ii)$f_4$ induces a clique in $G$; and (iii) both $s$ and $t$
are on $f_4$.
\end{lemma}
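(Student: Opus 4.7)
The plan is to combine Lemma~\ref{le:faces} with an outerplanar edge-count to bound face sizes, and then use a topological separation argument to establish uniqueness of any non-triangular face of $\hat{G}$. Applying Lemma~\ref{le:faces} to any face $f$ of $\hat{G}$ of size $k$, part (1) gives $G_f = K_k$ with $\binom{k}{2}$ edges, and part (2) gives that $\hat{G}_f$ is outerplanar, hence has at most $2k - 3$ edges. Since the only edge possibly missing from $\hat{G}_f$ is $(s,t)$, I obtain $\binom{k}{2} - 1 \leq 2k - 3$, forcing $k \leq 4$. If $k = 4$, the bound is tight, so $(s,t)$ must be an edge of $K_4 = G_f$, giving $s, t \in V(f)$. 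Furthermore, an outerplane graph whose outer face is a 4-cycle admits at most one diagonal, so the 5-edge graph $\hat{G}_f = K_4 - (s,t)$ is the 4-cycle plus one diagonal; the other diagonal is the missing edge $(s,t)$, so $s$ and $t$ are opposite corners of $f$.

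For uniqueness, suppose $\hat{G}$ had two distinct 4-faces $f_1$ with boundary cycle $s, u_1, t, v_1$ and $f_2$ with boundary cycle $s, u_2, t, v_2$. I would first rule out any overlap among $\{u_1, v_1, u_2, v_2\}$: if, say, $u_1 = u_2 = u$, then both $f_1$ and $f_2$ meet $u$ between the edges $(s,u)$ and $(u,t)$ in the rotation at $u$; for this to be possible these two edges must be the only edges incident to $u$ in $\hat{G}$, so $\deg_{\hat{G}}(u) = 2$. But Lemma~\ref{le:faces}(1) applied to $f_1$ gives $(u, v_1) \in G \setminus \{(s,t)\} \subseteq \hat{G}$, contradicting $\deg_{\hat{G}}(u) = 2$. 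Symmetric arguments dispose of $u_1 = v_2$, $v_1 = u_2$, and $v_1 = v_2$, so $u_1, v_1, u_2, v_2$ are four distinct vertices. Now the edge $(u_1, v_1) \in \hat{G}$ cannot enter the interior of $f_1$, so it is drawn inside the closed topological disk $D$ obtained by removing the interior of $f_1$ from the sphere, joining opposite boundary vertices $u_1$ and $v_1$ of $D$. This chord splits $D$ into two sub-disks, one containing $s$ on its boundary and the other containing $t$. Since $f_2 \subseteq D$ must lie entirely in one of these sub-disks, its closure cannot contain both $s$ and $t$, contradicting the assumed boundary of $f_2$.

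The main obstacle is the topological separation step: while the picture on the sphere makes it clear that $(u_1, v_1)$ disconnects $s$ from $t$ within $D$, turning this into a rigorous argument requires careful use of the planarity of $\hat{G}$ and the non-crossing of its edges. Handling the overlap cases cleanly also needs the observation about rotations at a shared vertex, but these all reduce to a degree contradiction via Lemma~\ref{le:faces}.
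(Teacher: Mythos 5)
Your proof is correct, and while it shares its first ingredient with the paper (the clique-versus-outerplanarity count from Lemma~\ref{le:faces}, giving $\binom{k}{2}-1\le 2k-3$ and hence $k\le 4$, with both $s$ and $t$ forced onto any $4$-face as opposite corners), it establishes uniqueness of the $4$-face by a genuinely different route. The paper argues globally by edge counting: it splits on the number $m$ of edges of $G$, rules out $m\le 3n-7$ by showing this would force either a face of size at least five (excluded by the same count you use) or two $4$-faces (excluded by a terse, figure-based maximality/crossing argument), and then in the remaining case $m=3n-6$ an Euler-formula count shows that exactly one face has size $4$. You avoid all global counting: having shown that every $4$-face has $s$ and $t$ as opposite corners and induces a clique, you observe that the diagonal $(u_1,v_1)$ of one $4$-face $f_1$ must be drawn in the closed disk complementary to $f_1$, where it separates $s$ from $t$; since any other face is connected, disjoint from the drawing and from $\overline{f_1}$, its closure cannot contain both $s$ and $t$, so no second $4$-face exists. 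This separation argument is sound (the edge $(u_1,v_1)$ exists by Lemma~\ref{le:faces}(1), its open arc lies in the open complementary disk because edges of the plane graph $\hat{G}$ avoid open faces and meet each other and the vertices only at shared endpoints, and the Jordan argument then applies), and it is arguably cleaner and more self-contained than the paper's figure-dependent claim that the edge $(u,v)$ ``would either split face $f_1$ or introduce a crossing.'' One remark: your preliminary elimination of coincidences among $u_1,v_1,u_2,v_2$ via the rotation/degree-two argument is correct but superfluous, since the separation step only uses that the closure of $f_2$ contains both $s$ and $t$. Both your proof and the paper's leave implicit the same mild regularity assumptions (faces of $\hat{G}$ bounded by simple cycles with at least three distinct vertices), so this is not a gap relative to the paper's own level of rigor.
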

\begin{proof}
Let $G$ be a maximal almost-planar topological graph. Let $m$ denote
the number of edges of $G$. Since $\hat{G}$ is planar, $m \leq
3n-5$. Suppose that $\hat{G}$ is not a maximal planar graph, that
is, $m \leq 3n-6$.

First consider the case that $m \leq 3n - 7$, that is, $\hat{G}$ has
at most $3n - 8$ edges. Simple counting shows that $\hat{G}$ either
contains a face with more than four vertices or it has at least two
faces each having four vertices.

Suppose first that $\hat{G}$ has a face $f$ with $k \geq 5$
vertices. From Lemma~\ref{le:faces}(1), $\hat{G}_f$ has at least
$\frac{k(k-1)}{2} - 1$ edges. From Lemma~\ref{le:faces}(2),
$\hat{G}_f$ has at most $2k - 2$ edges. However, $\frac{k(k-1)}{2}
-1 > 2k -2$ if $k > 4$; thus $\hat{G}$ cannot have a face with at
least five vertices.

Suppose now that $\hat{G}$ has two faces $f_1$ and $f_2$ each having
four vertices. From Lemma~\ref{le:faces}, $G_{f_i}$ has 6 edges and
$\hat{G}_{f_i}$ has 5 edges, for both $i=1$ and $i=2$. Thus $s$ and
$t$ must be vertices of both $f_1$ and  $f_2$. Hence, $f_1$ and
$f_2$ are as in Fig.~\ref{fi:two-faces}; consider vertices $u$ and
$v$ in the figure: they must be adjacent in $\hat{G}$, but edge
$(u,v)$ would either split face $f_1$ or introduce a crossing. It
follows that $\hat{G}$ cannot have two faces with four vertices.
\begin{figure}[ht!]
  \centering
  \includegraphics[width=.5\columnwidth]{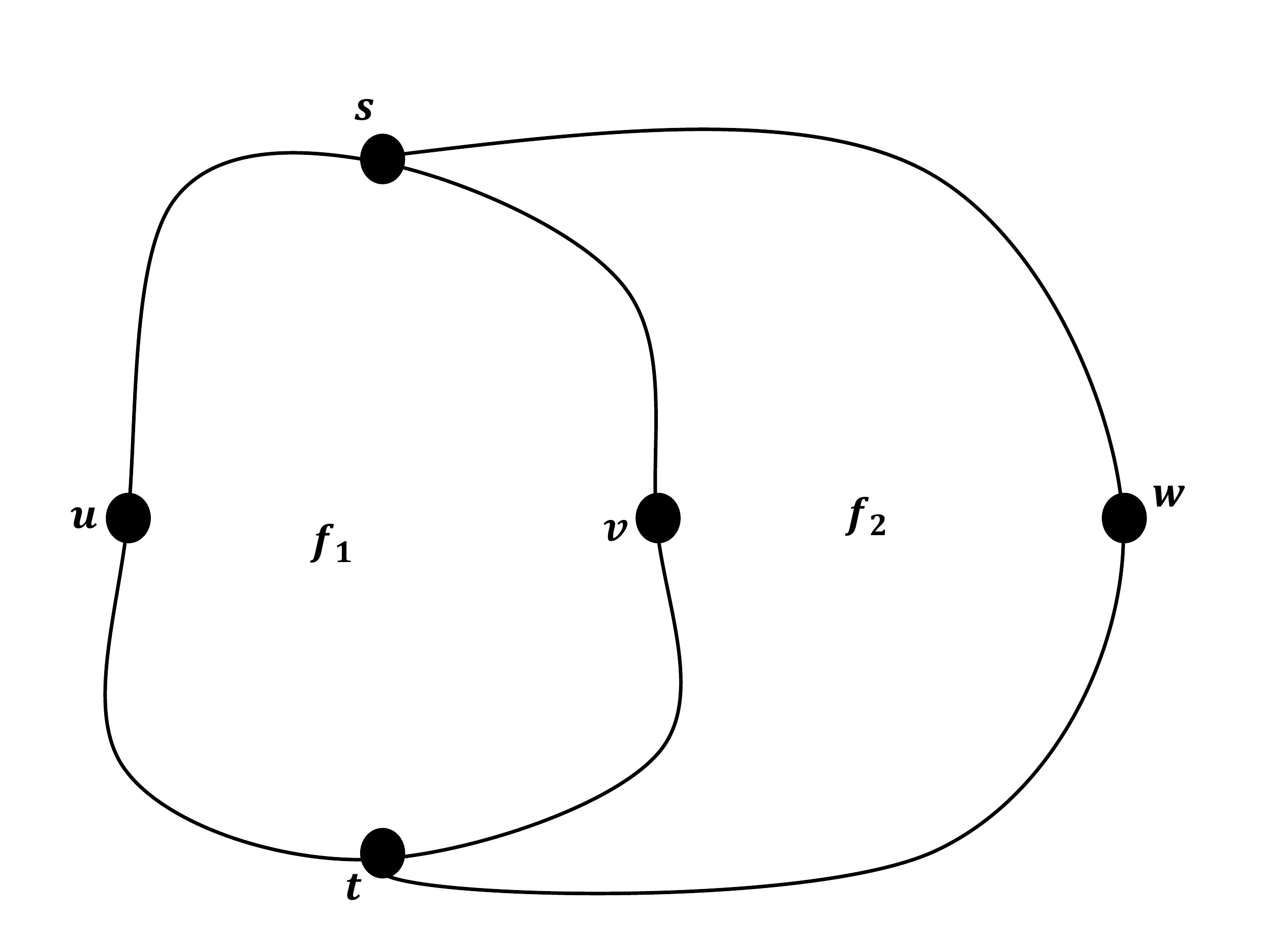}
  \caption{
  If $\hat{G}$ has two faces each having  four vertices, then $G$ is not a maximal almost-planar graph.}
  \label{fi:two-faces}
\end{figure}
Finally consider the case that $m=3n-6$. Counting reveals that every
face of $\hat{G}$ has size 3, except exactly one face $f_4$ has size
4. The other properties of $f_4$ follow from Lemma~\ref{le:faces}.
\end{proof}

\subsection{Choice of an external face}\label{ss:external-face}


The augmentation step results in a maximal almost-planar
$\mathbb{S}^2$-topological graph $G'$ in which every vertex is
consistent. Next, we want to identify a face $f_o$ of $G'$ such that
if we choose $f_0$ to be the external face, then $G'$ becomes an
$\mathbb{R}^2$-topological graph that has an embedding preserving
straight-line drawing in the plane. To identify such a face, we need
some further terminology.

 Let $G$ be an
almost-planar topological graph. Let $\hat{G}$ denote $G - (s,t)$.
We denote the set of left (resp. right) vertices of $G$ by $V_L$
(resp. $V_R$). We denote the subgraph of $\hat{G}$ induced by $V_L
\cup \{s,t \}$ (resp. $V_R \cup \{s,t \}$) by $\hat{G}_L$ (resp.
$\hat{G}_R$). The union of $\hat{G}_L$ and $\hat{G}_R$ is
$\hat{G}_{LR}$, and $G_{LR}$ denotes the topological subgraph of $G$
formed from $\hat{G}_{LR}$ by adding the edge $(s,t)$. Note that
$G_{LR}$ and $\hat{G}_{LR}$ are \emph{not} necessarily induced
subgraphs of $\hat{G}$. A face of $G_{LR}$ is {\em inconsistent} if
it contains a left vertex and a right vertex, and \emph{consistent}
otherwise. In fact we can prove that vertex consistency implies that
$G_{LR}$ has exactly one inconsistent face.
But first we prove some simple results about
$G_{LR}$ .

\begin{lemma}
\label{le:cycle} Let $G$ be an $\mathbb{S}^2$-topological graph in
which every vertex is consistent and let $C$ be a simple cycle in
$G_{LR}$ with at least one left vertex and at least one right
vertex. Then $C$ contains $s$ and $t$.
\end{lemma}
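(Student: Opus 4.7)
The plan is to exploit the fact that $G_{LR}$, by construction, cannot contain an edge joining a left vertex directly to a right vertex; the only ``bridges'' between the two sides are the apex vertices $s$ and $t$ together with the edge $(s,t)$ itself. The argument then reduces to a short combinatorial statement about the cyclic sequence of vertex types around $C$.

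First, I would observe that vertex consistency implies $V_L \cap V_R = \emptyset$: no vertex is simultaneously a left vertex and a right vertex, for otherwise it would be inconsistent. Next, I would inspect the edge set of $G_{LR}$. By definition, $\hat{G}_L$ is the subgraph of $\hat{G}$ induced by $V_L \cup \{s,t\}$, so every edge of $\hat{G}_L$ has both endpoints in $V_L \cup \{s,t\}$; symmetrically for $\hat{G}_R$. Together with the single extra edge $(s,t)$, this accounts for all of $E(G_{LR})$. Consequently, no edge of $G_{LR}$ has one endpoint in $V_L$ and the other in $V_R$.

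Now color each vertex of the simple cycle $C$ by one of three labels: $L$ if it lies in $V_L$, $R$ if it lies in $V_R$, and $S$ if it coincides with $s$ or $t$ (these three classes are disjoint and, since $C\subseteq G_{LR}$, exhaustive). Traversing $C$ cyclically, the previous paragraph says that no edge of $C$ is of type $LR$; hence in the cyclic color sequence every transition between an $L$-labelled vertex and an $R$-labelled vertex must pass through a vertex labelled $S$. By hypothesis $C$ contains at least one $L$-vertex and at least one $R$-vertex, so walking once around $C$ forces at least two distinct $L$-to-$R$ (or $R$-to-$L$) transitions, and therefore at least two $S$-labelled vertices in $C$.

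Since there are only two candidates with label $S$ in the whole graph, namely $s$ and $t$, and since $C$ is a simple cycle and cannot repeat a vertex, the two $S$-labelled vertices of $C$ must be $s$ and $t$ themselves. This gives the claim. The only point that requires care is the very first step, namely verifying from the definitions that no edge of $G_{LR}$ joins a left vertex directly to a right vertex; once that structural fact is in hand, the rest is a parity count on the color sequence around $C$.
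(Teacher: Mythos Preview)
Your proof is correct and follows essentially the same approach as the paper's: both hinge on the observation that $G_{LR}$ contains no edge joining a left vertex to a right vertex, so that $\{s,t\}$ separates $V_L$ from $V_R$ and any simple cycle meeting both sides must use both cut vertices. Your version simply spells out the parity/transition argument that the paper compresses into the phrase ``all paths from a left vertex to a right vertex pass through either $s$ or $t$, and thus $C$ contains both $s$ and $t$.''
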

\begin{proof}
Note that $G_{LR}$ has no edge that joins a left vertex to a right
vertex. Thus all paths from a left vertex to a right vertex pass
through either $s$ or $t$, and thus $C$ contains both $s$ and $t$.
\qed
\end{proof}

\begin{lemma}
\label{le:outerplanar} If $G$ is an almost-planar topological graph
then $\hat{G}_{LR}$ is outerplanar.
\end{lemma}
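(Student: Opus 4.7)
The plan is to exhibit, directly from the planar embedding of $\hat{G}$ on the sphere, a planar embedding of $\hat{G}_{LR}$ in which every vertex lies on the boundary of a single face. First I would inherit the embedding of $\hat{G}$ on $\mathbb{S}^2$, delete the vertices not in $V_L\cup V_R\cup\{s,t\}$ together with their incident edges, and also delete every $\hat{G}$-edge that joins a vertex of $V_L\setminus V_R$ to a vertex of $V_R\setminus V_L$; what remains is a planar embedding of $\hat{G}_{LR}$ on the sphere.

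The candidate outer face is the face $F^{*}$ that contains the geometric arc of the deleted edge $(s,t)$. In the typical case --- every crossing edge $e_i=(\ell_i,r_i)$ has $\ell_i\in V_L\setminus V_R$ and $r_i\in V_R\setminus V_L$, which in particular holds whenever vertex consistency is satisfied --- all the $e_i$'s disappear in the restriction step, the arc $(s,t)$ crosses no edge of $\hat{G}_{LR}$, and hence it lies in a single face $F^{*}$. I then argue that every vertex of $\hat{G}_{LR}$ is on $\partial F^{*}$: the endpoints $s,t$ of the arc are trivially there, and for any $v\in V_L\cup V_R$ I pick a crossing edge $e_i$ incident to $v$ and follow its sub-arc from the crossing point $\alpha_i$ down to $v$. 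This sub-arc starts at an interior point of $F^{*}$ (because $\alpha_i$ sits on the arc $(s,t)\subset F^{*}$) and cannot cross any edge of $\hat{G}_{LR}$, since $e_i\in\hat{G}$ and the planar $\hat{G}$ contains no crossings, so it stays inside $F^{*}$ and terminates at $v$, placing $v$ on $\partial F^{*}$. A stereographic projection making $F^{*}$ the outer face then certifies outerplanarity.

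The main obstacle I expect concerns the case when some endpoint of $e_i$ lies in $V_L\cap V_R$, so that $e_i$ itself is an edge of $\hat{G}_{LR}$ and the arc of $(s,t)$ actually crosses it in the inherited embedding, splitting across several faces rather than sitting in a single $F^{*}$. Handling this requires either re-routing $(s,t)$ inside the embedding of $\hat{G}_{LR}$ around each such $e_i$ --- using the flexibility provided by the inconsistent endpoints, where deleted incidences in $\hat{G}$ leave a merged face of $\hat{G}_{LR}$ through which the detour can pass without introducing new crossings --- or choosing a different planar embedding of $\hat{G}_{LR}$ by flipping the subgraph attached at the inconsistent vertex. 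In either refinement the bookkeeping of the second paragraph is unchanged: every remaining vertex still reaches $F^{*}$ via a sub-arc of some $e_i$, so outerplanarity follows.
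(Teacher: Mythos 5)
Your argument is essentially the paper's own proof: you identify the face of $\hat{G}_{LR}$ through which the arc of $(s,t)$ passes and show that $s$, $t$, and every left and right vertex lie on its boundary by following the sub-arc of an incident crossing edge, which cannot meet any edge of $\hat{G}_{LR}$. The complication you flag in your last paragraph---a crossing edge surviving in $\hat{G}_{LR}$ because one of its endpoints is inconsistent, so that the arc of $(s,t)$ is split over several faces---is handled only sketchily by your proposed re-routing/flipping, but the paper's proof silently passes over that case as well, and the lemma is only invoked under the vertex-consistency hypothesis, where your ``typical case'' argument is complete.
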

\begin{proof}
The edge $(s,t)$ passes through a face $f$ of $\hat{G}_{LR}$, and
$s$ and $t$ are on that face. Further, every left vertex $\ell$ is
on $f$ because, in $G$, there is an edge incident to $\ell$ that
crosses $(s,t)$. Similarly every right vertex is on $f$. \qed
\end{proof}

We first prove that $G_{LR}$ cannot have two inconsistent faces and
then show that $G_{LR}$ always has one inconsistent face.

\begin{lemma}\label{le:at-most-one-inconsistent-face}
Let $G$ be an almost-planar $\mathbb{S}^2$-topological graph in
which every vertex is consistent. Then $G_{LR}$ has at most one
inconsistent face.
\end{lemma}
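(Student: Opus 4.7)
The plan is to analyze the cyclic order of edges at $s$ in $G_{LR}$ and show that any inconsistent face must contain a specific, unique ``corner'' at $s$, from which the at-most-one bound follows.

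First I would establish that $G_{LR}$ is planar and extract a structural consequence at $s$. The edges of $G$ that cross $(s,t)$ are exactly the edges $e_i=(\ell_i,r_i)$ joining a left to a right vertex; since $\hat{G}_{LR}=\hat{G}_L\cup\hat{G}_R$ is the union of two induced subgraphs that separately lie on $V_L\cup\{s,t\}$ and $V_R\cup\{s,t\}$, none of these $e_i$ belongs to $\hat{G}_{LR}$. Hence $(s,t)$ crosses no edge of $\hat{G}_{LR}$, and $G_{LR}$ is a plane graph. Moreover, each edge $(s,u)\in\hat{G}_{LR}$ stays on one side of $(s,t)$; by the definition of left/right vertex this side is the left one if $u\in V_L$ and the right one if $u\in V_R$. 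Consequently the rotation at $s$ has the form $(s,t),R_1,\ldots,R_q,L_p,\ldots,L_1$, where the $R_i$ go to $V_R$ and the $L_j$ go to $V_L$, so there is exactly one position in this cyclic list---between $R_q$ and $L_p$---that places a right edge next to a left edge without $(s,t)$ in between. I will call this the \emph{transition corner} at $s$.

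Next I would show that any inconsistent face $g$ must use the transition corner at $s$ on its boundary. Since $g$ has a left and a right vertex on its boundary, and $\hat{G}_{LR}$ has no edge joining $V_L$ to $V_R$ (so Lemma~\ref{le:cycle} applies to any simple cycle extracted from $g$'s boundary), the boundary walk of $g$ must cross from the left side to the right side through $s$ or through $t$; by symmetry I focus on $s$. At such a crossing, the corner of $g$ at $s$ consists of two consecutive edges in the rotation, one leading to $V_L$ and the other to $V_R$. If one of these edges were $(s,t)$, then $g$ would coincide with one of the two faces of $G_{LR}$ incident to $(s,t)$; but each of those two faces has its entire boundary on a single side of $(s,t)$ and is therefore consistent---a contradiction. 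Hence $g$'s boundary must include the pair $(R_q,L_p)$, the unique transition corner at $s$.

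Finally, because every corner of a planar embedding is incident to exactly one face, the transition corner at $s$ is shared by at most one face of $G_{LR}$, so there is at most one inconsistent face. The main subtlety I expect is the case where $\hat{G}_{LR}$ is not $2$-connected, so that a face boundary is a closed walk that may visit $s$ more than once; here the uniqueness of the corner still settles the matter, but one must be careful to apply Lemma~\ref{le:cycle} to a simple cycle extracted from the relevant segment of the walk, and use the outerplanarity supplied by Lemma~\ref{le:outerplanar} to keep the local picture around $(s,t)$ well-defined.
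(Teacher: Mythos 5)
Your key structural claim---that in the rotation at $s$ the edges to right vertices and the edges to left vertices form two contiguous blocks, separated on one side by $(s,t)$ and on the other by a single ``transition corner''---is exactly the kind of statement that itself needs a topological proof, and the justification you give (``each edge $(s,u)$ stays on one side of $(s,t)$'') is not meaningful as it stands: $(s,t)$ is a Jordan arc, and an arc separates neither the sphere nor the plane, so ``the left side of $(s,t)$'' is undefined until you close it into a Jordan curve (for instance $(s,u)$ together with a crossing edge witnessing that $u$ is a left vertex and a piece of $(s,t)$). The separation claim is in fact true, but establishing it requires a Jordan-curve argument of essentially the same substance as the lemma itself, so as written this step is a gap rather than a reduction.

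Even granting the rotation claim, the corner count does not give ``at most one''. The boundary walk of an inconsistent face can switch between its left and right portions in three ways: at the transition corner of $s$, at the transition corner of $t$, or by traversing the edge $(s,t)$ itself (a subwalk $\ldots,\ell,s,t,r,\ldots$). Your ``by symmetry I focus on $s$'' only shows each inconsistent face uses one of four switching slots (the $s$-corner, the $t$-corner, and the two sides of $(s,t)$), each belonging to a single face; since a face needs at least two switches, this bounds the number of inconsistent faces by two, not one, unless you additionally rule out, say, one face using the two corners and another using the two sides of $(s,t)$. Your attempt to exclude the third mechanism---``each of the two faces incident to $(s,t)$ has its entire boundary on a single side of $(s,t)$ and is therefore consistent''---is false in general: take $G$ with vertices $s,t,\ell_0,r_0$ and edges $(s,t)$, $(\ell_0,r_0)$ crossing it, $(s,\ell_0)$ and $(t,r_0)$; then $G_{LR}$ is the path $\ell_0,s,t,r_0$, the edge $(s,t)$ has both sides on the same face, and that face is inconsistent (moreover there is no transition corner at $s$ at all, so ``every inconsistent face uses the $s$-corner'' fails as well). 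Here every vertex is consistent and the lemma still holds, so these examples refute your intermediate claims, not the statement; but they show that ruling out the bad slot configurations needs a global separation argument, which is precisely what the paper supplies by taking one inconsistent face as the outer face, closing its left and right paths with $(s,t)$ into two cycles (using Lemma~\ref{le:cycle}), and showing that the interiors of these cycles contain only left, respectively only right, vertices, so that a second inconsistent face has nowhere to live.
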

\begin{proof}
Suppose that $G_{LR}$ had two inconsistent faces, say $f_0$ and
$f_1$. Make a projection on the plane so that $f_0$ becomes the
external face of an  $\mathbb{R}^2$-embedding of $G_{LR}$. The
boundary of $f_0$ contains a simple cycle with at least one left
vertex and at least one right vertex; by Lemma~\ref{le:cycle},
vertices $s$ and $t$ are vertices along the boundary of $f_0$. Also,
there exists a simple path on the boundary of $f_0$ that starts at
$s$, ends at $t$, and is such that any other vertex of the path is a
left vertex; we call such a path the {\em left path} of $f_0$ and
denote it as $\Pi_L$. Similarly, the {\em right path} $\Pi_R$ of
$f_0$ is the simple path along the boundary of $f_0$ whose
endvertices are $s$ and $t$ and such that any internal vertex is a
right vertex of $f_0$. Observe that edge $(s,t)$ is inside face
$f_0$ since $f_0$ is chosen as the external face in the
$\mathbb{R}^2$-embedding of $G$.

The {\em left cycle} of $G$ is the simple cycle consisting of $\Pi_L
\cup (s,t)$; the {\em right cycle} of $G$ is the simple cycle
consisting of $\Pi_R \cup (s,t)$. Every vertex that does not belong
to $f_0$ is either strictly inside the left cycle or strictly inside
the right cycle. Note that any vertex $v$ of $G_{LR}$ inside the
left cycle must be a left vertex. Namely, $v$ is the endvertex of an
edge $e$ that crosses edge $(s,t)$. Since $G$ is almost-planar, edge
$e$ cannot cross any other edge of $G$ except $(s,t)$. Therefore
edge $e$ crosses $(s,t)$ without intersecting the boundary of $f_0$,
which implies that $v$ is a left vertex. Similarly, every vertex of
$G_{LR}$ inside the right cycle is a right vertex.

Consider now face $f_1$; $f_1$ must be entirely inside either the
left cycle or entirely inside the right cycle. However, $f_1$ is
inconsistent and therefore it has both left vertices and right
vertices, which contradicts the fact that all vertices inside the
left cycle are left vertices and all vertices inside the right cycle
are right vertices.\qed
\end{proof}

\begin{lemma}\label{le:at-least-one-inconsistent-face}
Suppose that $G$ is a maximal almost-planar
$\mathbb{S}^2$-topological graph in which every vertex is
consistent. Then $G_{LR}$ has at least one inconsistent face, and
this face is a simple cycle.
\end{lemma}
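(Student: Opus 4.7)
My plan is to exhibit an inconsistent face of $G_{LR}$ by locating it on the ``back'' side of the face through which the edge $(s,t)$ is drawn. By Lemma~\ref{le:outerplanar} there is a face $f$ of $\hat{G}_{LR}$ containing the arc $(s,t)$ of $G$ in its interior and whose boundary contains $s$, $t$, and every left and right vertex. The first step is to show that the boundary of $f$ is a simple cycle $C$ whose cyclic vertex sequence is, up to rotation, $s$, a maximal run of left vertices, $t$, a maximal run of right vertices. This rests on two observations: first, no edge of $\hat{G}_{LR}$ joins a left vertex to a right vertex (because $\hat{G}_{LR}=\hat{G}_L\cup\hat{G}_R$ and these two subgraphs share only $s$ and $t$), so on $C$ no left vertex is directly adjacent to a right vertex; second, each crossed edge $e_i=(\ell_i,r_i)$ is drawn inside $f$ and crosses $(s,t)$, which forces $\ell_i$ and $r_i$ to lie on opposite arcs of $C$ with respect to the chord $(s,t)$. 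Adding $(s,t)$ inside $f$ then splits $f$ into two consistent sub-faces $f_L$ and $f_R$, so an inconsistent face, if it exists, must lie outside $f$.

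Next I would analyze the closed disk $D$ on the sphere bounded by $C$ and disjoint from $f$. Every edge of $\hat{G}_{LR}$ not on $C$ is a chord of $C$ lying inside $D$, and each such chord is either entirely in $\hat{G}_L$ (its endpoints are in $V_L\cup\{s,t\}$) or entirely in $\hat{G}_R$. Thus in $D$ the chords form two non-crossing families anchored on the two complementary arcs of $C$, one ``left family'' and one ``right family'', which can communicate only through $s$ and $t$. Between these two families there must exist a face $f^\ast$ of $\hat{G}_{LR}$ whose boundary walks along both some portion of the left arc of $C$ (picking up a left vertex) and some portion of the right arc (picking up a right vertex); such an $f^\ast$ is inconsistent. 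Because the arc $(s,t)$ is contained in $f$ and never enters $D$, the face $f^\ast$ is also a face of $G_{LR}$, giving the required inconsistent face of $G_{LR}$.

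The main obstacle will be to show that the boundary of $f^\ast$ is a simple cycle. Here I would use the maximality of $G$ together with Lemma~\ref{le:3n-6}, which forces $\hat{G}$ to be triangulated except possibly at a single $4$-face containing both $s$ and $t$. The idea is that if some vertex appeared twice on $\partial f^\ast$, then $f^\ast$ could be split by inserting an additional edge inside it without creating a new crossing and without destroying almost-planarity, contradicting maximality of $G$; a short case analysis, separating the case in which the repeated vertex is $s$, $t$, a left vertex or a right vertex, rules out every cut-vertex configuration and shows that $\partial f^\ast$ is a simple cycle as claimed.
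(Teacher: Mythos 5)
Your overall strategy -- locating the inconsistent face on the far side of the left/right structure from the edge $(s,t)$ -- is the same as the paper's, but your first step contains a genuine gap that the rest of the argument depends on. You assert that the boundary of the face $f$ of $\hat{G}_{LR}$ through which $(s,t)$ passes is a \emph{simple} cycle consisting of $s$, a run of left vertices, $t$, and a run of right vertices. The two observations you give (no left--right edge exists in $\hat{G}_{LR}$; each crossed edge forces its endpoints onto opposite sides) only rule out left--right adjacencies and fix which side each endpoint lies on; they do not rule out repeated vertices on the boundary walk, and in general the boundary of $f$ is \emph{not} simple. For example, a right vertex $i$ can lie in a ``pocket'' bounded by two crossing edges incident to the same right vertex $h$ together with a piece of $(s,t)$; then every $\hat{G}_{LR}$-path from $i$ to the rest of the graph passes through $h$, so $h$ is a cutvertex of $\hat{G}_{LR}$ and the boundary walk of $f$ visits $h$ twice. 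This happens even for maximal graphs with all vertices consistent -- it is exactly the phenomenon behind the repeated vertex $h$ in the closed walk of Fig.~\ref{fi:walk}. Without a simple cycle you have no disk $D$, no ``two complementary arcs'', and no chord families, so the construction of $f^\ast$ collapses. The paper avoids this by not using the face boundary at all: it takes a \emph{shortest} path $q_L$ from $s$ to $t$ in $\hat{G}_L$ and a shortest path $q_R$ in $\hat{G}_R$ (existence uses maximality); their concatenation $C$ is a simple, chordless cycle, so the face of $\hat{G}_{LR}$ bounded by $C$ on the side away from the remaining vertices is automatically a simple cycle, and it is inconsistent; a short degenerate case ($(s,t)$ lying on that side, forcing $\hat{G}_{LR}=C$) finishes the proof.

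Your third paragraph has a second problem: the proposed maximality argument for simplicity of $\partial f^\ast$ does not go through as sketched. The face $f^\ast$ is a face of the \emph{subgraph} $G_{LR}$, not of $G$; its interior in $G$ typically contains vertices that are neither left nor right, cap edges, and crossing edges (indeed, in the paper's drawing algorithm the entire outer graph lives inside the inconsistent face). Hence an edge drawn inside $f^\ast$ may cross edges of $G\setminus G_{LR}$ or duplicate an existing cap edge, so its absence does not contradict maximality of $G$; likewise Lemma~\ref{le:3n-6} constrains the faces of $\hat{G}$, not those of $\hat{G}_{LR}$ or $G_{LR}$. In the paper this issue never arises, because simplicity of the inconsistent face comes for free from the fact that it is bounded by the chordless cycle $C=q_L\cup q_R$. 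If you want to salvage your outline, replace the face boundary by such a chordless cycle from the start; the ``rotation at $t$'' idea implicit in your second paragraph (some face inside $D$ must meet both a left and a right neighbour of $t$) can then be made precise, but it is no shorter than the paper's argument.
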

\begin{proof}
Let $q_L$ be a shortest path in $\hat{G}_L$ from $s$ to $t$, and let
$q_R$ be a shortest path in $\hat{G}_R$ from $t$ to $s$. The
existence of such paths is guaranteed by maximality. Further, since
$(s,t)$ is not in either $\hat{G}_L$ or $\hat{G}_R$, $q_L$ contains
at least one left vertex and $q_R$ contains at least one right
vertex. From Lemma~\ref{le:outerplanar}, the cycle $C$ formed by
concatenating $q_L$ and $q_R$ forms a face $f$ of $\hat{G}_{LR}$.

If $f$ is also a face of $G_{LR}$, then the Lemma follows. Otherwise
$(s,t)$ must ``split'' $f$ in $\hat{G}_{LR}$. In this case, every
left and every right vertex must lie on $f$; it follows that
$\hat{G}_{LR}$ is exactly $f$. The Lemma follows, since $(s,t)$ can
be on only one side of the cycle $C$. \qed
\end{proof}

Lemma~\ref{le:at-most-one-inconsistent-face} and
Lemma~\ref{le:at-least-one-inconsistent-face} imply the following.

\begin{lemma}
\label{le:oneFaceIsInconsistent} Let $G$ be an $\mathbb{S}^2$-topological graph in
which every vertex is consistent. Then $G_{LR}$ has exactly one inconsistent face.
\end{lemma}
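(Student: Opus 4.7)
The plan is essentially to observe that the statement is the conjunction of the two lemmas immediately preceding it. Lemma~\ref{le:at-most-one-inconsistent-face} gives the upper bound of one inconsistent face in $G_{LR}$, and Lemma~\ref{le:at-least-one-inconsistent-face} gives the matching lower bound. Combining the two yields exactly one inconsistent face, so the proof is essentially a one-liner that cites both results.

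The only subtlety is a mild mismatch in hypotheses: Lemma~\ref{le:at-least-one-inconsistent-face} is stated for \emph{maximal} almost-planar $\mathbb{S}^2$-topological graphs, whereas the present lemma requires only that every vertex be consistent. Within the proof of Theorem~\ref{th:s2} this is not an issue, because the lemma is invoked after the augmentation step of subsection~\ref{ss:augmentation}, so the graph under consideration is already the maximal almost-planar augmentation $G'$ produced by Lemma~\ref{le:augmentation}. I would therefore write the proof assuming, without loss of generality, that $G$ is maximal, with a parenthetical reminder that Lemma~\ref{le:augmentation} provides such a maximal $G'$ that preserves vertex consistency and hence the sets $V_L$ and $V_R$.

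If one instead wanted to establish the lemma in the full generality of its stated hypothesis, the argument would be: apply Lemma~\ref{le:augmentation} to obtain a maximal almost-planar augmentation $G'$; by the two preceding lemmas $G'_{LR}$ has a unique inconsistent face $f'$; since augmentation can only refine the face structure, $f'$ lies inside some face $f$ of $G_{LR}$; and because $f$ contains all left and right vertices of $f'$, $f$ is itself inconsistent. Uniqueness of $f$ then follows again from Lemma~\ref{le:at-most-one-inconsistent-face} applied to $G$ itself.

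There is effectively no main obstacle here, since the substantive work has already been done in Lemmas~\ref{le:at-most-one-inconsistent-face} and~\ref{le:at-least-one-inconsistent-face}; the only thing to be careful about is making the maximality hypothesis explicit (or routing around it via augmentation) so that the invocation of Lemma~\ref{le:at-least-one-inconsistent-face} is legitimate.
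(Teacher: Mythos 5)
Your proposal matches the paper exactly: the paper gives no separate argument, simply stating that Lemma~\ref{le:at-most-one-inconsistent-face} and Lemma~\ref{le:at-least-one-inconsistent-face} together imply this lemma. Your remark about the missing maximality hypothesis is a point the paper glosses over; in context the lemma is applied to the maximal augmentation produced by Lemma~\ref{le:augmentation}, so your first resolution (assume maximality, as in the paper's intended use) is the right one.
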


We now proceed as follows.
Let
$G$ be an $\mathbb{S}^2$-topological graph in which every vertex is
consistent and let $G'$ be a maximal almost-planar $\mathbb{S}^2$-embedding
preserving augmentation of $G$ constructed by using
Lemma~\ref{le:augmentation}. We project $G$ on the plane such that
the only inconsistent face of $G_{LR}$ is its external face. The
following lemma is a consequence of the discussion above and of
Lemma~\ref{le:oneFaceIsInconsistent}.

\begin{lemma}\label{le:properties-of-G'}
Let $G$ be a maximal almost-planar $\mathbb{S}^2$-topological graph
in which every vertex is consistent. There exists an
$\mathbb{R}^2$-topological graph $G'$ that preserves the
$\mathbb{S}^2$-embedding of $G$ and such that: (i) every internal
face of $\hat{G'}$ consists of three vertices (i.e. it is a
triangle); (ii) every internal face of $G'_{LR}$ is consistent.
\end{lemma}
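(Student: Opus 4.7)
\emph{Proof proposal.} My plan is to let $G'$ be a planar projection of $G$ through a suitably chosen face $f_o$ of $G$, so that $G'$ automatically preserves the $\mathbb{S}^2$-embedding. The external face of $G'$ is then $f_o$, and for any subgraph $H$ of $G$ the external face of the induced $\mathbb{R}^2$-embedding of $H$ is the unique face of $H$ that contains $f_o$ as a subregion. Consequently, condition~(i) reduces to requiring that the face of $\hat{G}$ containing $f_o$ be the unique non-triangular face of $\hat{G}$ (if one exists), and condition~(ii) reduces to requiring that the face of $G_{LR}$ containing $f_o$ be the unique inconsistent face $f^*$ of $G_{LR}$.

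Two structural results frame this choice. By Lemma~\ref{le:3n-6}, either $\hat{G}$ is maximal planar or $\hat{G}$ has exactly one $4$-face $f_4$, with $s,t \in \partial f_4$. By Lemma~\ref{le:oneFaceIsInconsistent}, $G_{LR}$ has exactly one inconsistent face $f^*$. If $\hat{G}$ is maximal planar, condition~(i) is automatic and it suffices to take any face of $G$ contained in $f^*$; such a face exists because the edges of $G$ not in $G_{LR}$ subdivide $f^*$ into faces of $G$. The substantive case is when $\hat{G}$ has a $4$-face $f_4$, and the goal is to pick $f_o$ inside $f_4 \cap f^*$.

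The main step is then to prove that $f_4 \subseteq f^*$ as open regions of $\mathbb{S}^2$. Since $f_4$ is a face of $\hat{G}$, no edge of $\hat{G}$ meets its interior. In the non-trivial case $V_L \cup V_R \neq \emptyset$, the edge $(s,t)$ must also avoid the interior of $f_4$: the four edges of $\partial f_4$ are all incident to $s$ or $t$, so by non-degeneracy they cannot be crossed by $(s,t)$; if $(s,t)$ entered $f_4$ at $s$ it would remain inside $f_4$ and pick up no crossings, contradicting $V_L \cup V_R \neq \emptyset$. Hence the interior of $f_4$ is disjoint from every edge of $G_{LR}$ and therefore lies in a single face $F$ of $G_{LR}$. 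I would then use the cyclic orders around $s$ and $t$, together with the vertex-consistency assumption, to argue that $\partial F$ contains both a left vertex and a right vertex, so $F$ is inconsistent and equals $f^*$ by Lemma~\ref{le:oneFaceIsInconsistent}. This last step is what I expect to be the main obstacle: one must trace how the four edges $(s,a),(s,b),(t,a),(t,b)$ bounding $f_4$ at $s$ and $t$ separate $f_4$ from the two adjacent faces of $\hat{G}$ that witness the first and last crossings of $(s,t)$, and argue that the resulting paths on $\partial F$ necessarily visit vertices of both types.
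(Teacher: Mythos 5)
Your overall route is the same as the paper's: project so that the unique inconsistent face $f^*$ of $G_{LR}$ (Lemma~\ref{le:oneFaceIsInconsistent}) becomes external, and use Lemma~\ref{le:3n-6} for internal triangulation; you are in fact more explicit than the paper (which essentially asserts the lemma after choosing $f^*$ as the external face) about the need to place the projection point simultaneously inside $f^*$ and inside the possible quadrilateral face $f_4$ of $\hat{G}$. But the decisive sub-claim --- that the interior of $f_4$ lies in $f^*$, i.e.\ that the face $F$ of $G_{LR}$ containing it is inconsistent --- is exactly where your argument stops: you flag it as the expected main obstacle and only gesture at tracing cyclic orders at $s$ and $t$ and the faces containing the first and last crossings of $(s,t)$. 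That sketch does not close the gap, and it misses the ingredient that actually makes the claim true: a priori the two vertices $a,b$ of $f_4$ other than $s,t$ might not be left or right vertices at all, in which case $\partial F$ would not be $\partial f_4$ and your planned boundary trace has no evident purchase.

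The missing ingredient is the clique property you already quoted. By Lemma~\ref{le:3n-6} (or Lemma~\ref{le:faces}(1)), $(a,b)$ is an edge of $\hat{G}$, and since $f_4$ is a face of $\hat{G}$ this edge lies outside $f_4$; also $s$ and $t$ are opposite on $\partial f_4$, because $(s,t)\notin\hat{G}$ cannot be a boundary edge. You have already shown that $(s,t)$ avoids $\mathrm{int}(f_4)$ in the non-trivial case, and the same holds for $(a,b)$. Hence both arcs lie in the closed disc $\mathbb{S}^2\setminus\mathrm{int}(f_4)$, on whose boundary cycle $a$ and $b$ separate $s$ from $t$; a Jordan-curve argument then forces $(s,t)$ to cross $(a,b)$. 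Consequently $a$ and $b$ are a left and a right vertex, so all four boundary edges $(s,a),(a,t),(t,b),(b,s)$ belong to $\hat{G}_L\cup\hat{G}_R\subseteq G_{LR}$, and $\mathrm{int}(f_4)$, which meets no vertex or edge of $G$, is itself a face of $G_{LR}$ with a left and a right vertex on its boundary; by Lemma~\ref{le:at-most-one-inconsistent-face} it coincides with $f^*$. Taking $f_o=\mathrm{int}(f_4)$ (or, in the maximal planar case, any face of $G$ inside $f^*$, as you argued) then yields both (i) and (ii). With this step supplied your proof is complete; without it, condition (ii) for the chosen projection is unsupported.
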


Examples of an almost-planar $\mathbb{R}^2$-topological graph $G$
and of its subgraphs $\hat{G_L}$, $\hat{G_R}$, and $G_{LR}$ are
given in Fig.~\ref{fi:exampleBigAll0} and
Fig.~\ref{fi:exampleBigAll}.

\begin{figure}[h!]
  \centering
  \includegraphics[width=.99\columnwidth]{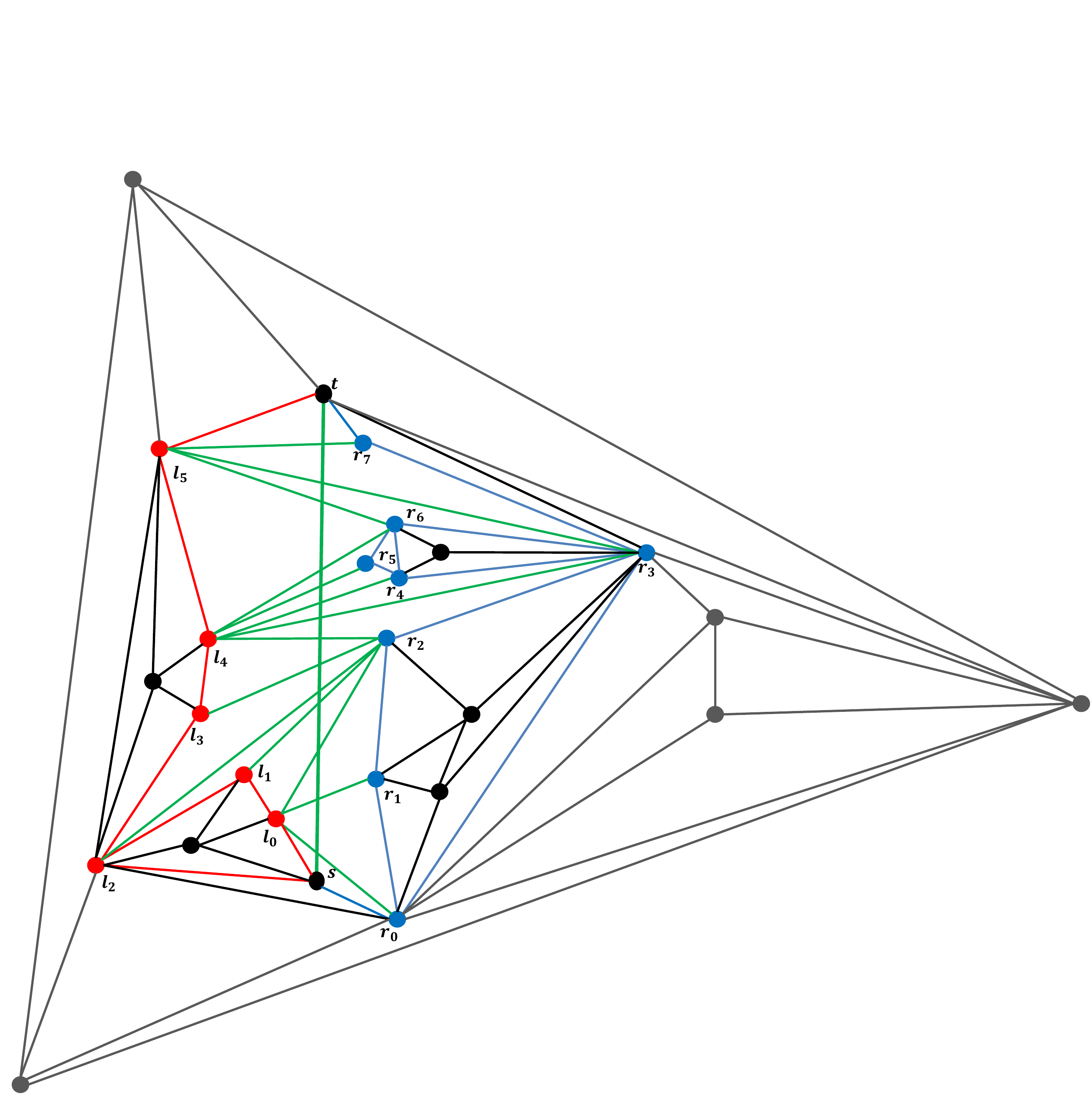}
  \caption{An almost-planar $\mathbb{R}^2$-topological graph $G$.
  Left vertices are red and labelled $l_i$, $i=0, \ldots, 5$; right vertices are blue and labelled $r_i$, $i=0, \ldots, 7$.}
  \label{fi:exampleBigAll0}
\end{figure}

\begin{figure}[h!]
  \centering
  \includegraphics[width=.99\columnwidth]{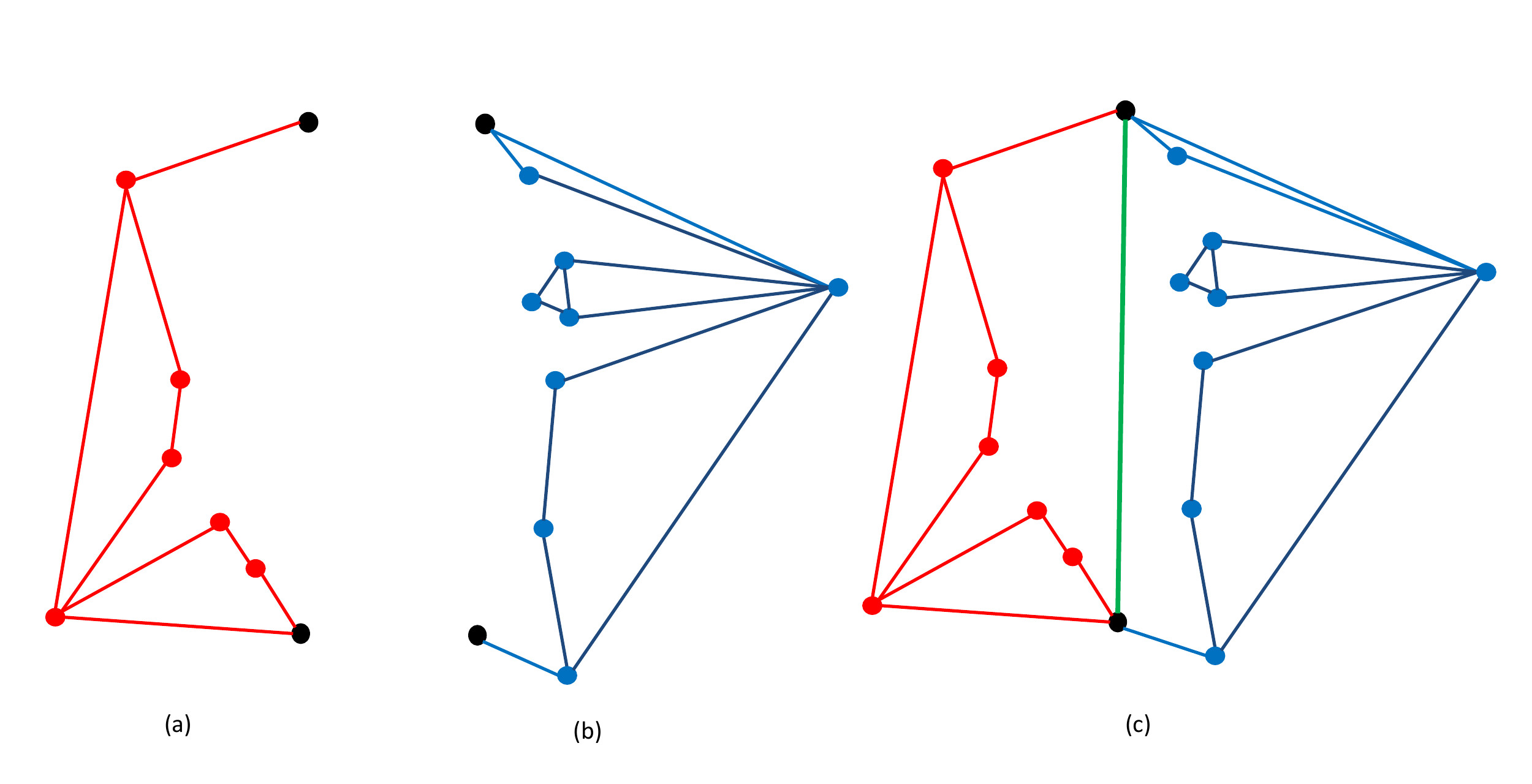}
  \caption{Subgraphs of the graph $G$ in Fig.~\ref{fi:exampleBigAll0}: (a) $\hat{G}_L$, (b) $\hat{G}_R$, (c) $G_{LR}$.}
  \label{fi:exampleBigAll}
\end{figure}


\subsection{Splitting the augmented graph}\label{ss:divide}

For the remainder of Section~\ref{se:s2},
we assume that $G$ is a maximal almost-planar $\mathbb{R}^2$-topological graph;
that is, that the augmentation
and choice of an outer face have been done.
Next we divide $G$ into the ``inner graph''  and the
``outer graph''.

Denote the induced subgraph of $\hat{G}$ on the vertex set $V_L \cup
V_R \cup \{s, t \}$ by $\hat{G}^+_{LR}$. Note that $\hat{G}_{LR}$ is
a subgraph of $\hat{G}^+_{LR}$, but these graphs may not be the
same;  in particular, $\hat{G}^+_{LR}$ may have edges with a left
endpoint and a right endpoint that do not cross $(s,t)$; such an
edge is called a \emph{cap} edge. Although $\hat{G}$ is internally
triangulated by Lemma~\ref{le:properties-of-G'}, $\hat{G}^+_{LR}$
may have {\em non-triangular} inner faces.
Fig.~\ref{fi:GplusLRGin}(a) shows $\hat{G}^+_{LR}$, where $G$ is the
graph in Fig.~\ref{fi:exampleBigAll0}. The non-triangular inner
faces in Fig.~\ref{fi:GplusLRGin}(a) arise, for example, from cycles
of left vertices in Fig.~\ref{fi:exampleBigAll0} that have vertices
that are neither left nor right in their interior.
Examples are in Fig.~\ref{fi:GplusLRGin}(a).

\begin{figure}[ht!]
  \centering
  \includegraphics[width=.9\columnwidth]{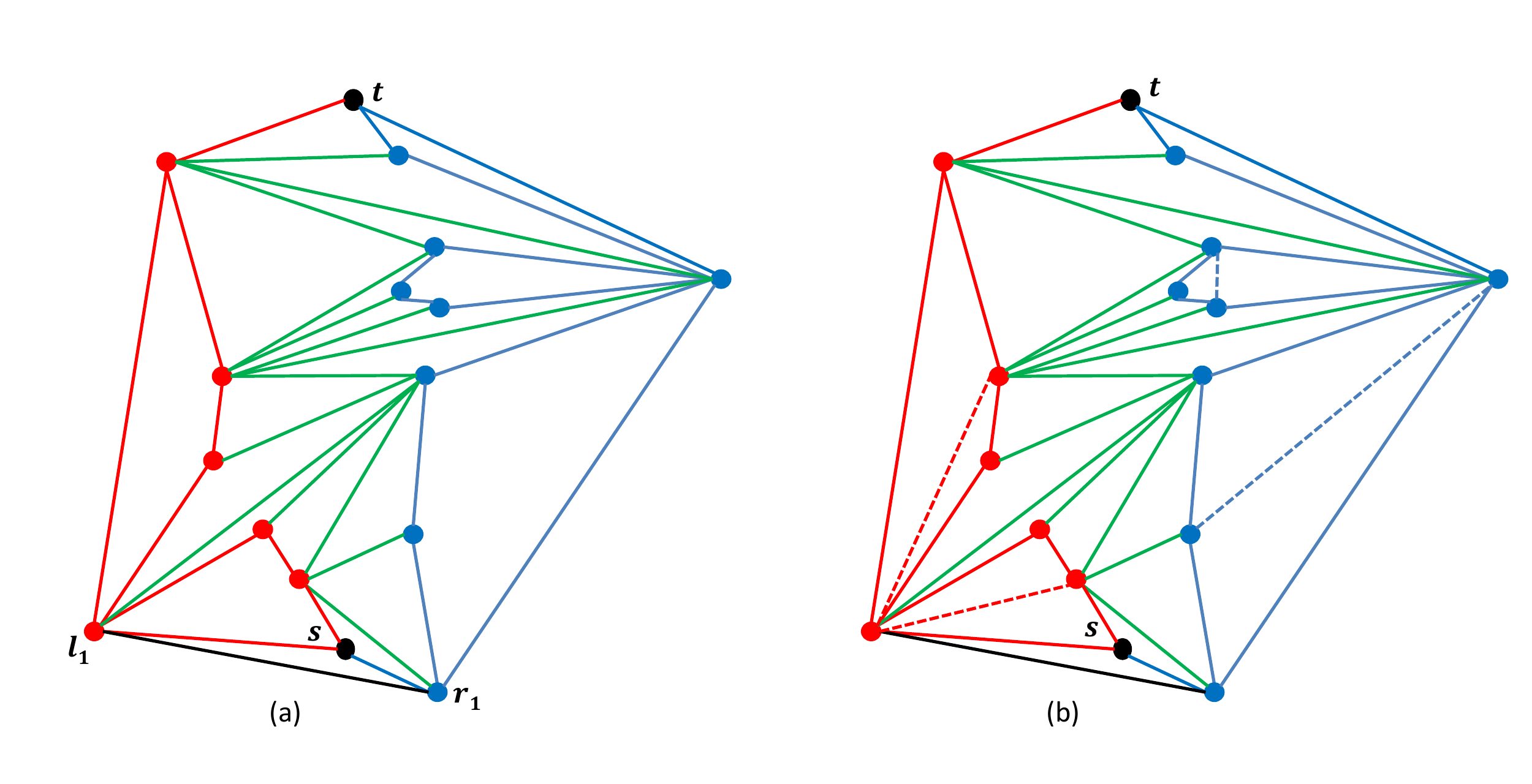}
  \caption{
  (a) $\hat{G}^+_{LR}$, where $G$ is in Fig.~\ref{fi:exampleBigAll0}; here $(\ell_1, r_1)$ is a \emph{cap} edge.
  (b) The \emph{inner graph} $\hat{G}_{in}$.}
  \label{fi:GplusLRGin}
\end{figure}

To following lemma states that the external face of $\hat{G}^+_{LR}$
is a simple cycle. It can be proved with the same method as used in
the proof of Lemma~\ref{le:at-least-one-inconsistent-face}. We can
show that the concatenation $C$ of the shortest paths $q_L$ and
$q_R$ is the boundary of the external face of $\hat{G}_{LR}$; thus
the external face of $\hat{G}_{LR}$ is a simple cycle.
 Since $\hat{G}^+_{LR}$ is an induced subgraph with the same
vertex set as $\hat{G}_{LR}$, it follows that the external face of
$\hat{G}^+_{LR}$ is a simple cycle.

\begin{lemma}
\label{le:simpleCycleOutside}
If $G$ is a maximal almost-planar $\mathbb{R}^2$-topological graph such
that every internal face of $G_{LR}$ is consistent, then the external face
of $\hat{G}^+_{LR}$ is a simple cycle.
\end{lemma}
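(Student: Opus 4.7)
The plan is to follow the approach used in the proof of Lemma~\ref{le:at-least-one-inconsistent-face}. First I will let $q_L$ be a shortest path from $s$ to $t$ in $\hat{G}_L$ and $q_R$ a shortest path from $t$ to $s$ in $\hat{G}_R$; maximality of $G$ guarantees these paths exist, and since $(s,t)\notin\hat{G}$, each path has length at least two, so $q_L$ uses at least one left vertex and $q_R$ at least one right vertex. Because $\hat{G}_L$ and $\hat{G}_R$ share only the vertices $s$ and $t$, the concatenation $C = q_L \cup q_R$ is a simple cycle.

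Next I will argue that $C$ is the boundary of the external face of $\hat{G}_{LR}$. By Lemma~\ref{le:outerplanar}, $\hat{G}_{LR}$ is outerplanar and the single face pierced by $(s,t)$ contains every left and every right vertex on its boundary. Since every internal face of $G_{LR}$ is consistent by hypothesis and, by Lemma~\ref{le:oneFaceIsInconsistent}, $G_{LR}$ has exactly one inconsistent face, in our chosen $\mathbb{R}^2$-embedding the pierced face of $\hat{G}_{LR}$ is precisely its external face. The paths $q_L$ and $q_R$ then walk along the two sides of this boundary between $s$ and $t$, so their concatenation is the simple cycle that bounds it.

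Finally I will transfer the conclusion from $\hat{G}_{LR}$ to $\hat{G}^+_{LR}$. The only edges added in passing from $\hat{G}_{LR}$ to $\hat{G}^+_{LR}$ are cap edges, each joining a left vertex to a right vertex without crossing $(s,t)$. Because the interior of the external face of $\hat{G}_{LR}$ contains the curve $(s,t)$, any edge drawn inside that face would have to cross $(s,t)$; hence no cap edge can lie in the external face, and every cap edge is confined to some internal face of $\hat{G}_{LR}$. Inserting such edges leaves the external boundary untouched, so the external face of $\hat{G}^+_{LR}$ is also bounded by the simple cycle $C$.

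The main obstacle I anticipate is rigorously identifying $C$ with the outer walk of $\hat{G}_{LR}$: one must rule out the possibility that the outer walk repeats a vertex or traverses some edge twice, which would prevent it from being a simple cycle. I expect this to be discharged by outerplanarity together with the observation that consistency of every internal face of $G_{LR}$ forces each left and each right vertex to appear exactly once on the outer boundary, so that this boundary is a genuine simple cycle consisting of a single left arc and a single right arc between $s$ and $t$.
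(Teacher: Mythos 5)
Your skeleton (shortest paths $q_L$, $q_R$, their concatenation $C$, then passing from $\hat{G}_{LR}$ to $\hat{G}^+_{LR}$) is exactly the paper's, but your justification hinges on a mis-identification of the external face. In the chosen embedding the external face of $G_{LR}$ is its unique inconsistent face, and the proof of Lemma~\ref{le:at-least-one-inconsistent-face} shows that this face lies on the side of $C$ that does \emph{not} contain $(s,t)$: the curve $(s,t)$, the crossing edges $e_i$, and in general many left and right vertices lie strictly inside $C$. The face of $\hat{G}_{LR}$ pierced by $(s,t)$ --- the one that, by Lemma~\ref{le:outerplanar}, carries every left and every right vertex on its boundary --- is therefore an \emph{internal} face of $\hat{G}_{LR}$, not its external face as you assert. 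This breaks your plan for the ``main obstacle'': it is simply false that every left and right vertex appears on the outer boundary (left/right vertices strictly inside $C$ are precisely why the inner graph has nontrivial side graphs with block-cutvertex trees). The correct route, implicit in the paper, is different: $C$ is chordless in $\hat{G}_{LR}$ (a chord inside $q_L$ or $q_R$ would shortcut a shortest path, a left-right chord is not an edge of $\hat{G}_{LR}$, and $(s,t)\notin\hat{G}_{LR}$), and outerplanarity forces all vertices of $\hat{G}_{LR}$ onto the closed side of $C$ containing the pierced face; hence the other side of $C$ is a face, it is the unique inconsistent face of $G_{LR}$, and so it is the external face, bounded by the simple cycle $C$.

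The same mix-up invalidates your cap-edge step: since $(s,t)$ does not lie in the external face of $\hat{G}_{LR}$, the claim that ``any edge drawn in that face would have to cross $(s,t)$'' has no support, so you have not excluded a cap edge drawn outside $C$ (necessarily with both endpoints on $C$). You either need a genuine argument confining cap edges to the interior of $C$, or you can observe that the conclusion does not require it: $\hat{G}^+_{LR}$ has the same vertex set as $\hat{G}_{LR}$, so any edge of $\hat{G}^+_{LR}$ reaching the exterior of $C$ is a chord drawn outside $C$ with both ends on $C$; the outermost such chords are pairwise non-crossing and each replaces an arc of $C$ on the outer boundary, which therefore remains a simple cycle. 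This is the content of the paper's terse remark that $\hat{G}^+_{LR}$ is an induced subgraph on the same vertex set as $\hat{G}_{LR}$; as written, your proposal does not establish the transfer step.
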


We call the external face of $\hat{G}^+_{LR}$ the \emph{separating
cycle} of the graph $G$. The topological subgraph consisting of the
separating cycle as well as all vertices and edges that lie outside
the separating cycle is the \emph{outer graph} $G_{out}$.
Fig.\ref{fi:exampleBigOuter} is an example of outer graph.

\begin{figure}[htb!]
  \centering
  \includegraphics[width=.8\columnwidth]{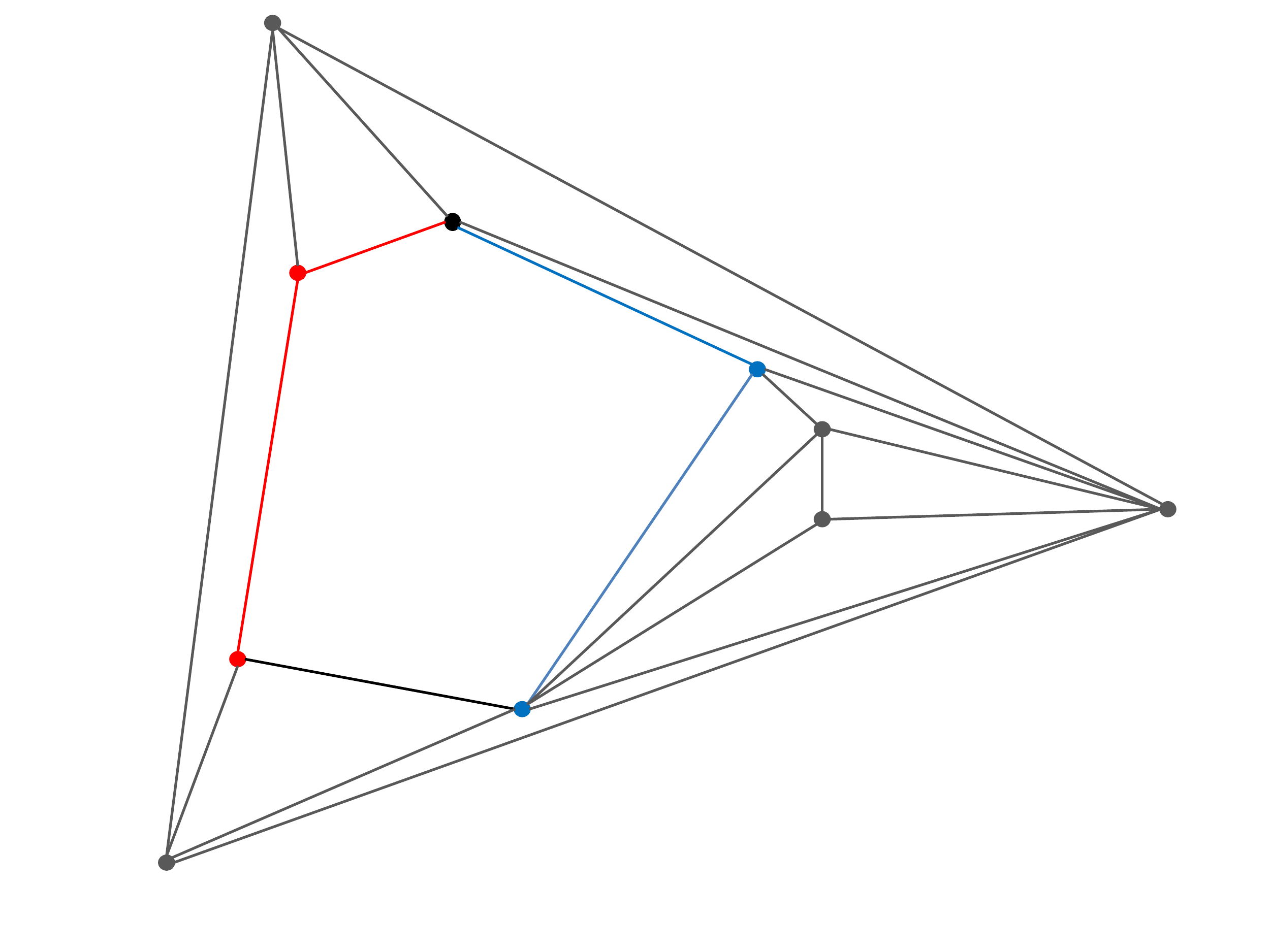}
  \caption{The outer graph for the graph in Fig.~\ref{fi:exampleBigAll0}(a).}
  \label{fi:exampleBigOuter}
\end{figure}

 The {\em inner graph} consists of
$\hat{G}^+_{LR}$ with the addition of some dummy edges. Namely, for
every face $f$ of $\hat{G}^+_{LR}$ that is not a
triangle, we perform a \emph{fan triangulation}; that is, we choose
a vertex $u$ of $f$ with degree 2 in $f$, and add dummy edges incident with $u$ to
triangulate $f$. The graph formed by fan triangulating every
non-triangular internal face of $\hat{G}^+_{LR}$ is the \emph{inner
graph} $\hat{G}_{in}$.

Note that the vertices of $G$ that are neither left vertices nor
right vertices and that are inside the separating cycle belong to
neither the inner nor the outer graph. At the end of next section we
show how to reinsert these vertices and their incident edges into
the drawing.

%

%

\subsection{Drawing the outer graph}
\label{ss:outerGraph}

Since $G$ is maximal almost-planar, by using Lemma~\ref{le:3n-6} we
can show that $G_{out}$ is triconnected as long as the separating
cycle has no chord. But since $G_{in}$ contains the subgraph of
$\hat{G}$ induced by the separating cycle, every chord on the
separating cycle is in $G_{in}$ and not in $G_{out}$. Thus $G_{out}$ is
triconnected. We use the linear-time convex drawing algorithm of
Chiba et al.~\cite{CYN} to draw $G_{out}$ such that every face in
the drawing is a convex polygon. This drawing of the outer graph has
a convex polygonal drawing of the separating cycle, which we shall
call the \emph{separating polygon}. In the next section we show how
to draw the inner graph such that its outside face (i.e. the
separating cycle) is the separating polygon.

\subsection{Drawing the inner graph}
\label{ss:innerGraph}

The overall approach for drawing the inner graph is described as
follows. For each edge $e$ of the separating cycle, we define a
``side graph'' $S_e$; intuitively, $S_e$ consists on vertices and edges that are ``close'' to $e$.
There may be two special side graphs, that
contain cap edges (that is, edges that join a left vertex and a
right vertex but do not cross $(s,t)$); these side graphs are ``cap
graphs''. Each side graph has a block-cutvertex tree $T_e$. We root
$T_e$ at the block (biconnected component) that contains the edge
$e$. The algorithm first draws the root block for each side graph,
then proceeds from the root to the leaves of these trees, drawing
the blocks one by one. Cap graphs are drawn with a different
algorithm from that used for other side graphs.

Each non-root block $B$ in $T_e$ with parent cutvertex $c$ is associated with circular arc
$\gamma(B)$, and two regions, called a ``safe wedge'' $\omega(B)$
and a ``safe moon'' $\mu(c)$; these are defined precisely below. We
draw all the vertices of $B$ and its descendants in $\mu(c)$,
with all vertices except $c$ lying on $\gamma(B)$
inside $\mu(c) \cap \omega(B)$. Every edge with exactly one endpoint
in $B$ and its descendants lies inside $\omega(B)$.

First the root blocks are drawn, and then the algorithm proceeds by
repeating the following steps until every vertex of every side graph
is drawn. (1) Choose a ``safe block'' $B$ from the child
blocks of drawn vertices; (2) Compute the ``safe moon'' $\mu(c)$,
the ``safe wedge'' $\omega(B)$, and the circular arc $\gamma(B)$;
(3) Draw each vertex of $B$ except $c$ on
$\gamma(B)$.

\smallskip\noindent{\bf Side graphs and cap graphs.} To define ``side
graphs'' and ``cap graphs'', we need to first define a certain
closed walk in the inner graph. Denote the edges that cross $(s,t)$
by $e_0, e_1, \dots , e_{p-1}$, ordered from $s$ to $t$ by their
crossing points along $(s,t)$.
Suppose that $e_i = (\ell_i , r_i)$ for $0 \leq i \leq p-1$, where
$\ell_i$ is a left vertex and $r_i$ is a right vertex. Note that
cyclic list $(s, \ell_0, \ell_1 , \ldots , \ell_{p-1}, t, r_{p-1},
\ldots , r_1 , r_0 )$ may contain repeated vertices.


 Now let $W$ be the sublist of $(s, \ell_0, \ell_1 ,
\ldots , \ell_{p-1}, t, r_{p-1}, \ldots , r_1 , r_0 )$ obtained by
replacing each contiguous subsequence of the same vertex by a single
occurrence of that vertex. Note that $W$ may contain repeated
vertices, but these repeats are not contiguous. Namely, $W$ is a
closed spanning walk of $\hat{G}_{LR}$. An
example of this walk $W$ is in Fig.~\ref{fi:walk}.

\begin{figure}[h!]
  \centering
  \includegraphics[width=0.8\columnwidth]{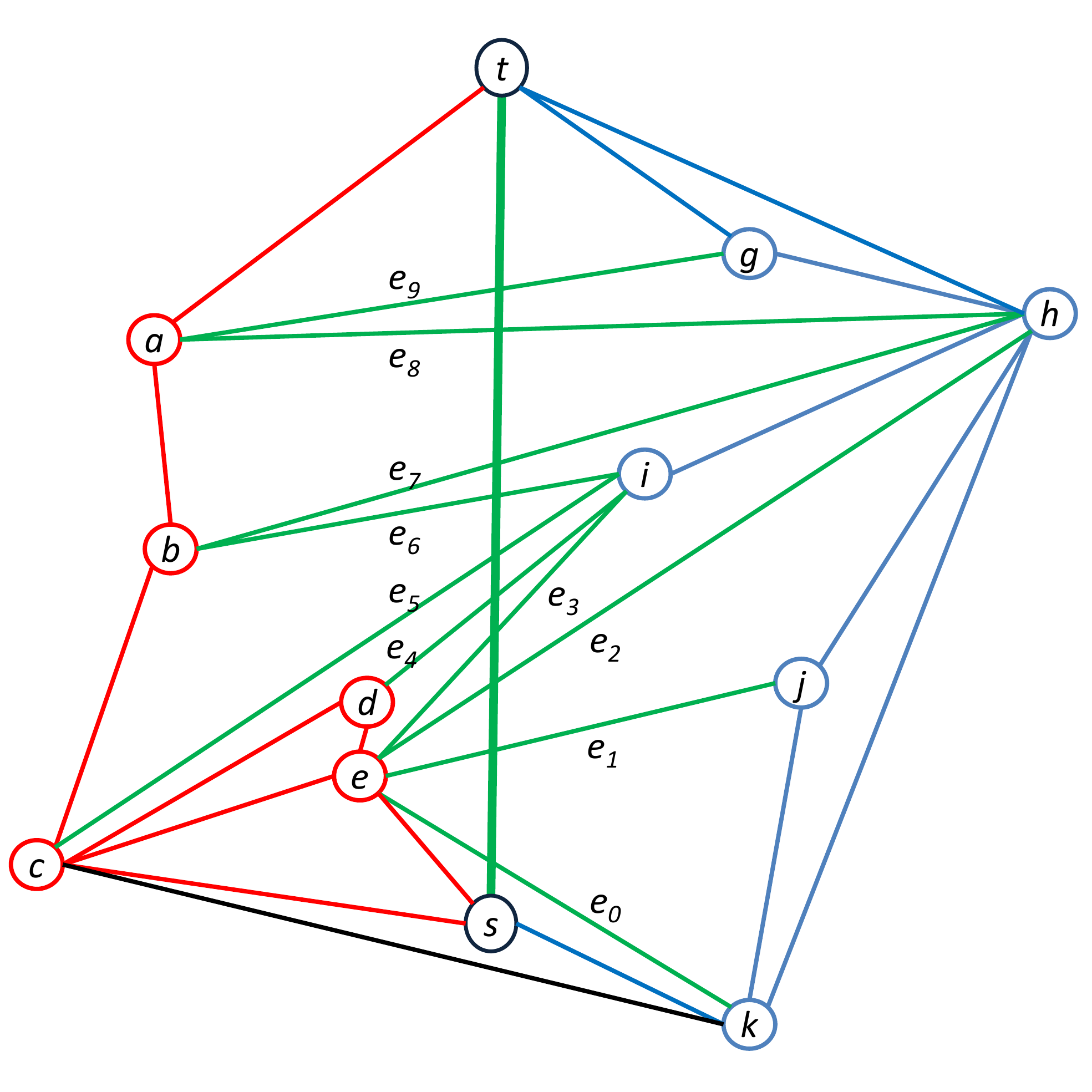}
  \caption{$W$ is the closed walk $(s,e,d,c,b,a,t,g,h,i,h,j,k)$.}
  \label{fi:walk}
\end{figure}

Now let $e = (u, v)$ be an edge of the separating cycle, with $u$
before $v$ in clockwise order around the separating cycle. Note that
both $u$ and $v$ are elements of the closed walk $W$. Suppose that
the clockwise sequence of vertices in $W$ between $u$ and $v$ is $(u
= u_1 , u_2 , \ldots , u_k = v)$. If $u$ occurs more than once in
$W$, then we choose $u_1$ to be the first occurrence of $u$ in
clockwise order after $s$; similarly choose $u_k$. The \emph{side
graph} $S_e$ is the induced subgraph of $G$ on $\{u_1 , u_2 , \ldots
, u_k \}$.

If $S_e$ contains both left and right vertices then it is a
\emph{cap graph}. Note that a cap graph contains either $s$ or $t$;
one can show that $s$ and $t$ are not in the same cap graph.
Examples of side graphs, including cap graphs, are in Fig.~\ref{fi:side}.

\begin{figure}[h!]
  \centering
  \includegraphics[width=0.8\columnwidth]{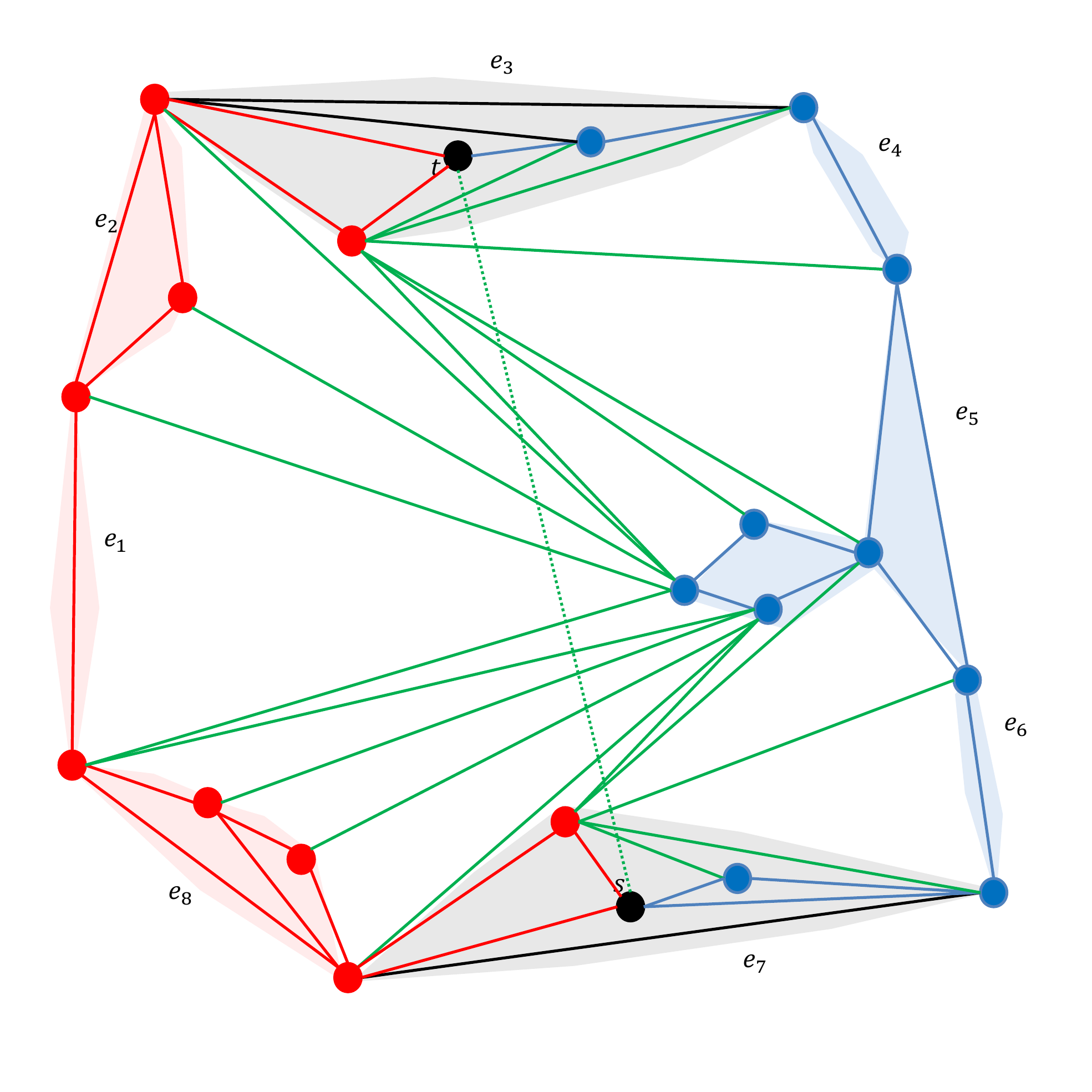}
  \caption{An inner graph with eight side graphs (shaded).
  This graph has two cap graphs $S_{e_3}$ and $S_{e_7}$.}
  \label{fi:side}
\end{figure}

%

\smallskip\noindent{\bf Drawing the root blocks of side graphs.}
Next we show how to draw the root block $B^*_e$ of the side graph
$S_e$. The edge $e$ is drawn as a side $\lambda_e$ of the separating
polygon. We define a circular arc $\gamma(B^*_e)$ through the
endpoints of $\lambda_e$, with radius chosen such that the maximum
distance from $\lambda_e$ to $\gamma(B^*_e)$ is $\epsilon_1$. We
will show how to choose $\epsilon_1$ later; for the moment, we
assume that $\epsilon_1$ is very small in comparison to the length
of the smallest edge of the separating polygon. The convex region
bounded by $\lambda_e$ and $\gamma(B^*_e)$ is called the
\emph{pillow} of $e$.

Suppose that $B^*_e$ of $S_e$ has $a+1$ vertices, which occur in
clockwise order on the closed walk $W$ as $w_0 , w_1, \ldots, w_a$.
Since $B^*_e$ is biconnected, this sequence is a Hamilton path of
$S_e$. We compute $a+1$ equally spaced points $\alpha(w_0),
\alpha(w_1), \ldots , \alpha(w_a)$ on $\lambda_e$ as in
Fig.~\ref{fi:pillows}(a). Let $\zeta(w_i)$ denote the line through
$\alpha(w_i)$ orthogonal to $\lambda_e$, as in
Fig.~\ref{fi:pillows}(a).

\begin{figure}[htb!]
  \centering
  \includegraphics[width=1\columnwidth]{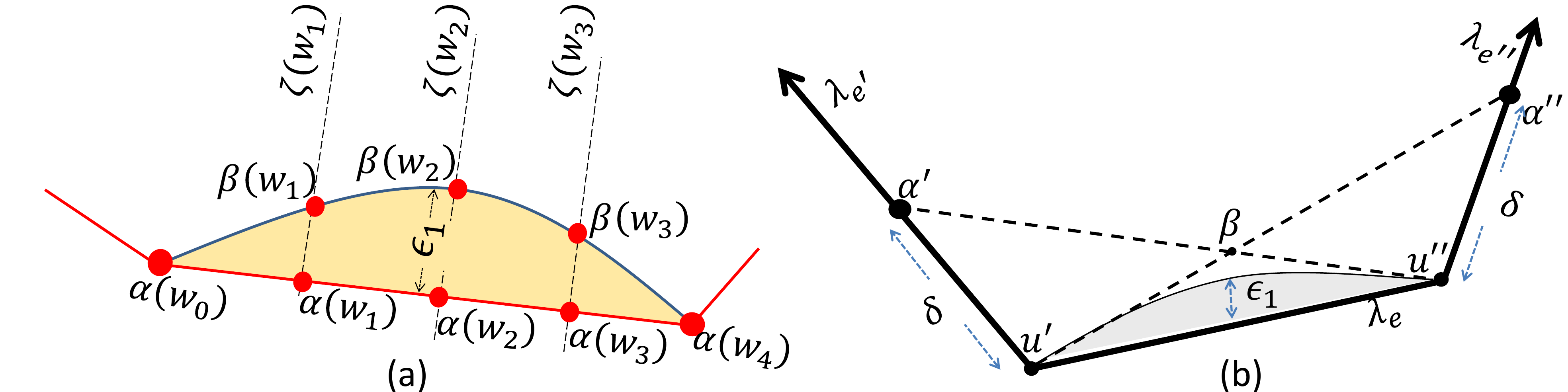}
  \caption{(a) A pillow. (b) Defining $\epsilon_1$.
  }
  \label{fi:pillows}
\end{figure}


If $S_e$ is not a cap graph, then we simply place vertex $w_i$ in
$B^*_e$ at the point $\beta(w_i)$ where $\zeta(w_i)$ intersects the
circular arc $\gamma(B^*_e)$ ($0 \leq i \leq a$). Note that the
edges of $S_e$ (which are chords on the Hamilton path $(w_0 , w_1,
\ldots, w_a )$) lie within the pillow of $e$.

If $S_e$ is a cap graph, then we place vertex $w_i$ on the line
$\zeta(w_i)$, but not necessarily at $\beta(w_i)$. First we define
an acyclic directed graph as follows.
We direct edges along the Hamilton path $(w_0 , w_1, \ldots, w_a )$
from $w_0$ to $w_a$, and direct other edges so that the result is a
directed acyclic graph $\overrightarrow{B^{*}_e}$ with a source at
$w_0$ and a sink at $w_a$. Note that $\overrightarrow{B^{*}_e}$ is a
\emph{leveled planar graph} with one vertex on each
level~\cite{DBLP:journals/jgaa/JungerL02}. One can use the algorithm
in~\cite{EFLN} to draw $\overrightarrow{B^{*}_e}$ so that there are
no edge crossings, vertex $w_i$ lies on the line $\zeta(w_i)$, and
the external face is a given polygon. We choose the external face to
be the convex hull of $\lambda_e$ and the points $\beta(w_i)$, $0 \leq i
\leq a$. Note that the vertices $w_0 , w_1, \ldots, w_a$ are in
monotonic order in the direction of the edge $(w_0, w_k)$. The
general picture after the drawing of the root blocks is illustrated
in Fig.~\ref{fi:pillowsGeneral}.

\begin{figure}[ht!]
\centering
 \includegraphics[width=.6\columnwidth]{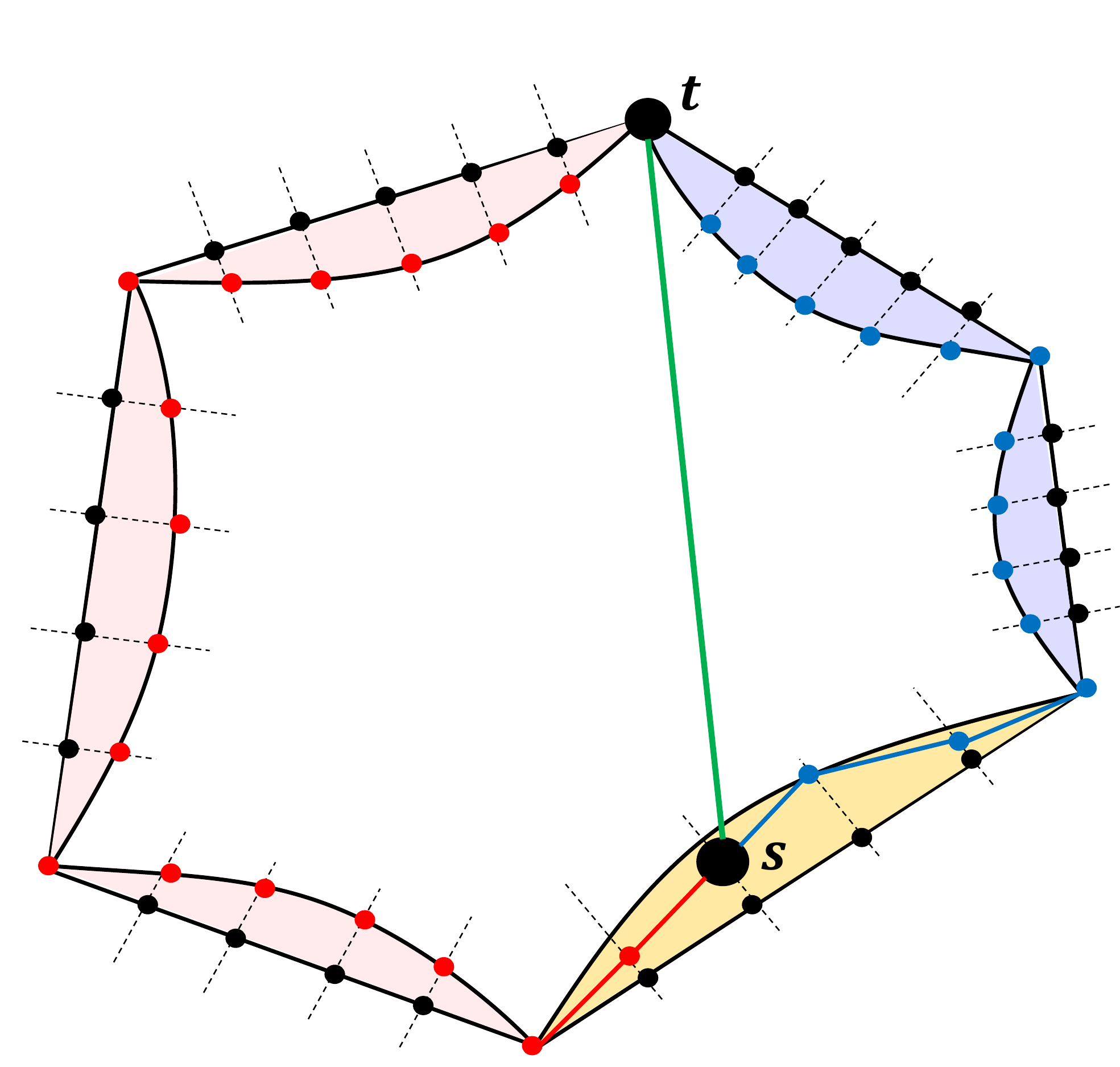}
  \caption{The general picture with pillows.}
  \label{fi:pillowsGeneral}
\end{figure}


Next we show how to choose $\epsilon_1$. Let $\delta$ denote $d/n$,
where $d$ is the minimum length of a side of the separating polygon,
and $n$ is the number of vertices in the graph. Suppose that
$\lambda_{e'}$, $\lambda_{e}$, and $\lambda_{e''}$ are three
consecutive sides of the separating polygon, as in
Fig.~\ref{fi:pillows}(b); we show how to choose $\epsilon_1$ for
the edge $e$. Suppose that the endpoints of $e$ are $u'$ and $u''$,
and $\alpha'$ and $\alpha''$ are points on $\lambda_{e'}$ and
$\lambda_{e''}$ distant $\delta$ from $u'$ and $u''$ respectively.
Suppose that the line from $u'$ to $\alpha''$ meets the line from
$u''$ to $\alpha'$ at $\beta$. Convexity ensures that $\beta$ is
inside the separating polygon, and thus $(u' , \beta , u'' )$ forms a
triangle inside the separating polygon. We choose $\epsilon_1$ so
that the circular arc $\gamma(B^*_e)$ through $u'$ and $u''$ lies
inside this triangle (meeting the triangle only on the line segment
$\lambda_{e}$). The reason for this choice of $\epsilon_1$ is to
ensure that all vertices in $B^*_e$ are so close to the side
$\lambda_e$ of the separating polygon that it is impossible for an
edge between different pillows to intersect with pillows other than
those at its endpoints.

\smallskip\noindent{\bf Safe blocks.}
To describe the algorithm for drawing the non-root blocks, we need
some terminology. Suppose that $c$ is a cutvertex in the side graph
$S_e$, and $B = (V_B, E_B)$ is a child block of $c$. Suppose that
$c$ is a left vertex. In the clockwise order of edges in $G$
around $c$, there is an edge $e_1 \not \in E_B$, followed by a
number of edges in $E_B$, followed by an edge $e_2 \not \in E_B$, as
illustrated in Fig.~\ref{fi:boundingEdges0}(a). We say that $e_1$
and $e_2$ are the \emph{bounding edges} of $B$. Note that a bounding
edge either crosses $(s,t)$, or has $s$ or $t$ as an endpoint.

\begin{figure}[ht!]
  \centering
  \includegraphics[width=1\columnwidth]{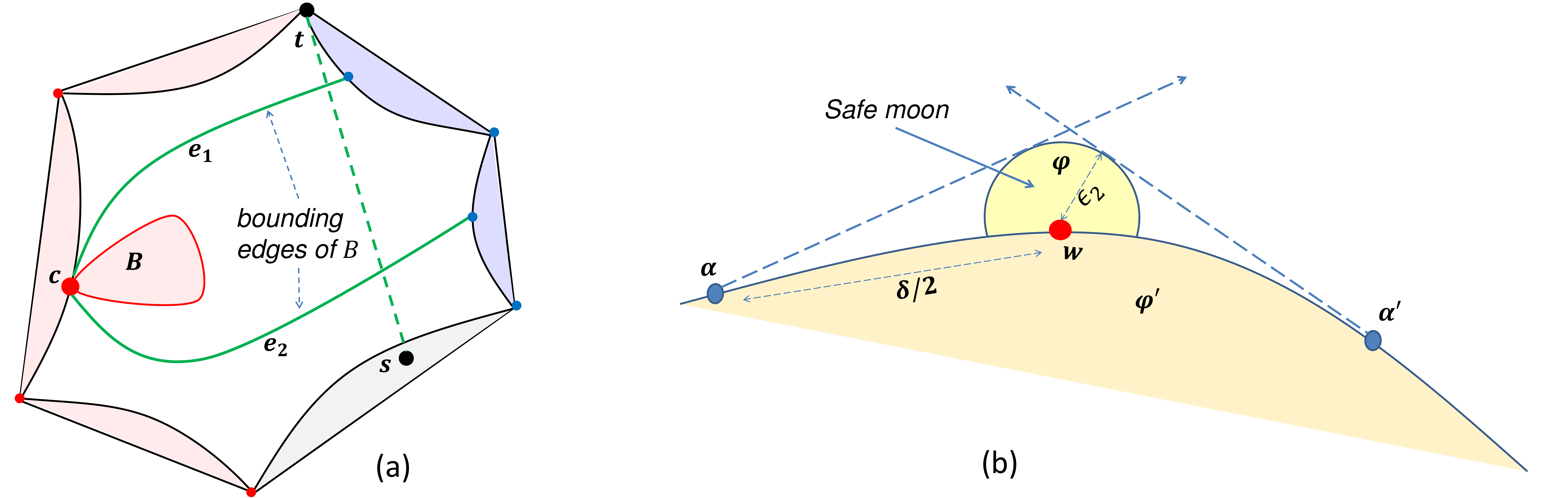}
  \caption{(a) Bounding edges of a block. (b)The safe moon $\mu(u)$ at $u$}.
  \label{fi:boundingEdges0}
\end{figure}

 At any stage of the drawing algorithm, a block may be
{\em safe} or {\em unsafe}. A block $B$ is \emph{safe} if the
following properties hold: (i) The parent cutvertex $c$ (that is,
the parent of $B$ in the block-cutvertex tree) has been drawn, and
the other vertices in $B$ are not drawn; (ii) Suppose that the
boundary edges of $B$ are $e_1 = (c, u_1)$ and $e_2= (c, u_2)$; let
$u'_1$ and $u'_2$ be the vertices which are the least already-drawn
ancestors of $u_1$ and $u_2$ respectively in their respective
block-cutvertex trees. Then we require that $u'_1 \not = u'_2$.

\begin{lemma}\label{le:safeBlockExists}
If there is an undrawn vertex, then there is a safe block.
\end{lemma}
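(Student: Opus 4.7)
The plan is to prove the lemma by a two-step argument: first exhibit a block satisfying condition (i) in the definition of safety, and then refine the choice so that condition (ii) is also met. For condition (i), observe that from any undrawn vertex $v$ one can walk up its block-cutvertex tree $T_e$ (in the side graph $S_e$ containing $v$) until reaching the first block whose parent cutvertex $c$ is already drawn; call this block $B$. By choice of the stopping point, $c$ is drawn while the remaining vertices of $B$ are not, so (i) holds automatically. This establishes a non-empty family $\mathcal{F}$ of ``candidate'' blocks at the current stage of the algorithm.

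The main work is to locate a candidate that also satisfies (ii). I would proceed by contradiction and a minimality argument: pick $B \in \mathcal{F}$ whose subtree in its block-cutvertex tree contains the fewest undrawn vertices, and suppose $B$ is not safe, so the bounding edges $e_1 = (c,u_1)$ and $e_2 = (c,u_2)$ yield least-drawn-ancestors $u'_1 = u'_2 = u'$. Because bounding edges of $B$ either cross $(s,t)$ or are incident to $s$ or $t$, the vertices $u_1, u_2$ lie in side graphs adjacent to $B$ (across the crossings with $(s,t)$), and the common ancestor $u'$ is a drawn vertex in one of these side graphs. The key step is to argue, using the planar/topological structure of $\hat{G}_{in}$ together with the walk $W$ defined around the separating cycle, that between $u_1$ and $u_2$ in the drawn region there must be a block $B'$ whose parent cutvertex is drawn, whose remaining vertices are not drawn, and whose subtree is strictly contained in that of $B$, contradicting the minimal choice of $B$. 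Concretely, the undrawn ``pocket'' enclosed by the two bounding edges and the drawn path joining $u_1$ to $u_2$ through $u'$ bounds a planar region whose inside is nonempty (because $B$ itself is undrawn), and every such planar pocket bounded by drawn vertices must contain a sub-candidate block.

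The hard part, I expect, is managing the interplay between the block-cutvertex trees of different side graphs: the two bounding edges of $B$ leave $S_e$ entirely, so $u_1$ and $u_2$ live in possibly different trees $T_{e_1}, T_{e_2}$, and tracking ``ancestors'' across this cross-graph setting is delicate. To make this precise, I would use the fact, established earlier in Section~\ref{ss:innerGraph}, that each side graph is attached to the separating cycle along a single edge and that $s,t$ belong to at most two (cap) side graphs, so the global picture of candidate blocks is organized by the rotation of bounding edges around $(s,t)$. Consecutive bounding edges in this rotation either share an endpoint (yielding a candidate whose bounding edges split, giving $u'_1 \ne u'_2$), or they bracket a sub-pocket that must contain a smaller candidate. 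Iterating this dichotomy inside the minimal pocket yields the desired safe block and completes the proof. \qed
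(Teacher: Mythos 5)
Your first step (walking up the block-cutvertex tree from an undrawn vertex until the parent cutvertex is drawn, to get a block satisfying condition (i)) is fine, and is in fact a point the paper leaves implicit. The gap is in the second step, the minimality argument. If a candidate block $B$ with parent cutvertex $c$ fails condition (ii), the new candidate you can actually extract is a block containing $u_1$ and $u_2$ with parent cutvertex $u'_1=u'_2$; since the bounding edges of $B$ cross $(s,t)$ (or end at $s$ or $t$), this block lies on the other side, in a different block-cutvertex tree. It can never be a descendant of $B$ — all vertices of $B$ other than $c$ are undrawn, so no descendant of $B$ has a drawn parent cutvertex and hence none is a candidate. Consequently the claimed ``block $B'$ whose subtree is strictly contained in that of $B$'' does not exist, and your measure (number of undrawn vertices in the subtree of the chosen candidate) has no reason to decrease when you pass to the new candidate: the two subtrees live in disjoint vertex sets on opposite sides of $(s,t)$ and are incomparable. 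So the extremal choice yields no contradiction as stated.

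The auxiliary ``pocket'' claim is also not established: in the relevant case $u_1$ and $u_2$ are themselves undrawn (their least drawn ancestor is $u'$), so the pocket is not ``bounded by drawn vertices,'' and in any case the pocket contains $B$ itself, so asserting it contains a candidate block gives back $B$ rather than a smaller one. What actually makes the argument terminate is a geometric nesting across the edge $(s,t)$, not containment within one tree: this is how the paper proceeds. It takes an arbitrary candidate $B$, and if $u'_1=u'_2$, considers the block $B'$ containing $u_1$ and $u_2$; planarity of the inner graph forces the bounding edges of $B'$ to avoid the vertices of $B$, so either $B'$ is safe or its bounding edges lead to child blocks of $c$, and the alternation between the two sides of $(s,t)$ strictly nests at each step (see Fig.~\ref{fi:safeBlock}), so after finitely many steps a block $B''$ with a bounding edge reaching a vertex whose least drawn ancestor differs from both $c$ and $u'_1$ is found, and that block is safe. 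If you want to keep your extremal formulation, you would need a measure that respects this cross-$(s,t)$ nesting — for instance the portion of $(s,t)$ spanned between the crossings of the two bounding edges — rather than subtree size inside a single block-cutvertex tree.
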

\begin{proof}
Let $B$ be a block with parent cutvertex $c$ and boundary edges $(c,
u_1)$ and $(c, u_2)$; let $u'_1$ and $u'_2$ be the least
already-drawn ancestors of $u_1$ and $u_2$ respectively in their
respective block-cutvertex trees. If $B$ is not a safe block, we
have that $u'_1 = u'_2$.

Two cases are possible: either $u_1$ and $u_2$ belong to the same
block with ancestor cutvertex $u'_1 = u'_2$ or they are in different
blocks.

\begin{figure}[ht!]
  \centering
  \includegraphics[width=.5\columnwidth]{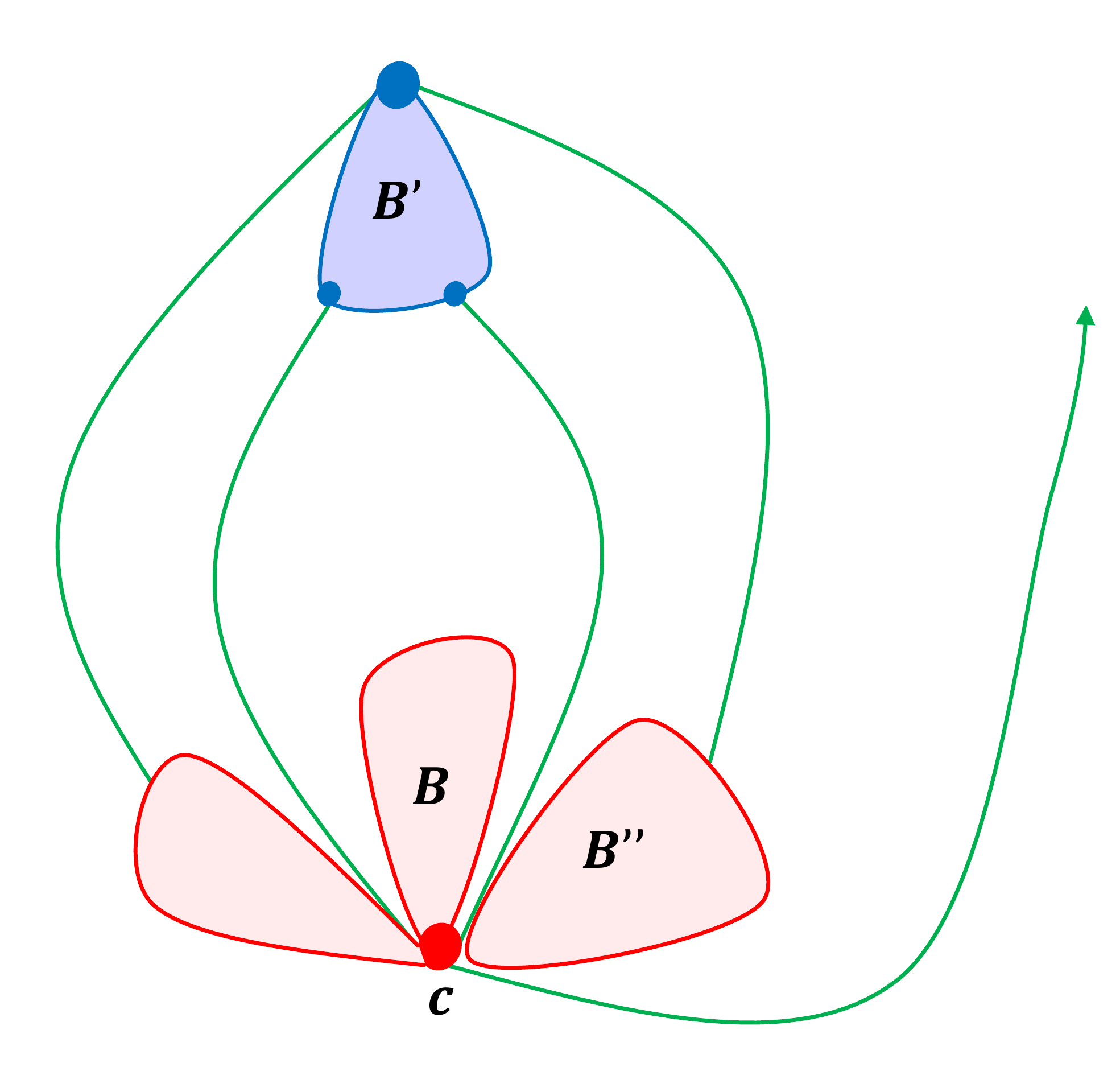}
  \caption{For the proof of Lemma~\ref{le:safeBlockExists}: either $B'$ or $B''$ is safe.}
  \label{fi:safeBlock}
\end{figure}

Consider the first case and refer to Fig.~\ref{fi:safeBlock}. Let
$B'$ be the block of both $u_1$ and $u_2$. Since the inner graph is
plane, neither boundary edge of $B'$ is incident to a vertex of $B$.
Either $B'$ is safe and we are done, or the boundary edges of $B'$
are also incident to undrawn vertices of two blocks that have $c$ as
their parent cutvertex. Again, either one of these blocks is safe
and we are done, or their boundary edges must be incident to undrawn
vertices of blocks whose least already-drawn ancestor is $u'_1$. By
repeating this argument, we find a block $B''$ having either $u'_1$
or $c$  as its parent cutvertex and such that at least one boundary
edge of $B''$ is incident to a vertex whose least already-drawn
ancestor differs from both $u_1'$ and $c$. Hence $B''$ is a safe
block.

With similar reasoning, the existence of a safe block can be
proved when $u_1$ and $u_2$ belong to different blocks with parent
cutvertex $u'_1 = u'_2$.\qed
\end{proof}

\smallskip\noindent{\bf Safe moon.}
Suppose that $w$ is a parent cutvertex for a safe block $B$; for the
moment we assume that $w$ is not on the separating cycle. Suppose
that the parent block of $w$ is $B'$; then $w$ has been drawn on the
circular arc $\gamma(B')$. Denote the circular disc defined by
$\gamma(B')$ by $\phi'$. Let $\phi$ be a circular disc of radius
$\epsilon_2$ with centre at $w$. We show how to choose $\epsilon_2$
later; for the moment we can assume that $\epsilon_2$ is very small
in comparison to the radius of $\gamma(B')$. The \emph{safe moon}
$\mu(w)$ for $w$ is the interior of $\phi - \phi'$; see
Fig.\ref{fi:boundingEdges0}(b).


 Now we show how to choose $\epsilon_2$. Again let
$\delta$ denote $d/n$, where $d$ denotes the minimum length of a
side of the separating polygon, and $n$ is the number of vertices in
the graph. Now consider two points $\alpha$ and $\alpha'$ at
distance $\frac{\delta}{2}$ from $u$. We choose $\epsilon_2$ small
enough that: (i) $\mu(w)$ at $u$ does not intersect the tangents to
$\gamma(B')$ at $\alpha$ and $\alpha'$; (ii) $\mu(w)$ does not
intersect the line through $s$ and $t$. Small adjustments to this
choice of $\mu(w)$ are required for the cases where $w$ is on the
separating cycle, and where $w$ is an endpoint of $\gamma(B')$.

A consequence of the definition of safe moon is the following: Let
$w_1$ and $w_2$ be vertices on the circular arcs $\gamma(B_1)$ and
$\gamma(B_2)$ for two blocks $B_1$ and $B_2$ that have been drawn.
Let $\alpha_1$ be a point in $\mu(w_1)$ and $\alpha_2$ be a point in
$\mu(w_2)$; the line segment between $\alpha_1$ and $\alpha_2$ does
not intersect any safe moon other than $\mu(w_1)$ and $\mu(w_2)$.

\smallskip\noindent{\bf Safe wedges.}
Suppose that the boundary edges of a non-root block $B$ are $e_1 =
(c, u_1)$ and $e_2= (c, u_2)$; let $u'_1$ and $u'_2$ be the vertices
which are the least drawn ancestors of $u_1$ and $u_2$ respectively
in their respective block-cutvertex trees. Since $B$ is safe, $u'_1
\not = u'_2$. For each point $\alpha_1$ (resp. $\alpha_2$) in
$\mu (u'_1)$ (resp. $\mu (u'_2)$), consider the wedge $\omega
(\alpha_1, \alpha_2)$ formed by the rays from $c$ through $\alpha_1$
and $\alpha_2$. The \emph{safe wedge} $\omega(B)$ of $B$ is the
intersection of all such wedges $\omega (\alpha_1, \alpha_2)$ with
the safe moon of $c$. This is illustrated in
Fig.~\ref{fi:circularArcGamma}(a).

\begin{figure}[htb]
  \centering
  \includegraphics[width=.8\columnwidth]{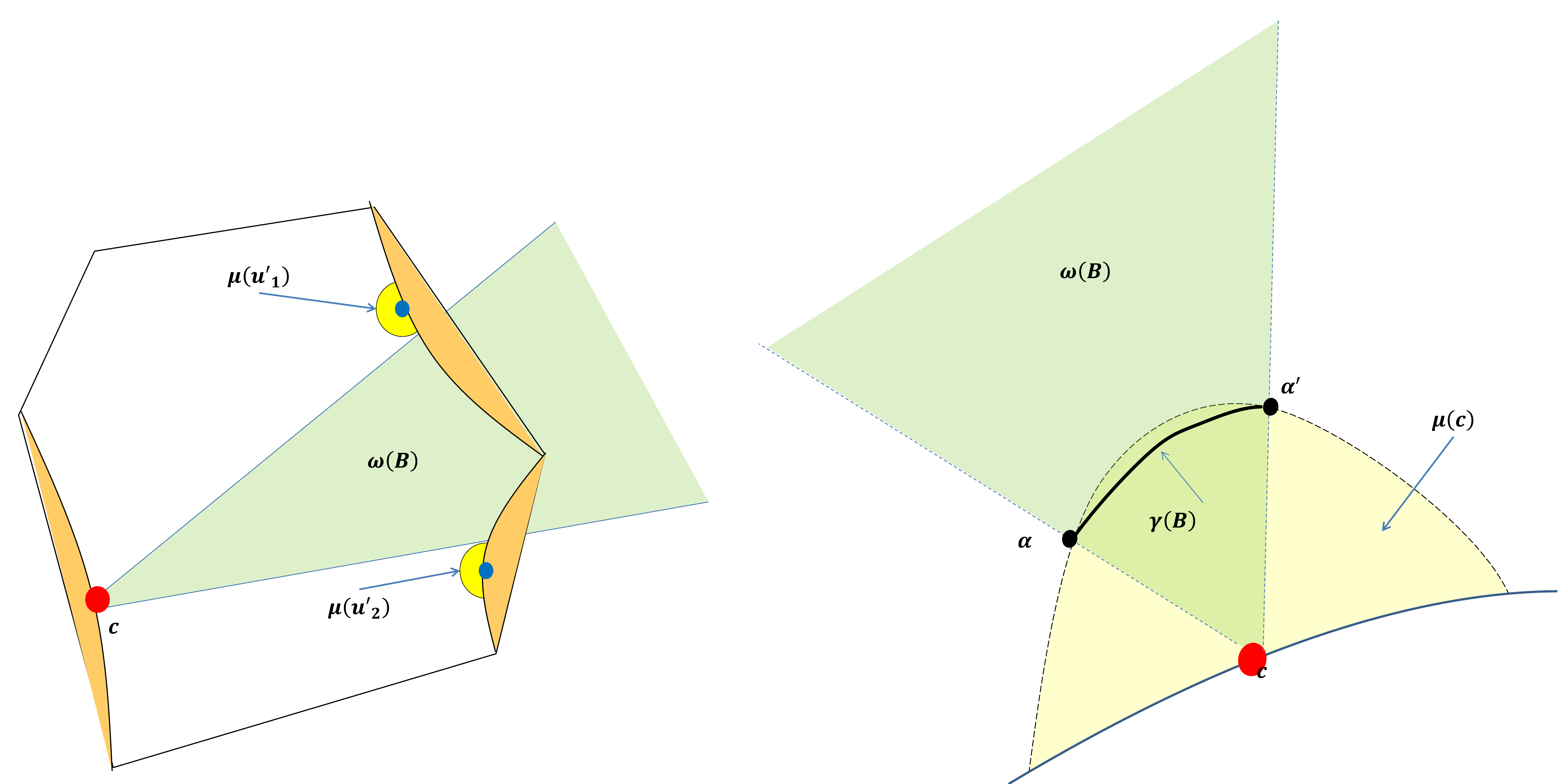}
  \caption{(a) A safe wedge. (b) The circular arc $\gamma(B)$.}
  \label{fi:circularArcGamma}
\end{figure}


\smallskip\noindent{\bf The circular arc $\gamma(B)$.}
Suppose that $B$ is a non-root block. We give a location to each
vertex in $B$ except the parent cutvertex $c$ (which is already
drawn). These vertices are drawn on a circular arc $\gamma(B)$,
defined as follows. Suppose that the boundaries of $\mu(c)$ and
$\omega(B)$ intersect at points $\alpha$ and $\alpha'$ as shown in
Fig~\ref{fi:circularArcGamma}(b). Then $\gamma(B)$ is a circular arc
that passes through $\alpha$ and $\alpha'$. The radius of
$\gamma(B)$ is chosen so that it lies inside $\mu(c)$, and it is
distant at most $\epsilon_1$ from the straight line between $\alpha$
and $\alpha'$. Here $\epsilon_1$ is chosen in exactly the same way
as for the root block.


\smallskip\noindent{\bf Putting it all together.}
In the construction of the inner graph in
subsection~\ref{ss:divide}, all vertices that are neither left nor
right are removed, and the resulting non-triangular faces are
fan-triangulated. These vertices can be drawn as follows. Each
fan-triangulated face, after removal of the dummy edges, is
\emph{star-shaped}. The non-aligned vertices (neither left nor right) that
came from this face form a triangulation inside the face. Thus we
can use the linear-time algorithm of Hong and Nagamochi~\cite{HN08}
to construct a straight-line drawing replacing the non-aligned vertices.
This concludes the proof of sufficiency of Theorem~\ref{th:s2}.

\section{Concluding Remarks}\label{se:wrap-up}

Assuming the real RAM model of computation, it can be proved that
all algorithmic steps presented in the previous section can be
executed in $O(n)$ time, where $n$ is the number of vertices of $G$.

\begin{theorem}\label{th:s2-drawing}
Let $G$ be an almost-planar $\mathbb{S}^2$-topological graph with
$n$ vertices such that every vertex of $G$ is consistent. There
exists an $O(n)$ time algorithm that computes an
$\mathbb{S}^2$-embedding preserving straight-line
 drawing of $G$.
\end{theorem}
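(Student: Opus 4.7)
The plan is to establish the stated running time by a stage-by-stage analysis of the constructive proof of sufficiency given in Section~\ref{se:s2}, showing that each step takes time linear in the portion of the graph it touches, so that the total remains $O(n)$ on the real RAM.

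First I would handle the preprocessing stages. The augmentation is already linear by Lemma~\ref{le:augmentation}. To choose the external face, I would compute $V_L$, $V_R$, and $G_{LR}$ by a single traversal of the crossings along $(s,t)$, then walk around each face of $G_{LR}$ at most once, checking consistency until the unique inconsistent face guaranteed by Lemma~\ref{le:oneFaceIsInconsistent} is located. Building $\hat{G}^+_{LR}$ and fan-triangulating its non-triangular internal faces to obtain $\hat{G}_{in}$ costs time proportional to the total face size, and the separating cycle is read off directly as the external boundary; all of this is $O(n)$.

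Next I would draw the outer graph. Since $G_{out}$ is triconnected (by the argument of Section~\ref{ss:outerGraph} together with Lemma~\ref{le:3n-6}), the convex drawing algorithm of Chiba et al.~\cite{CYN} produces the separating polygon in linear time. Each $\epsilon_1$ for an edge of the separating polygon is computed in $O(1)$, so the local geometric data associated with all pillows costs $O(n)$ in total. For the inner graph I would precompute the block-cutvertex trees of the side graphs in linear time by a standard DFS, root each at the block containing its edge of the separating polygon, and draw the root blocks first: non-cap root blocks by projecting $|B^*_e|$ equally spaced points from $\lambda_e$ onto $\gamma(B^*_e)$, and cap root blocks by invoking the leveled-planar algorithm of~\cite{EFLN} on $\overrightarrow{B^*_e}$. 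Both run in time proportional to the block size. The non-root blocks are then processed in an order consistent with the block-cutvertex trees, drawing each safe block $B$ by computing $\mu(c)$, $\omega(B)$, and $\gamma(B)$ in $O(1)$ real-RAM operations and placing each vertex of $B$ on $\gamma(B)$ at unit cost. Since each vertex belongs to $O(1)$ blocks and each edge to a single block, these per-block costs sum to $O(n)$.

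The hard part will be organising the selection of safe blocks so that the overall cost is amortised constant per block. My plan is to maintain, for every undrawn block on the frontier, pointers to the least already-drawn ancestors $u'_1$ and $u'_2$ of the vertices at the far ends of its two bounding edges; these pointers need updating only when a new block is drawn, at a cost proportional to the number of vertices newly committed. A block becomes safe exactly when $u'_1 \neq u'_2$, and Lemma~\ref{le:safeBlockExists} ensures that the frontier is never empty while undrawn vertices remain, so it can be realised as a queue of safe blocks with constant insertion and extraction cost. Finally, the non-aligned vertices inside each star-shaped face are reinserted in linear time using Hong and Nagamochi's algorithm~\cite{HN08}. Summing the contributions over all stages yields the claimed $O(n)$ bound.
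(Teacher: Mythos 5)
Your proposal follows essentially the same route as the paper, whose proof of Theorem~\ref{th:s2-drawing} consists precisely of the observation that every stage of the construction in Section~\ref{se:s2} (augmentation via Lemma~\ref{le:augmentation}, choice of the external face, splitting, the convex drawing of $G_{out}$ by~\cite{CYN}, the block-by-block drawing of the inner graph, and the reinsertion of non-aligned vertices via~\cite{HN08}) runs in $O(n)$ time on the real RAM; your stage-by-stage accounting is a correct elaboration of exactly this. One small slip: a cutvertex may belong to arbitrarily many blocks, so ``each vertex belongs to $O(1)$ blocks'' is false as stated, but your conclusion survives because the total size of all blocks in a block-cutvertex tree is $O(n)$ (each edge lies in one block and the number of blocks is at most $n-1$).
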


The real RAM model of computation allows for exponentially bad
resolution in the drawing. The next theorem shows that such
exponentially bad resolution is inevitable in the worst case. The
construction of the family of almost-planar graphs for
Theorem~\ref{th:exponential} is based on a family of upward planar
digraphs first described by Di Battista et al.~\cite{DTT}. The graph
$G_1$ is illustrated in Fig.~\ref{fi:exponentialV2}(a), and for
$k>1$, the graph $G_k$ is formed from $G_{k-1}$ as illustrated in
Fig.~\ref{fi:exponentialV2}(b).
\begin{figure}[ht!]
  \centering
  \includegraphics[width=.9\columnwidth]{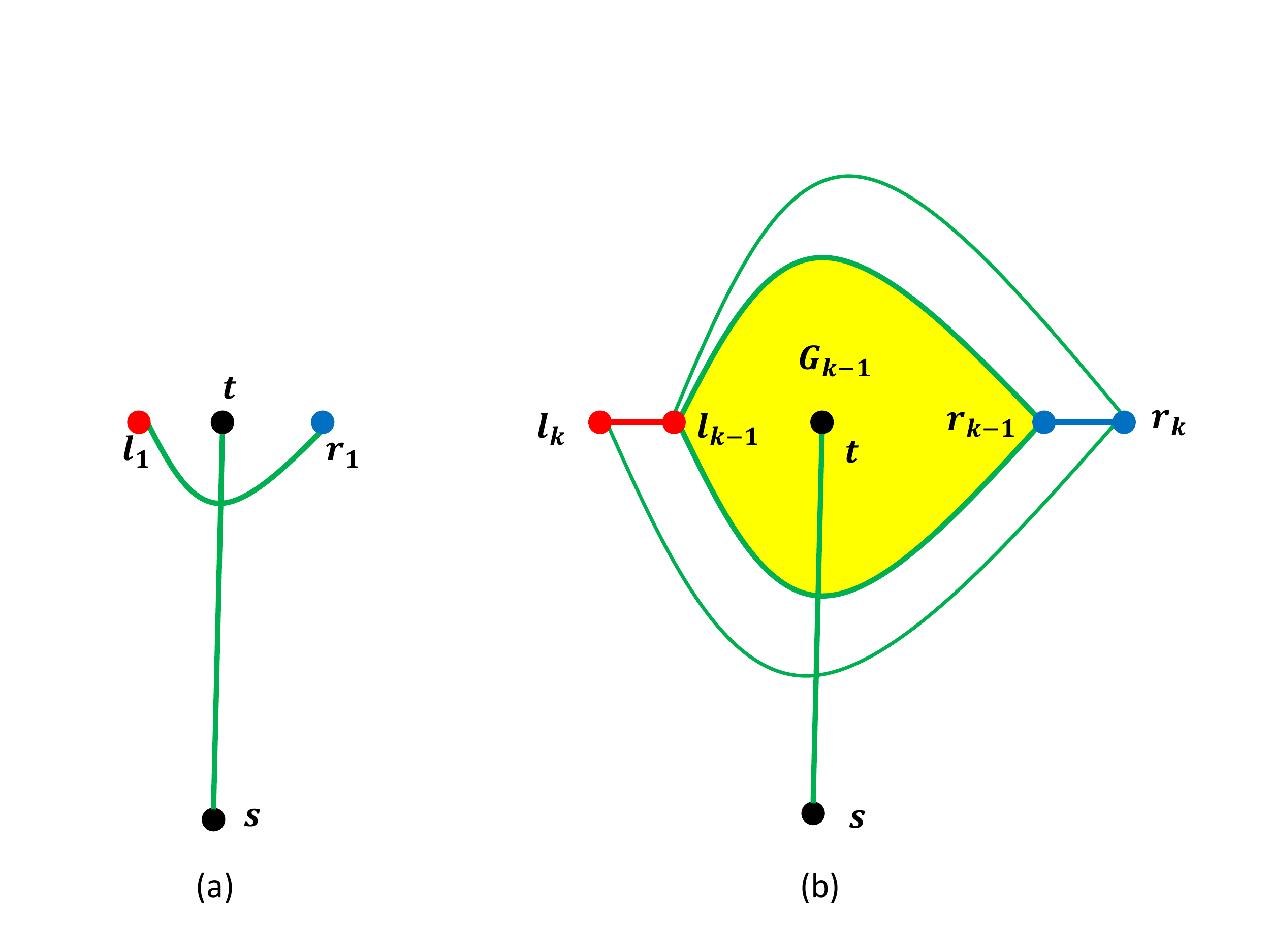}
  \caption{A family of graphs that require exponential area: (a) $G_1$; (b) Creating $G_k$ from $G_{k-1}$.}
  \label{fi:exponentialV2}
\end{figure}
The graph $G_5$ is illustrated in Fig.~\ref{fi:exponentialExample}.
\begin{figure}[ht!]
  \centering
  \includegraphics[width=.9\columnwidth]{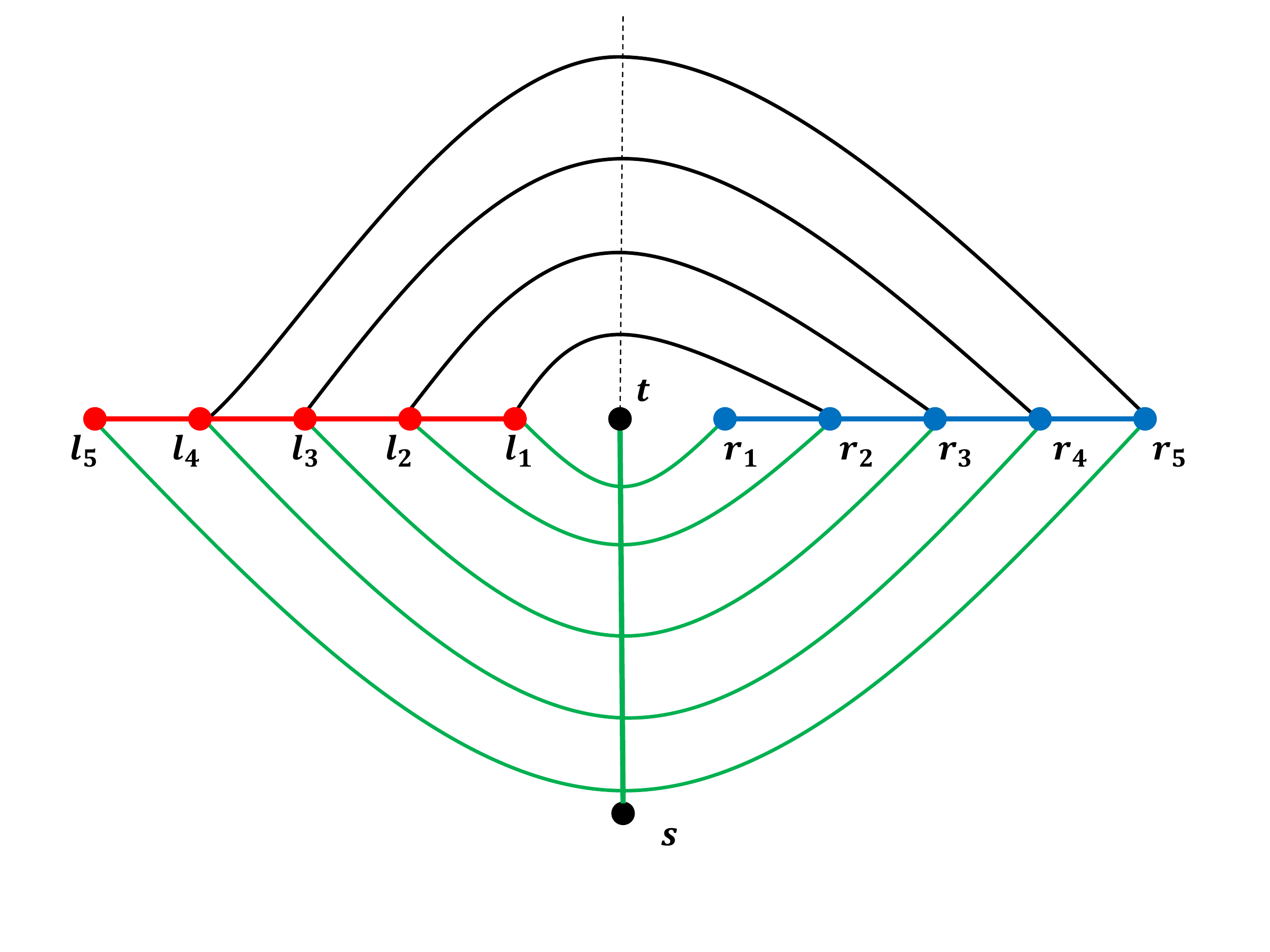}
  \caption{$G_5$.}
  \label{fi:exponentialExample}
\end{figure}

Note that $G_k$ consists of a path $(\ell_k , \ell_{k-1}, \ldots,
\ell_1, r_1, r_2, \ldots, r_k)$ and a number of chords on that path.
Some chords cross $(s,t)$ and some do not. The chords form a kind of
``spiral'' path $(r_1, \ell_1, r_2, \ell_2, \ldots , r_{k-1} ,
\ell_k, r_k)$, alternating between edges that cross $(s,t)$ and
edges that do not.

It is easy to see (from Theorem~\ref{th:s2}) that $G_k$ has a
straight-line drawing.

\begin{theorem}
\label{th:exponential} For each $k \geq 1$, there is an
almost-planar $\mathbb{S}^2$-topological graph $G_k$ with $2k+1$
vertices, such that any $\mathbb{S}^2$-embedding preserving
straight-line
 drawing of $G_k$
requires area $\Omega ( 2^k )$ under any resolution rule.
\end{theorem}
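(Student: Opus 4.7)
The plan is to fix any embedding-preserving straight-line drawing $\Gamma$ of $G_k$ and to extract from it a chain of nested geometric regions whose linear dimensions decay by a constant factor at every level of the spiral, yielding an overall ratio of $2^{\Omega(k)}$ between the longest and the shortest features of $\Gamma$. To set up coordinates I would place $s = (0,0)$ and $t = (0,-1)$, so that the segment $\overline{st}$ lies on the $y$-axis. Since the topological embedding has all $\ell_i$ on one side of $(s,t)$ and all $r_i$ on the other, in $\Gamma$ every $\ell_i$ lies in the open left half-plane and every $r_i$ in the open right half-plane; furthermore the crossing points $c_1, c_2, \ldots, c_k$ of the edges $e_i = (\ell_i, r_i)$ with $\overline{st}$ must appear in this order from $s$ to $t$.

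Next I would exploit the non-crossing spiral chords $(\ell_i, r_{i+1})$. Because the straight-line realization of each such chord joins a vertex in the left half-plane to one in the right half-plane, it must cross the $y$-axis; the embedding forbids that crossing from lying inside $\overline{st}$, so it must lie strictly above $s$ or strictly below $t$, according to which pole the chord goes around in the given embedding. Translating this into a geometric condition on the endpoints, the segment $\overline{\ell_i r_{i+1}}$ is forced to lie entirely outside the triangle $s \ell_i c_i$ (or, in the other case, $t r_i c_i$), which confines $r_{i+1}$ to a narrow cone based at the corresponding pole. A symmetric argument using the spiral chord incident to $\ell_{i+1}$ gives an analogous conic constraint on $\ell_{i+1}$.

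Then I would combine the cone constraints inductively to establish halving. Let $d_i$ denote the minimum of the distances $|sc_i|$ and $|tc_i|$ along $\overline{st}$. I would show that the geometric constraint imposed by the spiral chord, combined with the requirement that $e_{i+1}$ cross $\overline{st}$ strictly between $c_i$ and $t$, forces $d_{i+1} \leq d_i/2$. Iterating across the $k$ levels gives $d_k \leq d_1 / 2^{k-1}$, so that the smallest linear feature of $\Gamma$ is a factor $2^{\Omega(k)}$ smaller than $|st|$. Under any resolution rule the minimum feature has size bounded below by a constant, so the bounding box of $\Gamma$ must have side length $2^{\Omega(k)}$ in at least one direction, and hence area $\Omega(2^k)$.

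The main obstacle is pinning down the exact halving. The spiral chord $(\ell_i, r_{i+1})$ and the next crossing edge $e_{i+1}$ interact nontrivially, and I expect a small case analysis depending on whether a given chord spirals around $s$ or around $t$; in the worst subcase $d_i$ might not literally halve at each step but must halve over two consecutive steps, which still yields the same asymptotic bound. A secondary issue is that the theorem asserts exponential area \emph{under any resolution rule}, so the halving argument must be phrased purely in terms of ratios of linear dimensions rather than absolute vertex separations; this should follow from the scale invariance of the cone and pillow-like nested regions built above.
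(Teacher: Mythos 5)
Your route is genuinely different from the paper's: the paper orients $G_k$, uses the nested-triangle structure only to show that any embedding-preserving drawing induces an upward drawing of a subgraph isomorphic to the Di Battista et al.\ graph, and then invokes their exponential-area theorem as a black box, handling the free choice of external face by a case analysis (external face $F_j$ with $j>\frac{k}{2}$ or $j\le\frac{k}{2}$, restricting to a sub-spiral isomorphic to $G_j$ or $G_{k-j}$). You instead aim at a self-contained geometric proof via a halving invariant for the crossing points on $\overline{st}$. Your set-up (left vertices strictly in the left half-plane, crossings $c_1,\dots,c_k$ ordered along $\overline{st}$, spiral chords forced to cross the supporting line strictly above $s$ or strictly below $t$) is correct and matches the paper's opening observations. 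But the proof has a genuine gap exactly where you flag it: the inequality $d_{i+1}\le d_i/2$ is never established, and it \emph{is} the theorem --- everything else in the outline is routine. As stated the invariant is also doubtful. First, $\min(|sc_i|,|tc_i|)$ is not the right quantity: the spiral nests around one pole, and only the distances to that pole can be expected to decay geometrically; for crossings in the half of $\overline{st}$ nearer the other pole the minimum is the distance to that other pole, which can even increase from one level to the next. Second, ``which pole a chord goes around'' is not determined chord-by-chord by the $\mathbb{S}^2$-embedding, as your wording suggests: it depends on which face is external in the drawing, and since only the $\mathbb{S}^2$-embedding is preserved the external face can be any face $F_j$ of the spiral. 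In that case part of the spiral nests around $t$ and part around $s$, your chain of halvings breaks at level $j$, and at best you get the bound for the longer sub-spiral --- precisely the case analysis the paper performs and your proposal omits.

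Even with the external face fixed to the ``natural'' one, proving the decay toward the winding pole requires real work: one must show that the edge $(\ell_{i+1},r_{i+1})$, which has to clear the already-drawn triangle on $\ell_i$, $r_i$, $r_{i+1}$ on both sides of the line through $s$ and $t$, forces a multiplicative relation of roughly the form $|tc_{i+1}|\ge 2\,|tc_i|+|tq_i|$, where $q_i$ is the point where the chord $(\ell_i,r_{i+1})$ meets the line beyond $t$. Your hedge that the quantity ``might only halve over two steps'' indicates you do not yet have such an inequality, and obtaining it essentially amounts to re-proving the Di Battista--Tamassia--Tollis lower bound in this setting. That may well be doable, and would yield a self-contained alternative to the paper's reduction, but as written the central lemma is conjectured rather than proved, so the proposal is an outline, not a proof. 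The final step (an exponential ratio between two distances forces $\Omega(2^k)$ area under any resolution rule) is fine once such a ratio is actually in hand.
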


\begin{proof}
Firstly we orient the edges of $G_k$ to obtain an acyclic digraph.
Consider Fig.~\ref{fi:exponentialV2}(b). Note that $\ell_i$ is a
left vertex and $r_i$ is a right vertex for each $i$; we orient
edges from the left vertex to the right vertex. Also, we orient each
edge $(\ell_{i}, \ell_{i-1})$ from  $\ell_{i-1}$ to $\ell_{i}$ and
each edge $(r_i,r_{i-1})$ from $r_i$ to $r_{i-1}$ ($1 < i \leq k$).

Let $\overrightarrow{G}_k$ denote the resulting directed acyclic
subgraph. Note that $\overrightarrow{G}_k$ is similar to the
$n$-vertex graph used by Di Battista et al.~\cite{DTT} to show that
upward planar drawings require $\Omega( 4^n )$ area; in fact
deletion of the edge $(s,t)$, and the vertices $s$ and $t$, yields
the Di Battista et al. graph.

Now suppose that the $\mathbb{S}^2$-topological graph $G_k$ has a
straight-line drawing $\Gamma_k$. Assume w.l.o.g. that edge $(s,t)$
is vertical in $\Gamma_k$, and every left vertex $\ell_i$ is drawn
in the left half-plane defined by the line through $(s,t)$.

The drawing $\Gamma_k$ has the same $\mathbb{S}^2$-embedding as
$G_k$, but may have a number of different $\mathbb{R}^2$-embeddings,
depending on the choice of the external face. Consider first the
simplest case, where $\Gamma_k$ has the $\mathbb{R}^2$-embedding as
shown in Fig.\ref{fi:exponentialV2}(b). That is, the external face
contains the vertices $s$, $\ell_k$, $r_k$, and $r_{k-1}$. We next
show that, in this case, $\Gamma_k$ is an upward planar drawing.

For every vertex $\ell_i$ ($1 < i < k$) in $\Gamma_k$, let $p$ be
the point where edge $(r_i, \ell_i)$ crosses edge $(s,t)$, and let
$q$ be the point where edge $(\ell_i, r_{i+1})$ crosses the line
through $(s,t)$. Similarly, let $p'$ be the point where edge
$(r_{i+1}, \ell_{i+1})$ crosses the line through the edge $(s,t)$,
and let $q'$ be the point where edge $(\ell_{i+1}, r_{i+2})$ crosses
the line through the edge $(s,t)$.

Since $\Gamma_k$ has the same $\mathbb{R}^2$-embedding of $G_k$ in
Fig.\ref{fi:exponentialV2}(b), the triangle $(\ell_i,p,q)$ is inside
the triangle $(\ell_{i+1},p',q')$, sharing only a portion of the
line through $s$ and $t$. Since this line is vertical, it follows
that the $x$-coordinate of $\ell_{i+1}$ is smaller than the
$x$-coordinate of $\ell_{i}$ in $\Gamma_k$. By a similar argument it
can be proved that the $x$-coordinate of $r_i$ must be smaller than
the $x$-coordinate of $r_{i+1}$ in $\Gamma_k$.

Hence the left-right subgraph has an upward drawing in $\Gamma_k$;
from the Theorem of Di Battista et al.~\cite{DTT}, the area is of
$\Gamma_k$ is $\Omega( 4^{2k+1} )$; this is $\Omega( 2^{k} )$.

We now consider the more general case, where the external face of
$\Gamma_k$ is not necessarily the same as in
Fig.\ref{fi:exponentialV2}(b). Since $\Gamma_k$ has the same
$\mathbb{S}^2$-embedding as $G_k$, the external face of $\Gamma_k$
should include one of the faces of the graph $\hat{G}_k$ formed from
$G_k$ by deleting the edge $(s,t)$. From Lemma~\ref{le:cycle}, the
external face cannot include an edge that crosses $(s,t)$; thus the
external face must be $F_j = (\ell_{j-1}, \ell_{j}, r_{j+1}, r_j)$
for some $1 < j < k$. As an example, the face $F_3$ is shown in
Fig.~\ref{fi:exponentialFaceF3}.
\begin{figure}[ht!]
  \centering
  \includegraphics[width=.9\columnwidth]{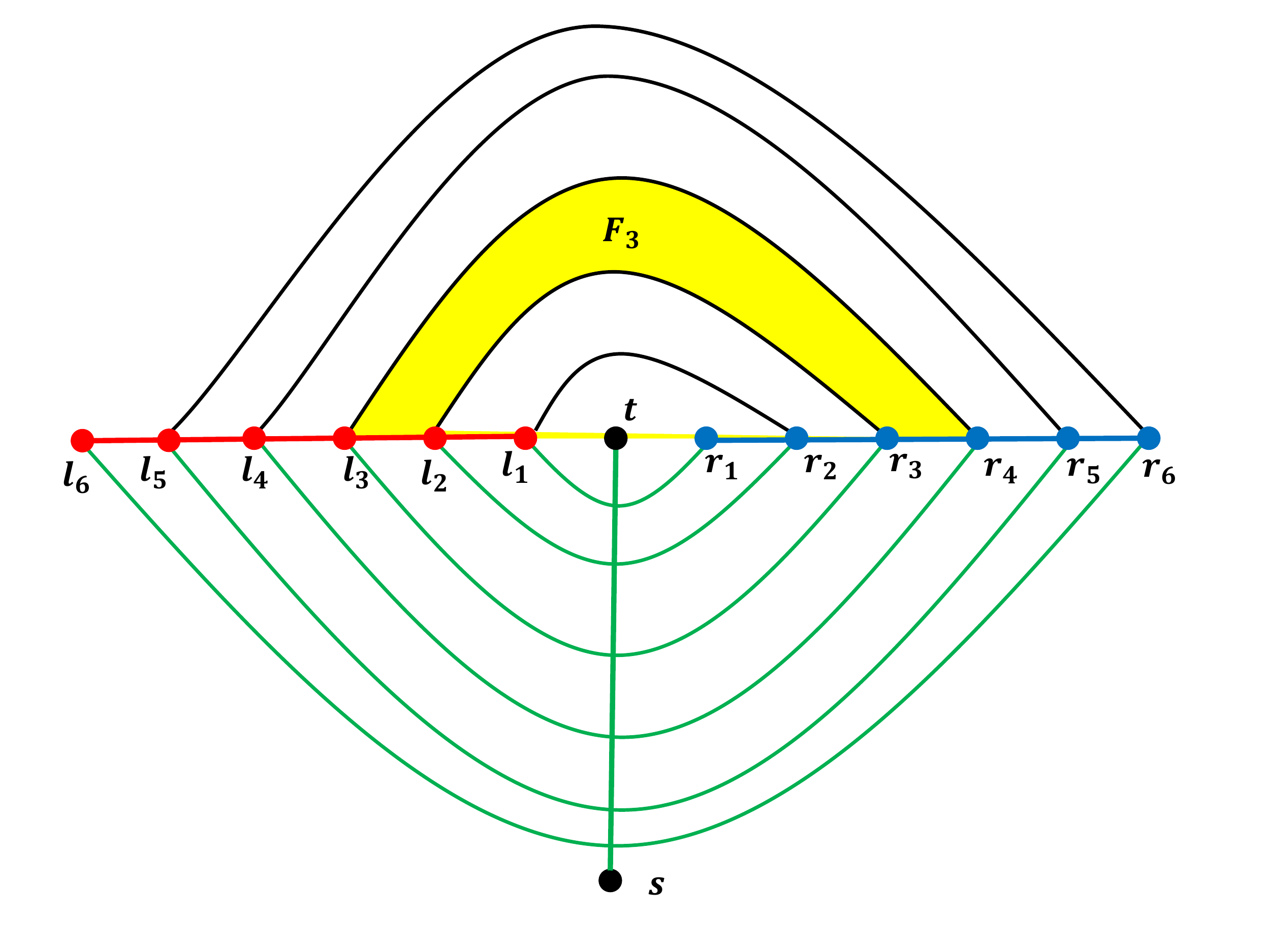}
  \caption{Face $F_3$ in $G_6$.}
  \label{fi:exponentialFaceF3}
\end{figure}

Now we consider two possibilities for the face $F_j$:
$j>\frac{k}{2}$, and $j \leq \frac{k}{2}$.

If $j>\frac{k}{2}$, then we consider the subgraph of $G_k$ induced
by $\{s,t \} \cup \{ \ell_j, \ell_{j-1} , \ldots , \ell_1 \} \cup \{ r_1 , r_2, \ldots
, r_j\}$. This is isomorphic to $G_j$, and its external face
contains $s$, $\ell_j$, $r_j$, and $r_{j-1}$. This corresponds to
the ``simplest case'' described above, and so any straight-line
drawing has area $\Omega ( 4^{2j+1} )$; since $j>\frac{k}{2}$, this
is $\Omega ( 2^k )$.

Next consider the case that $j \leq \frac{k}{2}$. In
Fig.~\ref{fi:exponentialF3Inverted} the $\mathbb{R}^2$-embedding
where $j=3$ for $G_6$ is illustrated.
\begin{figure}[ht!]
  \centering
  \includegraphics[width=.9\columnwidth]{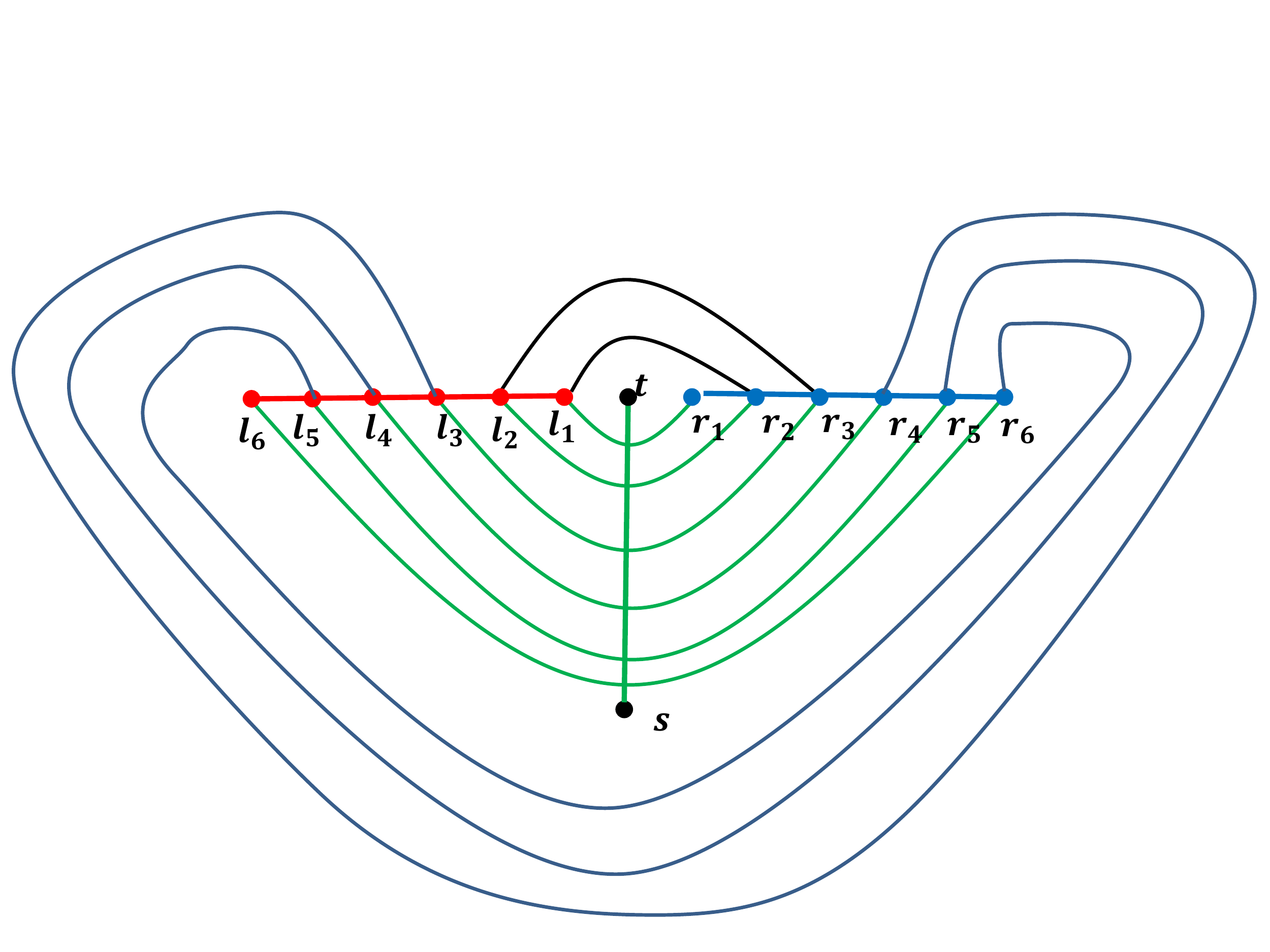}
  \caption{Using face $F_3$ from Fig.~\ref{fi:exponentialFaceF3} as the external face.}
  \label{fi:exponentialF3Inverted}
\end{figure}
In this case, consider the subgraph $H_{j,k}$ of $G_k$ induced by
$\{s,t, \ell_l , \ell_{k-1} , \ldots , \ell_j , r_j , r_{j+1},
\ldots , r_k\}$. The graph $H_{3,6}$ is illustrated in
Fig.~\ref{fi:exponentialH3}.
\begin{figure}[ht!]
  \centering
  \includegraphics[width=.9\columnwidth]{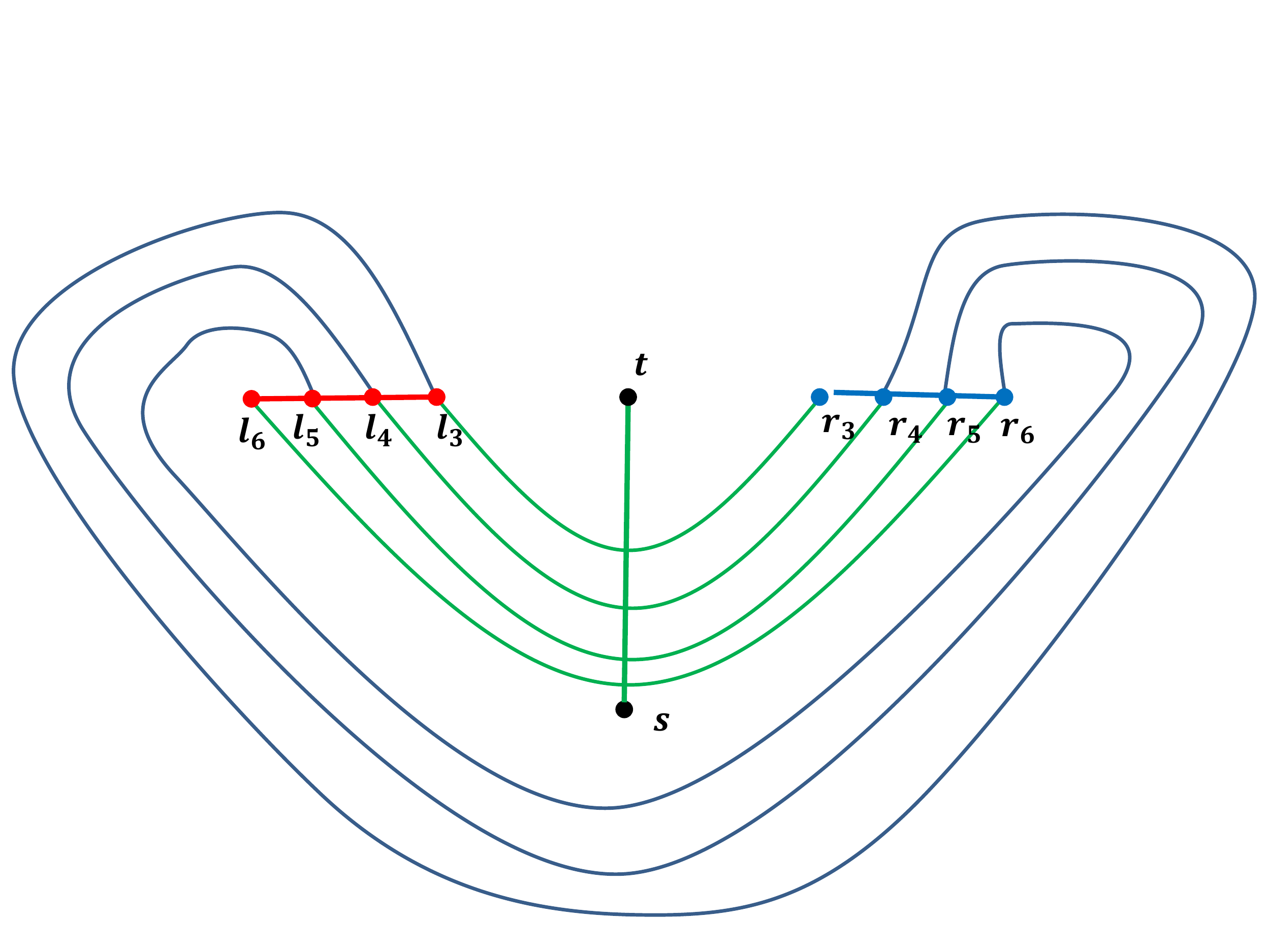}
  \caption{The graph $H_{3,6}$.}
  \label{fi:exponentialH3}
\end{figure}
In fact $H_{j,k}$ is isomorphic to $G_{k-j}$. The graph $H_{3,6}$ is
in Fig.~\ref{fi:exponentialH3} is re-drawn in
Fig.~\ref{fi:exponentialH3redrawn} to illustrate the isomorphism.
\begin{figure}[ht!]
  \centering
  \includegraphics[width=.9\columnwidth]{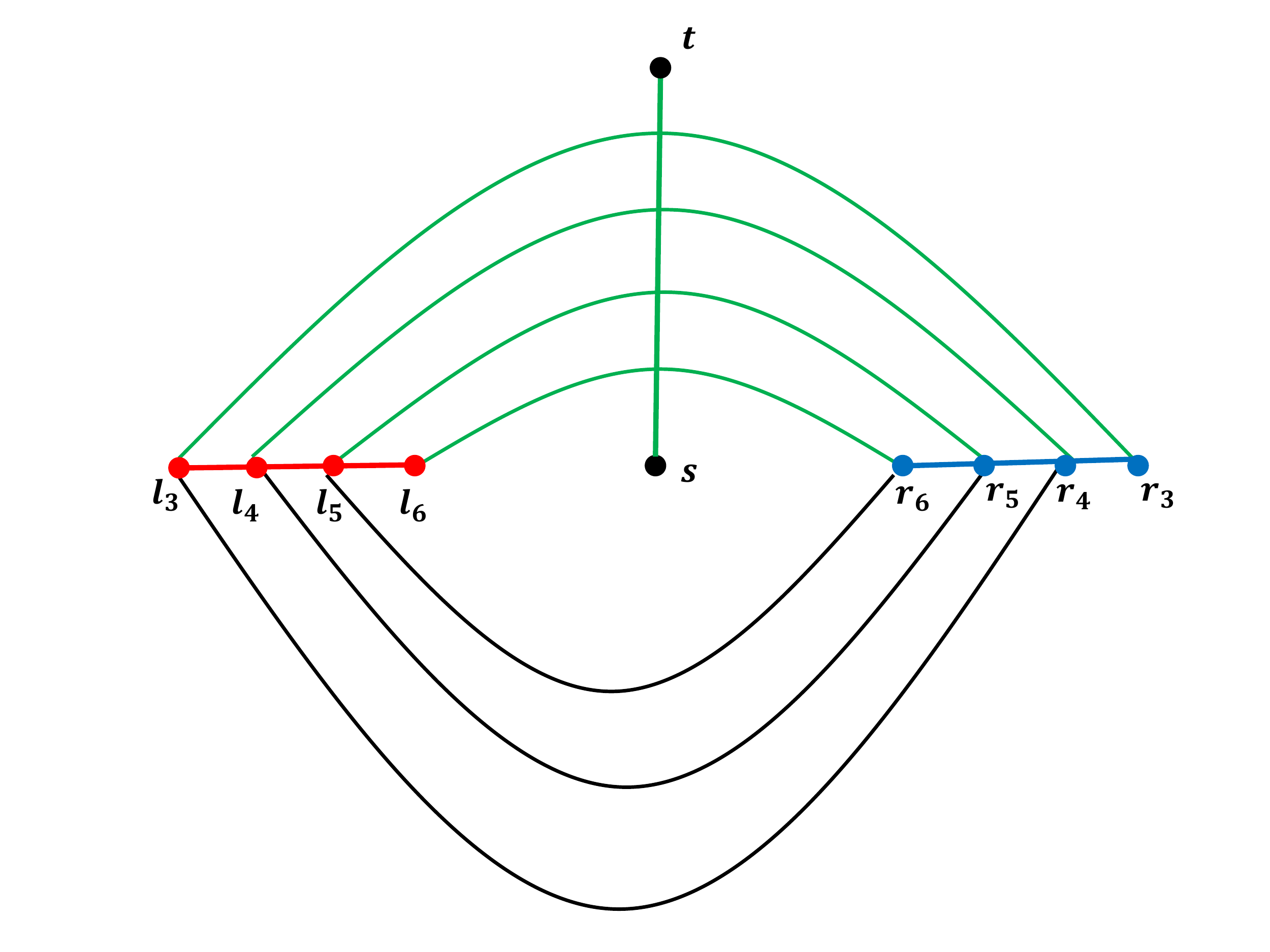}
  \caption{The graph $H_{3,6}$ redrawn to show the isomorphism with $G_3$.}
  \label{fi:exponentialH3redrawn}
\end{figure}
Thus, using the same argument as in the simple case above, we can
show that any straight-line drawing of $H_{j,k}$ has area $\Omega (
4^{2(k-j)+1} )$; since $j \leq \frac{k}{2}$, this is $\Omega ( 2^k
)$.

This completes the proof of Theorem~\ref{th:exponential}.
\end{proof}

We conclude this section by observing that the arguments used to
prove Theorem~\ref{th:s2} lead to a characterization of the maximal
almost-planar $\mathbb{R}^2$-topological graphs that have
$\mathbb{R}^2$-embedding preserving straight-line drawings.

\begin{theorem}\label{th:r2}
A maximal almost-planar $\mathbb{R}^2$-topological graph $G$
 admits an $\mathbb{R}^2$-embedding preserving straight-line
 drawing of $G$ if and only if every vertex
of $G$ is consistent, and every internal face of $G_{LR}$ is
consistent.
\end{theorem}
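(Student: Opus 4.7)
The plan is to derive both directions from the machinery already developed for Theorem~\ref{th:s2}, adding only one new observation: the fixed external face of the $\mathbb{R}^2$-embedding must coincide with the unique inconsistent face of $G_{LR}$. Suppose first that $G$ admits an $\mathbb{R}^2$-embedding preserving straight-line drawing $\Gamma$. Any such $\Gamma$ also preserves the underlying $\mathbb{S}^2$-embedding, so by Theorem~\ref{th:s2} every vertex of $G$ is consistent; then Lemma~\ref{le:oneFaceIsInconsistent} gives a unique inconsistent face $f^*$ of $G_{LR}$. The real task in the necessity direction is to show that $f^*$ is in fact the external face of the given $\mathbb{R}^2$-embedding, i.e.\ that every internal face of $G_{LR}$ is consistent.

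For this step I will exploit the rigid geometry of straight-line drawings. Let $\ell$ be the line supporting the segment $(s,t)$ in $\Gamma$. For each crossing edge $e_i=(\ell_i,r_i)$, the straight segment representing $e_i$ crosses $(s,t)$, so $\ell_i$ and $r_i$ lie in opposite open half-planes of $\ell$; hence every left vertex lies strictly on one side of $\ell$ and every right vertex on the other. Now assume for contradiction that $f^*$ is an internal face. By Lemma~\ref{le:at-least-one-inconsistent-face} its boundary is a simple cycle, and by Lemma~\ref{le:cycle} that cycle contains $s$ and $t$ together with at least one left and one right vertex. The two faces of $G_{LR}$ incident to the edge $(s,t)$ are bounded by the left cycle and the right cycle, so they are consistent; therefore $f^*$ is a third face and its boundary does not use the edge $(s,t)$. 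In $\Gamma$ the boundary of $f^*$ is then a simple polygon whose vertices, except for $s$ and $t$, are partitioned across the two sides of $\ell$, so the open segment from $s$ to $t$ lies in the interior of this polygon. But $(s,t)$ is itself an edge of the plane graph $G_{LR}$ and is drawn as exactly that segment, so $\Gamma$ restricted to $G_{LR}$ would have an edge in the interior of one of its own faces, contradicting the planarity of the drawing of $G_{LR}$. Hence $f^*$ is the external face, and every internal face of $G_{LR}$ is consistent.

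For sufficiency, suppose vertex consistency holds and every internal face of $G_{LR}$ is consistent. Then the unique inconsistent face $f^*$ of $G_{LR}$ is already the external face of the prescribed $\mathbb{R}^2$-embedding, which matches exactly the projection chosen in Subsection~\ref{ss:external-face}. The full argument of Section~\ref{se:s2} now applies: augment $G$ using Lemma~\ref{le:augmentation} without disturbing the external face or creating new inconsistent vertices, split the augmented graph along the separating cycle, draw the outer graph convexly as in Subsection~\ref{ss:outerGraph}, and draw the inner graph into the separating polygon as in Subsection~\ref{ss:innerGraph}. The resulting straight-line drawing preserves both the cyclic edge orderings at vertices and crossings and the specified external face, hence the full $\mathbb{R}^2$-embedding. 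I expect the only genuine difficulty to lie in the necessity step above: one must be careful to convert the intuitive picture ``left vertices on one side of $\ell$, right vertices on the other, so the boundary of $f^*$ encloses $(s,t)$'' into a clean planarity contradiction for the plane subgraph $G_{LR}$; the sufficiency direction is essentially a direct invocation of the machinery already built for Theorem~\ref{th:s2}.
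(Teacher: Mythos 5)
Your proposal is correct and takes essentially the same route as the paper: for necessity, the same geometric argument that left and right vertices lie in opposite open half-planes of the line through $s$ and $t$, so the boundary cycle of an inconsistent \emph{internal} face of $G_{LR}$ would have to enclose the segment $(s,t)$, contradicting that a face of $G_{LR}$ contains no edge of $G_{LR}$ (with vertex consistency coming from Theorem~\ref{th:s2}); for sufficiency, the same invocation of the construction of Section~\ref{se:s2} with the unique inconsistent face of $G_{LR}$ as the external face. The only cosmetic deviations are that your augmentation step is vacuous because $G$ is maximal by hypothesis, and your justification that the two faces incident to $(s,t)$ are ``bounded by the left and right cycles'' is loosely stated (the conclusion is true and easy, since any $s$--$t$ path in $\hat{G}_{LR}$ has internal vertices that are all left or all right), neither of which is a genuine gap.
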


\begin{proof} The sufficiency of Theorem~\ref{th:r2} is proved in
Sections~\ref{ss:divide}, \ref{ss:outerGraph}, and
\ref{ss:innerGraph}. We show that the conditions of
Theorem~\ref{th:r2} are also necessary.

Suppose that $G_{LR}$ has an inconsistent face $f$, and has a
straight-line drawing $\Gamma$. Now $f$ contains a cycle $C$ with at
least one left vertex and at least one right vertex. From
Lemma~\ref{le:cycle}, $C$ contains $s$ and $t$, and the edge
$(s,t)$. Further, since $G_{LR}$ has no edge that joins a left
vertex to a right vertex, traversing $\Gamma$ in the clockwise
direction from $t$ gives a polygonal chain $C_L$ of vertices,
followed by $s$, followed by a polygonal chain $C_R$ of right
vertices.  All of $C_L$ is strictly to the left of the line through
$s$ and $t$, and $C_R$ is strictly right of this line. This is only
possible if the line segment between $s$ and $t$ lies inside $C$.
However, the edge $(s,t)$ cannot lie inside $C$ because $f$ is a
face of $G_{LR}$.
\end{proof}

\section{Open Problems}\label{se:open}

We mention two open problems that are naturally suggested by the
research in this paper. The first open problem is about
characterizing those almost-planar $\mathbb{R}^2$-topological graphs
that admit an embedding preserving straight-line drawing.
Theorem~\ref{th:r2} provides such a characterization for the family
of maximal almost-planar graphs.

The second open problem is about extending Theorem~\ref{th:s2} to
$k$-skew graphs with $k>1$. A topological graph $G=(V,E)$ is
$k$-skew if there is a set $E' \subset E$ of edges such that $G^- =
(V, E-E')$ has no crossings where $|E'| \leq k$. Many graphs that
arise in practice are $k$-skew for small values of $k$; this paper
gives drawing algorithms for the case $k=1$. For each edge $e \in
E'$, one could define ``left vertex relative to $e$'' and ``right
vertex relative to $e$'', extending the definitions of left and
right in this paper. However, it is not difficult to find a
topological 2-skew graph in which all vertices are consistent with
respect to the 2 ``crossing'' edges, but do not admit a
straight-line drawing. It would be interesting to characterize
$k$-skew graphs that admit a straight-line drawing for $k>1$.

\bibliographystyle{plain}
\bibliography{archiveX}

\end{document}